\newtheorem{theorem}{\bf Theorem}[section]
\newtheorem{lemma}{\bf Lemma}[section]
\newtheorem{corollary}{\bf Corollary}[section]
\newtheorem{remark}{\bf Remark}[section]
\begin{document}

\title{Dynamics of two moving vortices in the presence of a fixed vortex}


\author{Sreethin Sreedharan K\footnote{sreethin2@gmail.com} and Priyanka Shukla\footnote{priyanka@iitm.ac.in}}

\affiliation{ Department of Mathematics,\\
Indian Institute of Technology Madras, Chennai 600036, India
}
\date{\today}




%
%

\begin{abstract}
The dynamics of a constrained three-vortex system,
a pair of point vortices of arbitrary non-zero circulations in the velocity field of a fixed point vortex, is investigated. 
The underlying dynamical system is simplified using a coordinate transformation and categorized into two cases based on the zero and non-zero values of the constant of angular impulse. 
For each case, dynamical features of the vortex motion are studied analytically in the transformed plane to completely classify the vortex motions 
and understand the boundedness and periodicity of the inter-vortex distances.    
The theoretical predictions are also verified numerically and illustrated for various sets of initial conditions and circulations.
\end{abstract}

\pacs{Valid PACS appear here}
\maketitle
\section{Introduction}

Vortices are a major driving force behind complex fluid evolutions, such as turbulent and transitional flows~\cite{hussain_1986,jeong_hussain_1995}. Studying vortex interactions is thus essential in understanding many fluid flows~\cite{Saffman_Baker_1979}. The simplest vortex model that one can envisage is the point vortex model, which approximates a vortex as a discrete singularity of vorticity in an incompressible two-dimensional ideal fluid~\cite{HH1858,tait1867integrals}. The point vortex assumption removes all analytical complications 
associated with vortices' internal dynamics, thereby enabling one to track the motion of vortex centers efficiently. Analogous to the $n$-body problems in celestial mechanics, understanding and classifying the various trajectories exhibited by a mutually interacting system of finitely many point vortices is central to point vortex theory. 
The point vortex model also has a wide range of applications in physics, e.g.,~ Bose-Einstein condensate and quantum vortices~\cite{GPHS2018}.
For a comprehensive review of the point vortex model and its fluid dynamics applications, the reader may refer to~\cite{AH2007,NP2013}.

%




The motion of two straight parallel vortices constitutes a fundametal problem of interacting vortices~\cite{Saffman_Baker_1979}. Owing to its importance in aviation (e.g. the problem of aircraft trailing wakes~\cite{CJ2005,B2011}) and geophysical fluid dynamics 
(e.g. hetons~\cite{H1985,L1996}), it is widely discussed in the literature~\cite[see][and references therein]{S2014,L2016}. In planar geometry,
a pair of vortices has a limited number of possible motions. Depending on the sum of the two vortex circulations being zero 
or not, it is either an unbounded translation with constant straight-line speed or a circular motion around the center of vorticity with constant angular velocity~\cite{L2016}, {\color{blue}respectively}. 
In both these cases, vortices keep a constant distance between them. One may further introduce an
additional dynamical complexity by adding a fixed point vortex in the vortex pair's vicinity. By a {\it fixed} point vortex we mean here a vortex that remains at a constant location on the plane irrespective of the velocity induced by the other vortices.
In ocean dynamics, a fixed point vortex is a known model  of topography that induces a closed re-circulation zone~\cite{RK2013,RE2014} in its vicinity. Hence a vortex system consisting of both a point vortex pair and a fixed point vortex can be
considered as a basic model for a vortex pair's interaction with a topographically constrained eddy in the ocean. Since the bottom topography can significantly affect the dynamics of ocean vortices, such studies
help to gain more
insights into ocean vortices~\cite{B1997}.

On the other hand, in general, a three-vortex system~\cite{G1877,S1949,N1975,AH1979} has no closed-form solutions, and the presence of more than three point vortices leads to chaotic vortex trajectories (see, e.g.~\cite{A1983,KC1989,NP2013}). 
Imposing some form of constraints to larger vortex systems may provide the necessary analytic simplification without cutting down much on the dynamical richness of the solutions. Fixing some of the vortices to specific locations on the plane is a natural candidate in this regard, since it generalizes the class of important vortex systems that can be equivalently described as {\color{blue}a passive scalar} 
advection problem involving one or more fixed vortices~\cite{KC1985,A1989,AB2016}. For instance, the system of three vortices with zero total circulation can be put into 
{\color{blue}a passive scalar} 
advection problem involving three fixed-collinear point vortices~\cite{A1989}. 
It is also worth noticing that a fixed vortex is different from a vortex at rest as it is not required that the velocities induced by the rest of the vortices at the location of a fixed vortex add up to zero. For example, in the classical two-vortex problem, vortices cannot be at rest when vortex circulations are non-zero. Yet, one could consider the two-vortex system with one of the vortices fixed, wherein the free vortex moves in a circular path around the fixed vortex. Consequently, in general, vortex systems that include fixed vortices don't fall under the classical $n$-vortex systems but instead form a unique class of their own. 

Here, we shall address the planar vortex system consisting of two freely moving point vortices influenced by a fixed point vortex's presence in their vicinity. 
In this context, the recent studies~\cite{RK2013,RE2014,KR2018} on a vortex dipole (equal counter-rotating pair of vortices) encountering a fixed vortex
showed 
that if the constant of angular momentum [c.f.~Eqn.~\ref{cons_ang}] is zero, the vortex dipole always scatters and executes an unbounded motion. On the contrary, if this value is non-zero, then the dipole vortices' motion can be bounded for some initial conditions~\cite{RK2013,KR2018}. The analytic expression for the boundary separating the bounded and unbounded regime was also derived in terms of (i) the ratio of the circulations
of the vortex dipole to that of the fixed vortex and 
(ii) a parameter based on the ratio of the free vortices' initial positions to that of the fixed vortex~\cite{KR2018}. 
A further numerical study~\cite{RE2014} of the scalar transport, using Poincar\'e sections, revealed that the periodic oscillation of an entrapped dipole about the fixed vortex perturbs scalar motion causing a portion of scalar trajectories to manifest chaotic behaviour~\cite{RE2014}. 
In short, Refs.~\cite{RK2013,KR2018,RE2014} indicate that the strength and location of the fixed vortex determines whether or not a vortex dipole gets entrapped to its neighborhood and induce chaotic stirring therein.


The main objective of the present work is to understand the planar vortex system consisting of two freely moving point vortices 
in the presence of a fixed point vortex
using the dynamical system theory perspective without restricting to any specific circulations.
A complete classification of vortex trajectories will be carried out through phase plane analysis. 
Moreover, important insights on the special case of a vortex dipole, as shown in Refs.~\cite{RK2013,KR2018}, will also be reproduced through elegant geometrical arguments using basic dynamical system theory. 
For instance, the existence of a separatrix boundary of vortex entrapment is found to be due to the presence of a saddle equilibrium point in the phase plane. 

This paper is organized as follows. The mathematical formulation of the point vortex system is given in Sec.~\ref{sec:problem_formulation}. In Sec.~\ref{sec:analysis_results}, the model at hand is explored using dynamical system theory, and the results obtained are explained through examples numerically. In particular, the dynamical aspects of two cases, symmetric ($M=0$) and asymmetric ($M\neq 0$), are discussed in Secs~\ref{subsec:Meq0} and~\ref{subsec:Mneq0}, respectively. A few examples for each case are illustrated in Secs~\ref{subsec:Examples_Meq0} and~\ref{subsec:Examples_Mneq0}. 
The derived conclusions are discussed in Sec.~\ref{sec:conclusions}.

\section{Problem formulation}
\label{sec:problem_formulation}

We consider the three-vortex problem in two-dimensional plane $\mathbb{R}^2$ with the additional constraint that one of the point vortices is fixed at some location in the plane. Let $\Gamma_\alpha$ ($\alpha=0,1,2$) be the non-zero circulation of the $\alpha$-vortex $\mathcal{V}_\alpha$ whose coordinate function is $(x_\alpha,y_\alpha)$. Without loss of generality (WLOG), let us assume that the vortex $\mathcal{V}_0$ is fixed in the plane. Consequently, $(x_0, y_0)$ is a constant function of time $t$. For simplicity, we choose the origin to be situated at the fixed vortex. Furthermore, we align and scale the coordinate axes in such a way that the vortex $\mathcal{V}_1$ is initially 
at a unit distance 
from the origin along the positive $x$-axis, as displayed in  schematic diagram~\ref{figure_model}(a). 

\begin{figure}[!th]
\centering
\includegraphics[width=\textwidth]{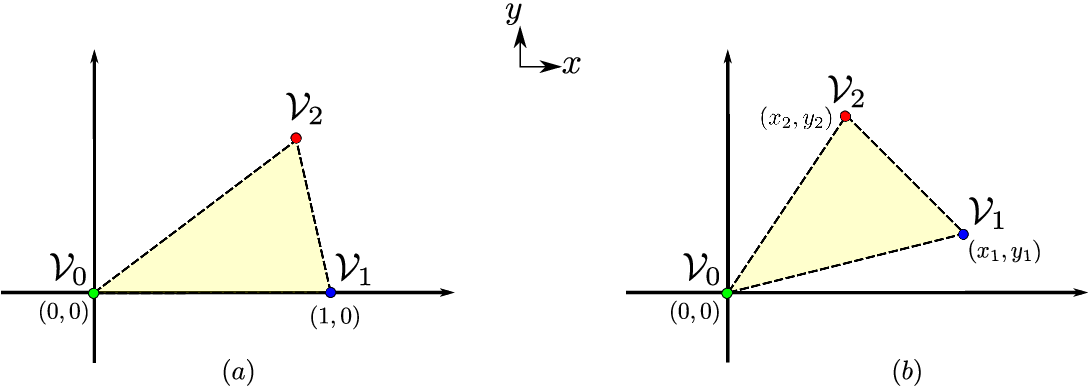}
\caption{\small Schematic showing the three-vortex model in $\mathbb{R}^2$ plane where a vortex $\mathcal{V}_0$ is fixed at $(x_0, y_0)=(0,0)$. 
(a) Initial configuration in which the vortex $\mathcal{V}_1$ is located at $(x_1,y_1)|_{t=0}=(1,0)$, and (b) configuration at time $t>0$ with free vortices $\mathcal{V}_1$ and $\mathcal{V}_2$ being located at $(x_1,y_1)$ and $(x_2,y_2)$, respectively.}
\label{figure_model}
\end{figure} 

For the sake of conciseness, from here onwards, we shall treat an element of the plane $\mathbb{R}^2$ the same as a complex number via the identification map $(x,y)\mapsto x+\mathbbm{i}\, y$. 
Each free vortex experiences the sum of the velocity field induced by the other two vortices, which gives us the following coupled system of non-linear differential equations
\begin{align}
\begin{aligned}
\dot{z_1}&= \frac{\mathbbm{i}\Gamma_2}{2\pi}\frac{z_1-z_2}{|z_1-z_2|^2}+\frac{\mathbbm{i}\Gamma_0}{2\pi}\frac{z_1}{|z_1|^2},\\
\dot{z_2}&= \frac{\mathbbm{i}\Gamma_1}{2\pi}\frac{z_2-z_1}{|z_1-z_2|^2}+\frac{\mathbbm{i}\Gamma_0}{2\pi}\frac{z_2}{|z_2|^2},
\end{aligned}
\label{z0_z1_z2_dot}
\end{align}
where $z_\alpha(t)=x_\alpha(t)+ \mathbbm{i}\, y_\alpha(t)\equiv \left(x_\alpha(t),y_\alpha(t)\right)$
is the corresponding coordinate of the vortex $\mathcal{V}_\alpha$ in the complex plane with $\mathbbm{i}$ being the imaginary unit,
%
and dot over a quantity represents its time derivative. Observe that, 
if $|z_1|=0$, $|z_2|=0$ or $|z_2-z_1|=0$, system~\eqref{z0_z1_z2_dot} is undefined. In other words, the point vortex setting fails to explain the evolution of vortices once vortex collisions are encountered during the motion. Therefore, it is necessary to assume at least initially ($t=0$) that the inter-vortex distances $r_{1}=|z_1|$, $r_2=|z_2|$, and $r_{12}=|z_2-z_1|$ are non-zero. 
Note that $r_1|_{t=0}=1$ due to the choice of coordinate axes.

To have a better geometrical understanding of vortex evolutions, we shall use polar coordinates. For $\alpha\in \{1,2\}$, let $z_\alpha(t)=r_\alpha(t)e^{i\theta_\alpha(t)}$ with $r_\alpha(t)$ and $\theta_\alpha(t)$ being the modulus and argument of $z_\alpha(t)$, respectively. 
The polar variables $(r_\alpha, \theta_\alpha)$ are related to each other by the cosine rule
\begin{equation}
r_{12}^2=r_1^2+r_2^2-2r_1r_2\cos \theta, \quad \mbox{where}\quad \theta =\theta_2-\theta_1.
\label{cosine_rule}
\end{equation}
Following a similar derivation as in Ref.~\cite{S1949}, we obtain the differential equations for inter-vortex distances
\begin{align}
r_1\dot{r_1}&=\Gamma_2 Ar_{12}^{-2}/\pi
\label{eq_R2},\\ 
r_2\dot{r_2}&=-\Gamma_1Ar_{12}^{-2}/\pi
\label{eq_R1},\\
r_{12}\dot{r_{12}}&=\Gamma_0 A(r_2^{-2}-r_1^{-2})/\pi\label{eq_R01},
\end{align}
where $A=1/2\,r_1r_2\sin\theta$ is the area of the triangle obtained by joining the three vortices $\mathcal{V}_0$, $\mathcal{V}_1$ and $\mathcal{V}_2$ with circulations $\Gamma_0$, $\Gamma_1$ and $\Gamma_2$. 
It follows from~\eqref{eq_R2}--\eqref{eq_R01} that there are two finite constants of the motion
\begin{align}
M&=\Gamma_1 r_1^2+\Gamma_2 r_2^2,
\label{cons_ang}\\
 H&=-\frac{1}{2\pi}\left(\Gamma_1\Gamma_0\log r_1+\Gamma_2\Gamma_0\log r_2+\Gamma_1\Gamma_2\log r_{12}\right),
 \label{cons_energy}
\end{align}
which arise from the conservation of angular impulse and the conservation of interaction energy of the vortex system, respectively. Indeed $H$ is nothing but the Hamiltonian of system~\eqref{z0_z1_z2_dot} satisfying Hamiltonian equations of motion for $x_\alpha$ and $y_\alpha$,
\begin{equation}
\Gamma_\alpha \dot{x}_\alpha=\frac{\partial H}{\partial y_\alpha}\quad \mbox{and}\quad \Gamma_\alpha \dot{y}_\alpha=-\frac{\partial H}{\partial x_\alpha}.
\label{eqn:canonical_EQN_Hamiltonian}
\end{equation}

Unlike the classical counterpart~\cite{G1877,S1949,N1975,AH1979}, the present problem of constrained three vortices (one fixed vortex and two free vortices) lacks the conservation of linear impulse, and the corresponding barycenter (a constant center of vorticity) symmetry associated with it. Although fewer constants of motion generally indicate non-integrability, only the variables $z_1$ and $z_2$ evolve temporally. 
Since the conserved quantities $M$ and $H$ are functionally independent, and also Poisson commute, the Hamiltonian system~\eqref{z0_z1_z2_dot} is completely integrable. 
Given that the vortex system~\eqref{z0_z1_z2_dot} 
has only regular dynamics, we may now proceed with 
understanding 
the vortex trajectories for different initial conditions.

\section{Constrained three-vortex problem}
\label{sec:analysis_results}


Motivated by the limiting case of restricted three-vortex problem wherein $\mathcal{V}_0$ is fixed at the origin and $\Gamma_2=0$~\cite{SP2019}, we define the coordinate transformation $z\mapsto z/z_1$ 
to obtain a new set of variables $\eta_0,\, \eta_1$, and $\eta_2$ defined by
\begin{equation}
\eta_0(t)=\frac{z_0}{z_1}\equiv (0,0),\quad \eta_1(t)=\frac{z_1}{z_1}\equiv (1,0) ,\quad \eta_2(t)=\frac{z_2}{z_1}.
\label{etaeqn}
\end{equation}
The transformed coordinates quantifies the relative motion of vortices with respect to the free vortex $\mathcal{V}_1$ in the complex plane. 
 There are advantages of using transformed coordinate system~\eqref{etaeqn}. Firstly, we only need to keep track of a single variable $\eta_2(t)=u(t)+\mathbbm{i}\,v(t)\equiv\left(u(t),v(t)\right)$ associated with the vortex $\mathcal{V}_2$. Secondly, it is  easy to characterize self-similar evolutions as they correspond to the equilibrium points there (see, lemma~\ref{lemma_self_similarity}). 
Note that a vortex evolution is called self-similar evolution if the vortex coordinates satisfy $z_\alpha(t)=\lambda_\alpha f(t)$ with 
$f$ and $\lambda_\alpha$'s being 
an arbitrary complex function and complex constants, respectively~\cite{NS1979,K1987,TT1988,LK2000,H2007,H2008,
AH2010,G2016,KS2018}.
We will analyze various vortex trajectories by looking at the relative motion of vortex $\mathcal{V}_2$.
In order to do so, we write expressions of squared inter-vortex distances in terms of $u$ and $v$ using cosine rule~\eqref{cosine_rule} and definition~\eqref{etaeqn} \begin{equation}
\label{r2byr1_r12byr1}
r_2^2=r_1^2\,\left(u^2+v^2\right)\quad\mbox{and}\quad
r_{12}^2=r_1^2\left((u-1)^2+v^2\right).
\end{equation}
From~\eqref{eq_R2},~\eqref{eq_R1} and~\eqref{r2byr1_r12byr1}, we can write the differential equations for inter-vortex distances as
%
\begin{align}
\label{r1_sqr_dot_uv_eqn}
\dot{r_1^2}&=\frac{\Gamma_2v}{\pi\left((u-1)^2+v^2\right)},\\
\label{r2_sqr_dot_uv_eqn}
\dot{r_2^2}&=\frac{-\Gamma_1v}{\pi\left((u-1)^2+v^2\right)}.
\end{align}

 Note that if $v(t)\neq 0$ then $r_1(t)$ (similarly $r_2$) is either strictly increasing or decreasing at time $t$, with extrema existing only when a $(u,v)$ trajectory intersects the $u$-axis. Consequently, a trajectory in $(u,v)$ phase plane, which is bounded away from the $u$-axis, corresponds to either an unbounded vortex motion, or a vortex collapse situation.
Before we move on to characterize the entire $(u,v)$ phase plane trajectories, let us first look at the simplest one of them all; the equilibrium points. 

\begin{lemma}
\label{lemma_self_similarity}
 The constrained three-vortex system~\eqref{z0_z1_z2_dot} evolves self-similarly if and only if the corresponding trajectory in the $(u,v)$ phase plane is an equilibrium  solution, i.e., $\left(u(t),v(t)\right)=(u,v)|_{t=0}$.
\begin{proof}
If the vortex system evolves in a self-similar way, then recall that, there exists a complex valued function $f$ and complex constants $\lambda_\alpha$ ($\alpha=0,1,2$) such that $z_\alpha(t)=\lambda_\alpha f(t)$. 
WLOG, one may assume $f(0)=1$. Hence from the assumptions about the initial conditions we have $\lambda_0=0$, $\lambda_1=1$, and 
$\lambda_2\neq0$ (see,~figure~\ref{figure_model}(a)). Since $z_1(t)\neq 0$ for $t\in[0,t^*)$ and $t^*>0$, $f(t)\neq 0$ as long as the three-vortex problem  is defined. Therefore, $\eta_2=z_2/z_1=\lambda_2$, which is an equilibrium solution in the $(u,v)$ phase plane. The same lines of arguments, if retraced back, give the proof for the converse part.
 \end{proof}
\end{lemma}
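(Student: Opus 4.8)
The plan is to prove the equivalence by establishing both directions directly from the definitions, using the coordinate transformation $\eta_2 = z_2/z_1$ as the central object. The key observation is that self-similarity is defined by the existence of a single complex function $f(t)$ and fixed complex constants $\lambda_\alpha$ such that $z_\alpha(t) = \lambda_\alpha f(t)$ for each vortex, so the \emph{ratio} $z_2/z_1$ is forced to be constant in time. Since $\eta_2$ is precisely this ratio, the self-similar condition should collapse immediately into the statement that $\eta_2$ is time-independent, which is exactly what it means to be an equilibrium solution in the $(u,v)$ phase plane.

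First I would handle the forward direction. Assuming the system evolves self-similarly, I would invoke the definition to write $z_\alpha(t) = \lambda_\alpha f(t)$, normalizing $f(0) = 1$ without loss of generality. Then I would pin down the constants using the prescribed initial conditions from Section~\ref{sec:problem_formulation}: since $z_0 \equiv 0$ we get $\lambda_0 = 0$, since $z_1(0) = 1$ we get $\lambda_1 = 1$, and $\lambda_2$ is some nonzero complex number (nonzero because $r_2(0) \neq 0$). The crucial technical point is that $f(t) \neq 0$ on the interval of existence $[0, t^*)$: this follows because $z_1(t) = \lambda_1 f(t) = f(t)$ and we know $z_1(t) \neq 0$ as long as the three-vortex problem is defined. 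With $f$ nonvanishing, the ratio $\eta_2(t) = z_2(t)/z_1(t) = \lambda_2 f(t)/f(t) = \lambda_2$ is constant, hence an equilibrium in the phase plane.

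For the converse, I would simply retrace these steps: if $\eta_2(t) \equiv \lambda_2$ is constant, then $z_2(t) = \lambda_2 z_1(t)$, and setting $f(t) = z_1(t)$, $\lambda_0 = 0$, $\lambda_1 = 1$ exhibits all three vortex positions as scalar multiples of the single function $f$, which is the definition of self-similar evolution. The argument is essentially symmetric, so the reverse implication costs little beyond bookkeeping.

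I do not expect a serious obstacle here, since the transformation was engineered precisely so that self-similarity becomes transparent; the main subtlety worth stating carefully is the nonvanishing of $f$ on $[0, t^*)$, which is what guarantees the ratio $\eta_2$ stays well defined throughout the motion and legitimizes dividing by $f(t)$. If I wanted to be fully rigorous I would also note that equilibrium of $\eta_2$ in the $(u,v)$ plane is equivalent to $\dot u = \dot v = 0$ for all $t$, and tie this back to the constancy of $\lambda_2 = u + \mathbbm{i}\,v$; but given the clean algebraic structure, the proof reduces to the observation that a common scaling factor cancels in the ratio, and little more is needed.
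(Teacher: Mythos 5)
Your proposal is correct and follows essentially the same route as the paper's proof: invoking the definition $z_\alpha(t)=\lambda_\alpha f(t)$, normalizing $f(0)=1$, fixing $\lambda_0=0$, $\lambda_1=1$, $\lambda_2\neq 0$ from the initial conditions, using the nonvanishing of $f$ on the interval of existence to conclude $\eta_2\equiv\lambda_2$, and reversing the argument for the converse. Your explicit choice $f(t)=z_1(t)$ in the converse direction merely spells out what the paper compresses into ``retracing the argument,'' so there is no substantive difference.
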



Since the ratio of the inter-vortex distances remains constant in a self-similar evolution, the shape of the vortex triangle at any instant of time remains the same. A vortex trajectory in which both the size and shape of the vortices remain intact, and the vortex system move as a rigid body, is called a fixed configuration (also known as a vortex equilibrium). These solutions, therefore, must satisfy $\dot{r_1}=\dot{r_2}=\dot{r_{12}}=0$. 
Note that a fixed configuration may translate or rotate and since one of the vortices is fixed in the plane we cannot have translating fixed configurations.
It follows that fixed configurations also correspond to equilibrium points in the $(u,v)$ phase plane, but unlike the self-similar solutions they lie solely on the $u$-axis, as shown next. 
 
\begin{lemma}
\label{lemma_fixed_configuration}
The constrained three-vortex system~\eqref{z0_z1_z2_dot} is in a fixed configuration if and only if the corresponding trajectory in the $(u,v)$ phase plane is an equilibrium solution on the $u$-axis.
\begin{proof}
From~\eqref{eq_R2}--\eqref{eq_R01}, we see that any fixed configurations must have the area $A$
of the vortex triangle to be zero, i.e., it must be a collinear configuration. 
For a collinear configuration to remain collinear for all time, we also require  $\dot{A}=0$. Hence, $A=0$ and $\dot{A}=0$ are the necessary and sufficient conditions that the vortex system must satisfy in order 
to be in a fixed configuration. 
These two conditions in $(u,v)$ phase plane are written as 
$A=1/2\, r_1^2v=0$ 
and $\dot{A}=r_1\dot{r_1}v+ 1/2\, \dot{v} r_1^2=1/2\, \dot{v}r_1^2=0$, and   
therefore, $v=0$ and $\dot{v}=0$ are the corresponding necessary and sufficient conditions for fixed configuration in the $(u,v)$ phase plane. 
It turns out that $v=0$ implies $\dot{u}=0$, and hence the proof [see~\eqref{Meq0_uv} and~\eqref{Mneq0_udot_vdot}]. 
\end{proof}
\end{lemma}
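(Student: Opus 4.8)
The plan is to reduce the statement to a pair of scalar conditions on the signed area $A$ of the vortex triangle $(\mathcal{V}_0,\mathcal{V}_1,\mathcal{V}_2)$ and then transcribe those conditions into the $(u,v)$ chart. First I would fix what ``fixed configuration'' means: the triangle keeps not only its shape but its actual side lengths, so that the inter-vortex distances $r_1,r_2,r_{12}$ stay constant in time. The first milestone is to show that such a configuration is necessarily collinear, i.e.~$A=0$. For this I would use the velocity decomposition along the sides of the triangle from appendix~\ref{sec:synge_derivation} [equations~\eqref{eq_R0}--\eqref{eq_R2}]: the rates $\dot{r_1^2},\dot{r_2^2}$ are each proportional to $A$, because the self-interaction term carrying $\Gamma_0$, being of the form $\mathbbm{i}\Gamma_0 z_1/(2\pi r_1^2)$, is purely azimuthal and cannot change any $r_\alpha$, while the mutual-interaction contribution is weighted by the (signed) area. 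Hence $\dot{r_1^2}=\dot{r_2^2}=0$ forces $A=0$. Collinearity at a single instant is not enough, though; to keep the configuration fixed for all $t$ I must also demand that it stay collinear, i.e.~$\dot A=0$. This yields the clean intermediate characterisation that a fixed configuration is equivalent to $A=\dot A=0$.

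Next I would push these two conditions through the coordinate change $\eta_2=u+\mathbbm{i}\,v$. Writing the area as $A=\tfrac12 r_1 r_2\sin(\theta_2-\theta_1)$ and using~\eqref{r2byr1_r12byr1} together with the identity $r_1 r_2\sin(\theta_2-\theta_1)=r_1^2 v$ (which follows from $v=\mathrm{Im}\,\eta_2=(r_2/r_1)\sin(\theta_2-\theta_1)$), I get $A=\tfrac12 r_1^2 v$. Since $r_1>0$ as long as the problem is defined, $A=0$ is equivalent to $v=0$. Differentiating gives $\dot A=\tfrac12\bigl(\dot{r_1^2}\,v+r_1^2\dot v\bigr)$; restricting to the collinear set $v=0$ annihilates the first term and leaves $\dot A=\tfrac12 r_1^2\dot v$, so on $\{v=0\}$ the condition $\dot A=0$ is equivalent to $\dot v=0$. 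Therefore $A=\dot A=0$ translates exactly into $v=0$ and $\dot v=0$ in the phase plane.

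Finally I would close the gap between ``$v=0,\ \dot v=0$'' and ``equilibrium on the $u$-axis'', for which I still need $\dot u=0$. Here I would read off the explicit phase-plane velocity field derived for the two cases in Secs.~\ref{subsec:Meq0} and~\ref{subsec:Mneq0} [equations~\eqref{Meq0_uv} and~\eqref{Mneq0_udot_vdot}]: in both cases the $u$-component of the velocity carries an overall factor of $v$, so $v=0$ already forces $\dot u=0$. Consequently $\{v=0,\ \dot v=0\}$ is precisely the set of equilibria of the reduced planar system that lie on the $u$-axis, and retracing the equivalences in both directions yields the stated if-and-only-if.

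The step I expect to be the real obstacle is the first one, namely showing that a fixed configuration must be collinear. This is where the dynamics genuinely enters and where one has to be careful both about the meaning of ``fixed'' (fixed side lengths versus literally stationary vortices) and about the special structure of the constrained problem, namely that the fixed vortex at the origin contributes only an azimuthal velocity and hence cannot alter $r_1$ or $r_2$. Once the area characterisation $A=\dot A=0$ is in hand, the remaining two paragraphs are essentially bookkeeping in the $(u,v)$ chart, the only subtlety being the repeated use of $r_1>0$ to legitimately identify the zero sets.
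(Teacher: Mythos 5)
Your proposal is correct and follows essentially the same route as the paper's own proof: both use the appendix relations \eqref{eq_R0}--\eqref{eq_R2} to reduce ``fixed configuration'' to the condition $A=\dot{A}=0$, transcribe this into the phase plane as $v=0=\dot{v}$ via $A=\tfrac12 r_1^2 v$, and close by noting that $v=0$ already forces $\dot{u}=0$ in \eqref{Meq0_uv} and \eqref{Mneq0_udot_vdot}. The only difference is expository: you make explicit the azimuthal nature of the $\Gamma_0$ contribution and the identity $v=(r_2/r_1)\sin(\theta_2-\theta_1)$, which the paper leaves implicit.
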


In the classical three-vortex problem where none of the vortices are fixed, fixed configurations are either collinear types or equilateral triangles. 
It is interesting to note that once there is a fixed vortex in the three-vortex system, all fixed configurations are collinear. Depending on the zero and non-zero values of the angular-impulse constant $M$, 
the constrained three-vortex problem
is divided into symmetric and asymmetric cases, respectively.

%

\subsection{Symmetric case (\texorpdfstring{$M=0$}{M=0})}
\label{subsec:Meq0}

It follows from~\eqref{cons_ang} that
$M=0$ takes place only if the circulations $\Gamma_1$ and $\Gamma_2$ of vortices $\mathcal{V}_1$ and  $\mathcal{V}_2$ have opposite signs,  i.e., $\Gamma_1\Gamma_2<0$. 
In addition, $M=0$ assumption gives the following relation
\begin{equation}\label{Meq0_eqn_of_circle}
|\eta_2|^2=u^2+v^2=
\frac{|z_2|^2}{|z_1|^2}=\left(\frac{r_2}{r_1}\right)^2=
\kappa^2,
\end{equation}
where $\kappa=\sqrt{-\Gamma_1/\Gamma_2}>0$ is a constant. 
Thus, we have
\begin{equation}
\eta_2=|\eta_2|e^{i\theta}=\kappa e^{i\theta}
\label{Meq0_uv}
\end{equation}
where $\theta=\theta_2-\theta_1$.
Consequently, all the trajectories in $(u,v)$ phase plane must be on the circle of radius $\kappa$ centered at the origin. 

It is now enough to look at the dynamics of $\theta$ in $(u,v)$ phase plane to understand the qualitative behaviour of these trajectories, and the corresponding 
vortex motion in the physical ($x,y$) plane.
Using~\eqref{cosine_rule} and~\eqref{Meq0_eqn_of_circle}, we have
\begin{equation}
r_2^2=\kappa^2\,r_1^2\quad \mbox{and}\quad r_{12}^2 =  r_1^2\,(1+\kappa^2-2\kappa\cos\theta).
\label{eqn:r2_r12}
\end{equation}
%
Substituting~\eqref{eqn:r2_r12}
into~\eqref{cons_energy}, and rearranging the resulting expression, yields 
\begin{equation}
\label{Meq0_eqn_logr1}
\Gamma\log r_1^2=\tilde{H}-\Gamma_1\Gamma_2\log\left(1+\kappa^2-2\kappa\cos\theta\right),
\end{equation}
where $\Gamma= \Gamma_0\Gamma_1+\Gamma_0\Gamma_2+\Gamma_1\Gamma_2$, and $\tilde{H}=-(4\pi H+\Gamma_0 \Gamma_2 \log \kappa^2)$ are finite constant that can be determined from the initial conditions.
If $\Gamma\neq0$,
\eqref{Meq0_eqn_logr1} simplifies to 
\begin{equation}
\label{Meq0_eqn_r1}
r_1^2=\frac{E_0}{(1+\kappa^2-2\kappa\cos\theta)^\gamma},\quad
\mbox{where}\quad \gamma=\Gamma_1\Gamma_2/\Gamma\quad \mbox{and}\quad E_0=e^{\tilde{H}/\Gamma}.
\end{equation}
Transforming Hamiltonian canonical equations~\eqref{eqn:canonical_EQN_Hamiltonian} into polar coordinates by substituting $x_\alpha=r_\alpha \cos\theta_\alpha$ and $y_\alpha = r_\alpha \sin\theta_\alpha$ for $\alpha=1,2$, and applying chain rule,
we get
\begin{align}
\dot{\theta_1}&= - \frac{1}{\Gamma_1r_1}\frac{\partial {H}}{\partial r_1}\quad\mbox{and}\quad
\dot{\theta_2}= - \frac{1}{\Gamma_2r_2}\frac{\partial {H}}{\partial r_2}.
\label{eqn:thetadot_1_2}
\end{align}
Evaluating ~\eqref{eqn:thetadot_1_2} and using~\eqref{cosine_rule}, we obtain
\begin{align}
2\pi \Gamma_1r_1\dot{\theta_1}&=\frac{\Gamma_0\Gamma_1}{r_1}+\frac{\Gamma_1\Gamma_2}{r_{12}^2}\left(r_1-r_2\cos\theta\right),\label{Meq0_theta1dot}\\
2\pi \Gamma_2r_2\dot{\theta_2}&=\frac{\Gamma_0\Gamma_2}{r_2}+\frac{\Gamma_1\Gamma_2}{r_{12}^2}\left(r_2-r_1\cos\theta\right)\label{Meq0_theta2dot}.
\end{align}

Adding $r_1$ times~\eqref{Meq0_theta1dot} and $r_2$ times~\eqref{Meq0_theta2dot}, and simplifying the resultant equation by using the fact that $M=\Gamma_1r_1^2+\Gamma_2r_2^2=0$, yields
\begin{equation}\label{Meq0_eqn_theta_dot}
\dot{\theta}=\frac{-\Gamma}{2\pi\Gamma_1r_1^2}.
\end{equation}
Equation~\eqref{Meq0_eqn_theta_dot} dictates that $\theta$ is a constant when $\Gamma_0\Gamma_1+\Gamma_0\Gamma_2+\Gamma_1\Gamma_2:=\Gamma=0$, and for $\Gamma\neq 0$, it is a strictly increasing or  decreasing function of time.
Recall that if $\Gamma \neq 0$,~
$r_1^2$ is given by \eqref{Meq0_eqn_r1}, which further reduces~\eqref{Meq0_eqn_theta_dot} to 
the following  evolution equation
\begin{equation}
\label{Meq0_eqn_theta_dot2}
\dot{\theta}=
\frac{-\Gamma(1+\kappa^2-2\kappa\cos\theta)^\gamma}{2\pi\Gamma_1E_0}.
\end{equation}
By utilizing~\eqref{Meq0_eqn_logr1} and~\eqref{Meq0_eqn_theta_dot}, we can classify 
all the trajectories in the $(u,v)$ phase plane into three different classes: (i) self-similar evolution, (ii) unbounded dipole motion, and (iii) bounded periodic motion, which are explained below.


\subsubsection{Self-similar evolutions
(\texorpdfstring{$\Gamma=0$}{Gamma=0})}

If $\Gamma=0$, then from \eqref{Meq0_eqn_theta_dot2} it follows that $\dot{\theta}\equiv 0$. Hence the angle $\theta$ between the two vortices $\mathcal{V}_1$ and $\mathcal{V}_2$ remains constant. This means that the vortex triangles 
at any two instances of time are similar
, which leads to the case of self-similar evolution. 
In this case, the trajectory in the $(u,v)$ phase plane is an equilibrium  solution given by $\eta_2(t)=u_0+i\,v_0=\kappa e^{i\,\theta_0}$, where $\theta_0=\theta|_{t=0}$. Since all equilibrium solutions in the $(u,v)$ phase plane correspond to the self-similar evolutions in the physical plane (see lemma~\ref{lemma_self_similarity}) when $\Gamma=0$; irrespective of the initial conditions, the motion of the vortex system becomes self-similar in nature. 

From~\eqref{r1_sqr_dot_uv_eqn}, we get $\dot{r_1^2}=C$, where $C=\Gamma_2v_0/\left(\pi\left((u_0-1)^2+v_0^2\right)\right)$ is a constant.  Integrating $\dot{r_1^2}=C$ with respect to time yields
\begin{equation}
\label{Meq0_self_similar_r1_r2_equation}
r_1(t)=\sqrt{1+Ct} \quad \text{and }\quad r_2(t)=\kappa\sqrt{1+Ct},\quad t>0.
\end{equation} 
Depending on the sign of $C$, determined by the the initial conditions, there are three possible scenarios 
%
\begin{enumerate}[label=(\roman*)]
\item
{Self-similar collapse ($C<0$):} 
The vortices $\mathcal{V}_1$, $\mathcal{V}_2$ move towards the fixed vortex $\mathcal{V}_0$, and precisely at time $t^*=-1/C>0$ they collide on it. This special kind of motion is called a self-similar collapse of the vortices. After the collision, the point vortex model breaks down, and no further analysis is possible.


\item
{Self-similar expansion $(C>0)$:} 
The vortices $\mathcal{V}_1$, $\mathcal{V}_2$ move further and further away from the fixed vortex, and hence the motion becomes unbounded as $t$ tends to infinity. 


\item
{Fixed collinear configuration ($C=0$):} 
The initial configuration is collinear, i.e., $v_0=0$. Since $\Gamma=0$ results in an  equilibrium solution in $(u,v)$ phase plane, we have $v(t)=v_0\equiv 0$, i.e., its an equilibrium solution on the $u$-axis. 
Hence from lemma~\ref{lemma_fixed_configuration}, this is a case of fixed collinear configuration.
In short, the vortices $\mathcal{V}_1$, $\mathcal{V}_2$  evolve in a circular fashion around the fixed vortex $\mathcal{V}_0$ with constant radii preserving the initial collinearity.
\end{enumerate}

\subsubsection{Unbounded dipole motion (\texorpdfstring{$\kappa=1$}{k=1}): Equal counter-rotating pair}
\label{subsebsec:Meq0_keq1}

If $\kappa=1$ then $\Gamma=\Gamma_1\Gamma_2\neq0$ and $\gamma=1$, which simplifies~\eqref{Meq0_eqn_theta_dot2} to $\dot{\theta}=\sigma (1-\cos\theta)$,
where $\sigma=4\Gamma_2/E_0$ is a non-zero constant. Integrating 
$\dot{\theta}$ 
and applying the initial condition $\theta_0=\theta|_{t=0}$, we arrive at 
\begin{equation*}
\cot(\theta_0/2)-\cot(\theta/2)=\sigma t.
\end{equation*} 
Hence, in the limit when $t$ tends to infinity, the angle $\theta$ tends to zero. From~\eqref{Meq0_eqn_r1} we see that $r_1$ becomes unbounded when $\theta\to 0$, and thereby leading to an unbounded motion for both the vortices $\mathcal{V}_1$ and $\mathcal{V}_2$
[note that for $\kappa=1$, $(u-1)^2+v^2=1+\kappa^2-2\kappa\cos\theta \to 0$ as $\theta \to 0$].
Therefore, a trajectory in the $(u,v)$ phase plane is a circular arc that asymptotically approaches the singularity point $(1,0)$.

Irrespective of the initial position of $\eta_2$ on the unit circle, the equal counter-rotating pair will always lead an unbounded motion. Although unbounded, in contrast to self-similar expansion, 
the distance between the free vortices does not increase with time. In fact $r_{12}$ remains a constant throughout the motion of vortices, which can be seen from~\eqref{cons_energy} by using the fact that $r_2=\kappa r_1=r_1$.
%
%
%

\subsubsection{Bounded periodic motions (\texorpdfstring{$\kappa \neq 1$, $\Gamma\neq0$}{k neq 1, gamma neq 0})}

If the right-hand side term of \eqref{Meq0_eqn_logr1}  is bounded, then $r_1$ is bounded. Note that the right-hand side term of~\eqref{Meq0_eqn_logr1} is unbounded only when $1+\kappa^2-2\kappa\cos\theta$ tends to zero. Since all the $(u,v)$ phase plane trajectories lie on a circle of radius $\kappa$, we have $|u|=|\kappa\cos\theta|\leq \kappa$, which implies
\begin{equation*}
1+\kappa^2-2u\geq 1+\kappa^2-2\kappa=(1-\kappa)^2.
\end{equation*}
Hence $1+\kappa^2-2u=0 \iff \kappa=1=u$. As we assume $\kappa\neq 1$, $1+\kappa^2-2u$ is never zero, and hence, $r_1$ is bounded on both sides. Moreover, as the sign of $\dot{\theta}$ remains unchanged from~\eqref{Meq0_eqn_theta_dot}, the trajectory in the $(u, v)$ phase plane must be a full circle. Furthermore, a closed trajectory implies periodicity in the $\theta$ variable, and therefore, periodicity in the inter-vortex distances $r_1$, $r_2$ and $r_{12}$.

\subsection{Examples for \texorpdfstring{$M=0$}{M=0} case}
\label{subsec:Examples_Meq0}

In this section, we shall illustrate 
different kinds of vortex trajectories as discussed in Sec.~\ref{subsec:Meq0}. To do so, we solve~\eqref{z0_z1_z2_dot} numerically using the fourth-order Runge-Kutta method for different initial conditions, and plot the obtained numerical solution in the $(u,v)$ as well as in the physical plane $(x,y)$.

\subsubsection{Self-similar evolutions (\texorpdfstring{$\Gamma=0$}{Gamma=0})}
\begin{figure}[htbp!]
\centering
\subfigure[]{
\includegraphics[width=0.32\textwidth,height=0.32\textwidth]{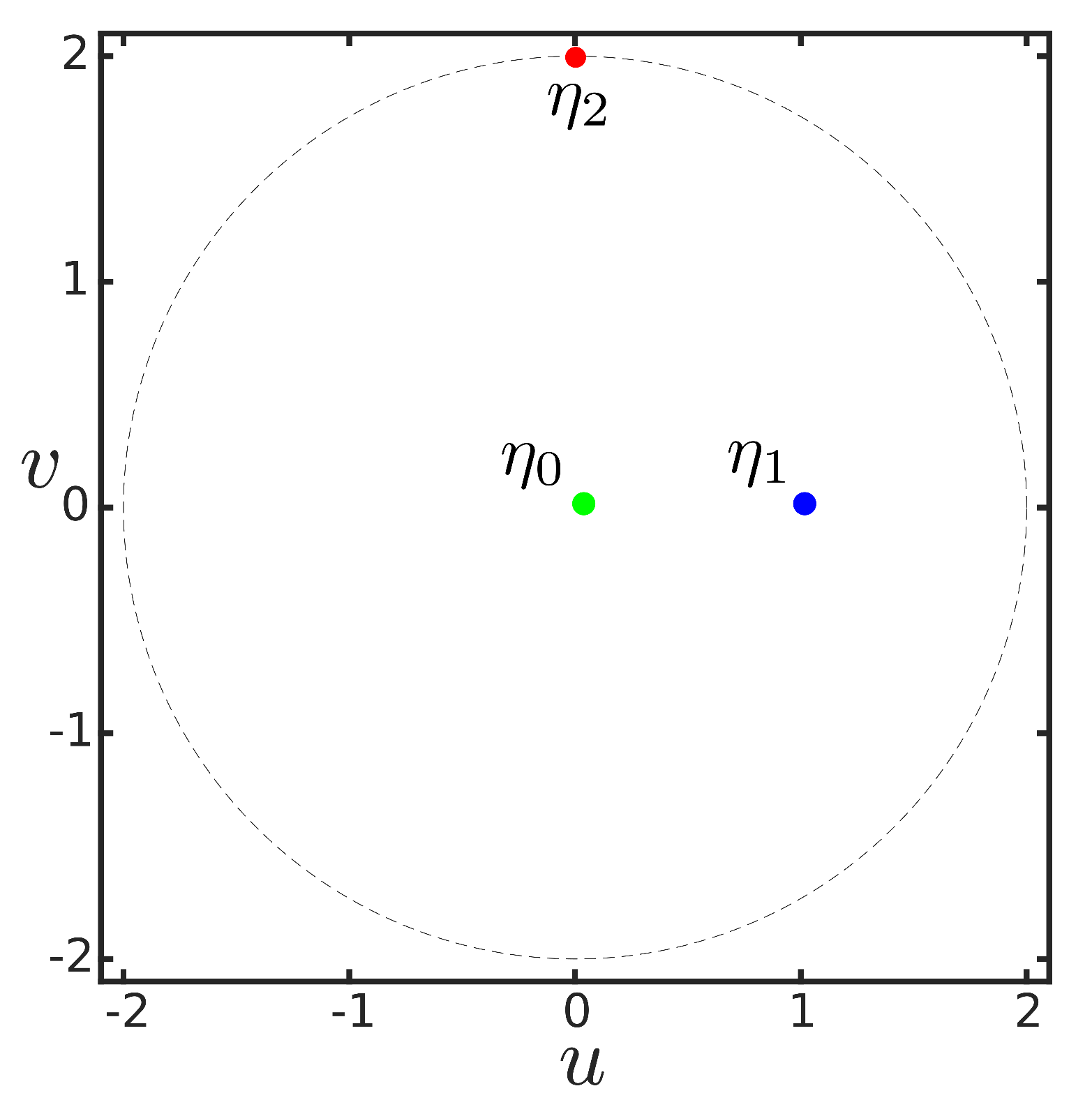}}
\subfigure[]{
\includegraphics[width=0.32\textwidth,height=0.32\textwidth]{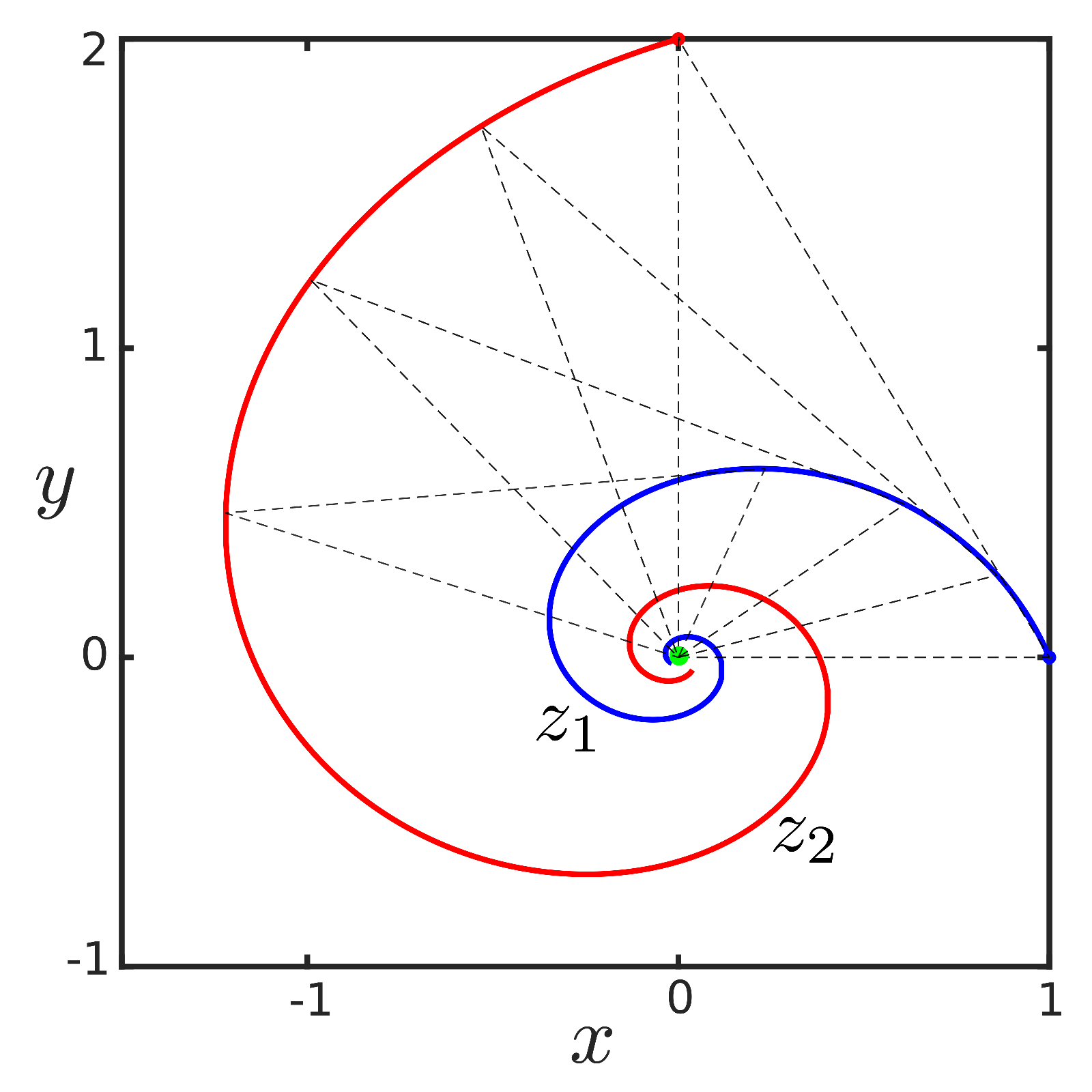}}
\subfigure[]{
\includegraphics[width=0.32\textwidth,height=0.32\textwidth]{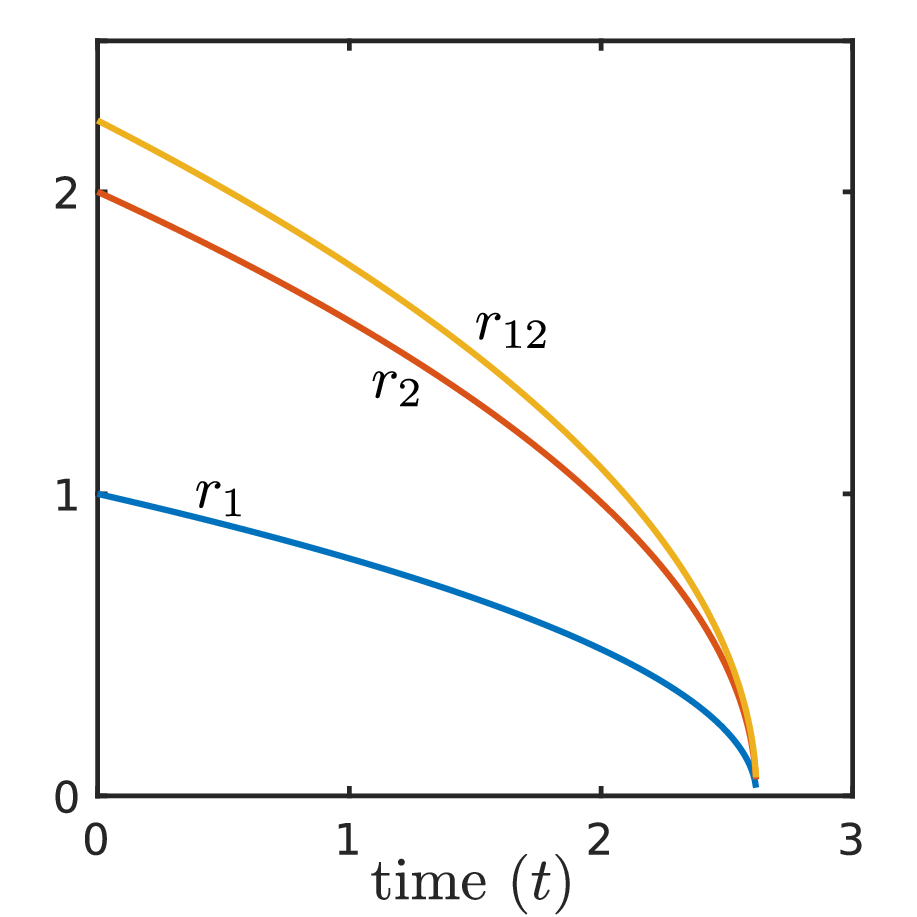}}
\caption{\small{Phase diagram showing a trajectory in the case of self-similar collapse. The positions of vortices $\mathcal{V}_0$, $\mathcal{V}_1$ and $\mathcal{V}_2$ are marked by green, blue, and red, respectively in (a) $(u,v)$ phase plane and (b) $(x,y)$ physical plane. (c) The inter-vortex distance functions are plotted against time.}}
\label{fig_Meq0_vortex_collapse}
\end{figure} 
We consider the vortex circulations $(\Gamma_0,\Gamma_1,\Gamma_2)=(4,12,-3)$ which satisfy the equality $\Gamma=\Gamma_0\Gamma_1+\Gamma_0\Gamma_2+\Gamma_1\Gamma_2=0$. 
Using conditions $z_1|_{t=0} =(1,0)$ and 
$M=\Gamma_1r_1^2+\Gamma_2r_2^2=0$, we get $r_2|_{t=0}=2$. This implies that 
we need to choose $z_2|_{t=0}$ from the circle of radius two centered at the origin. 
Note that because of the choice of $z_1|_{t=0}=(1,0)$, we also have $z_2|_{t=0}=\eta_2|_{t=0}$.

\begin{enumerate}[label=(\roman*)]
\item
{Self-similar collapse ($C<0$):}
For the initial conditions $z_1|_{t=0}=(1,0)$ and  $z_2|_{t=0}=(0,2)$ we get $C=-6/5\pi<0$, which correspond 
a self-similar collapse. Hence, we would expect both $r_1$ and $r_2$ to decrease monotonically to zero, and  at $t^*=5\pi/6\approx 2.618$, the free vortices $\mathcal{V}_1$ and $\mathcal{V}_2$  to collide with the fixed vortex $\mathcal{V}_0$. Plotting  the numerical solution for $0\leq t\leq 2.616$ yields figure~\ref{fig_Meq0_vortex_collapse}. The $(u,v)$ phase plane trajectory [see figure~\ref{fig_Meq0_vortex_collapse}(a)] is a single point (marked red), indicating that it is an equilibrium solution. In the physical plane [see figure~\ref{fig_Meq0_vortex_collapse}(b)], we see that the vortices move towards the fixed vortex in a spiral fashion. The vortex triangle formed by joining the vortices $\mathcal{V}_0$, $\mathcal{V}_1$, and $\mathcal{V}_2$ is shown by dashed lines at four different time 
As expected, they are all similar triangles with decreasing area. In figure~\ref{fig_Meq0_vortex_collapse}(c), the inter-vortex distance functions $r_1$, $r_2$, and $r_{12}$ are seen monotonically decreasing and simultaneously reaching the zero value in finite time, agreeing with our analysis presented in Sec.~\ref{subsec:Meq0}.

\item
\begin{figure}
\subfigure[]{
\includegraphics[width=0.32\textwidth,height=0.32\textwidth]{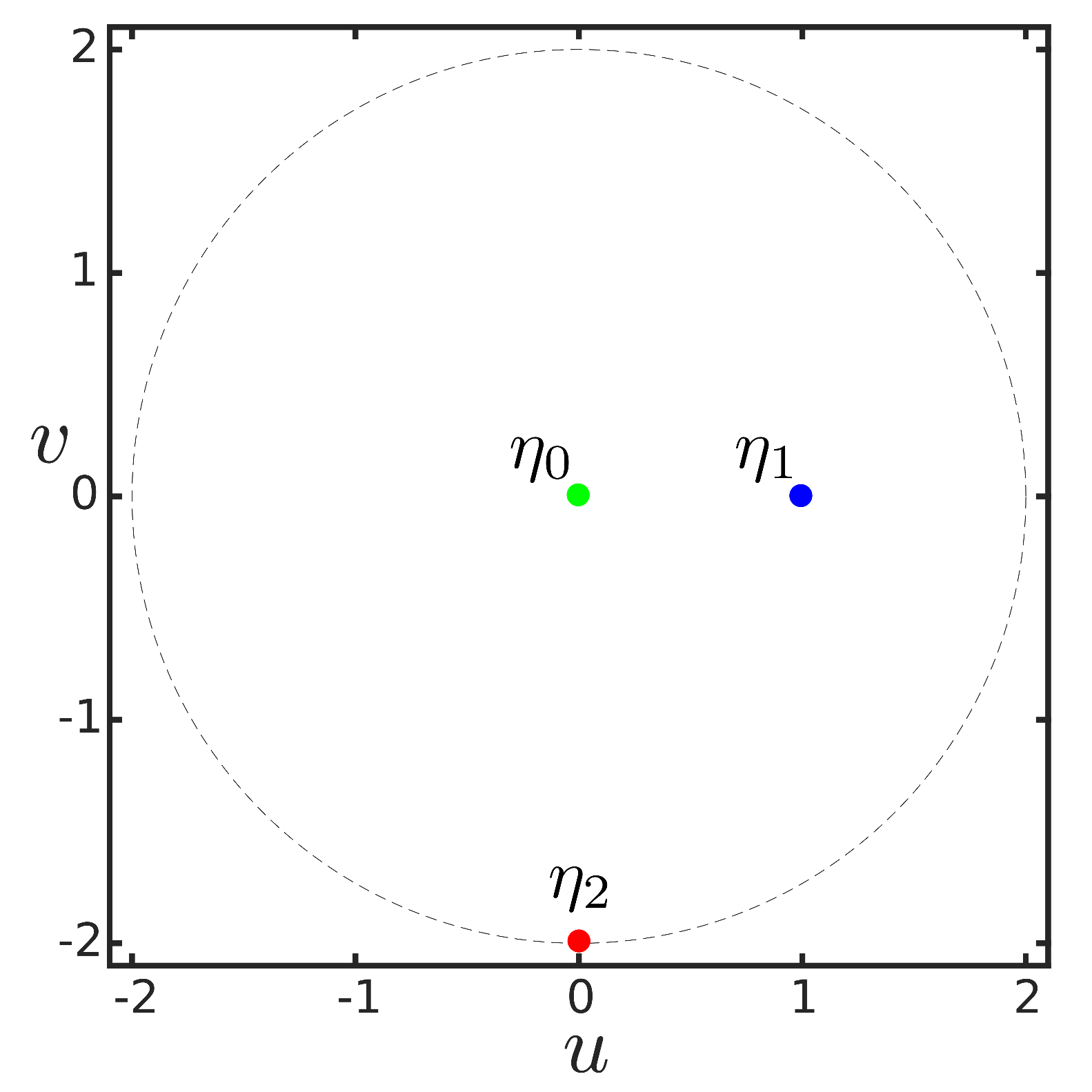}}
\subfigure[]{
\includegraphics[width=0.32\textwidth,height=0.32\textwidth]{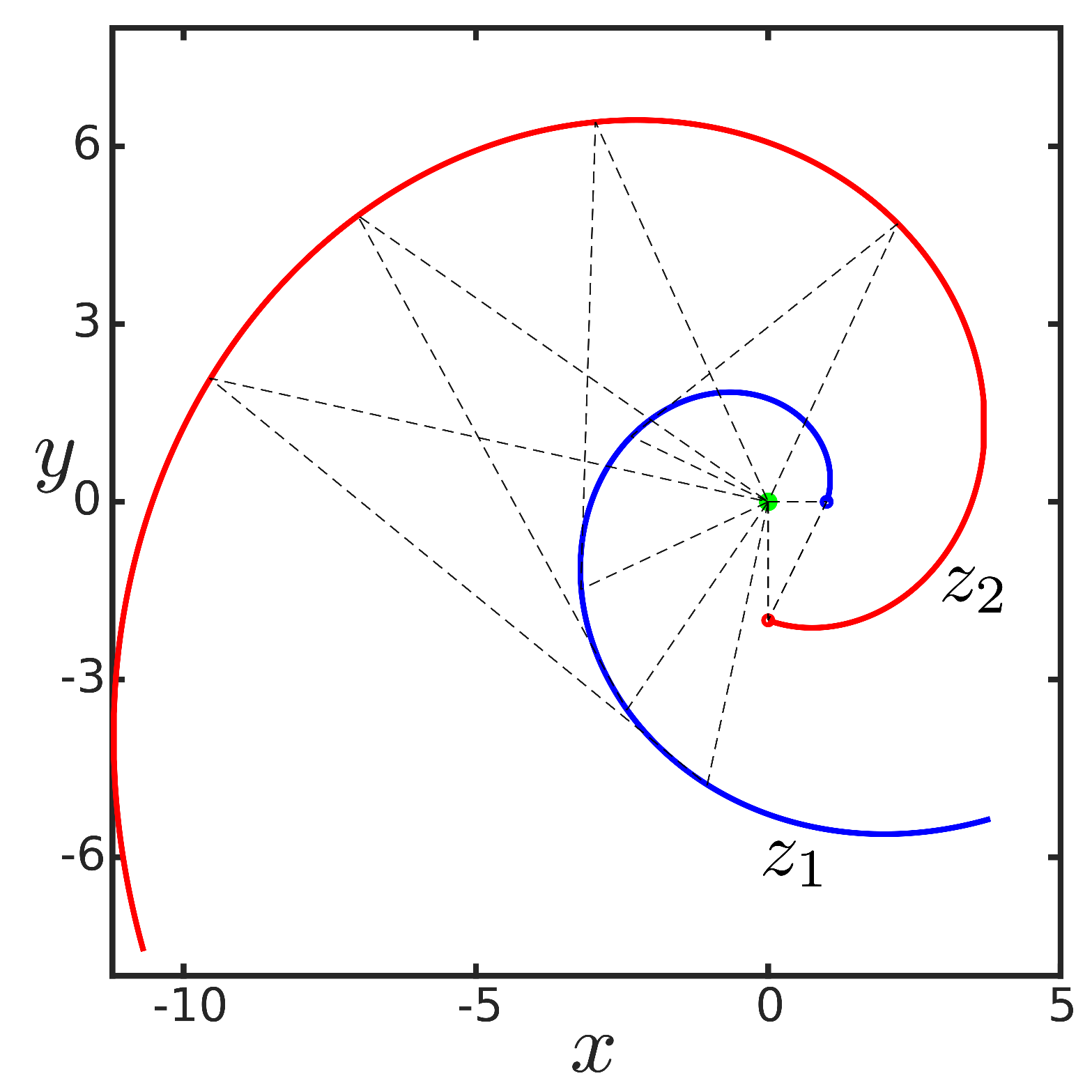}}
\subfigure[]{
\includegraphics[width=0.32\textwidth,height=0.32\textwidth]{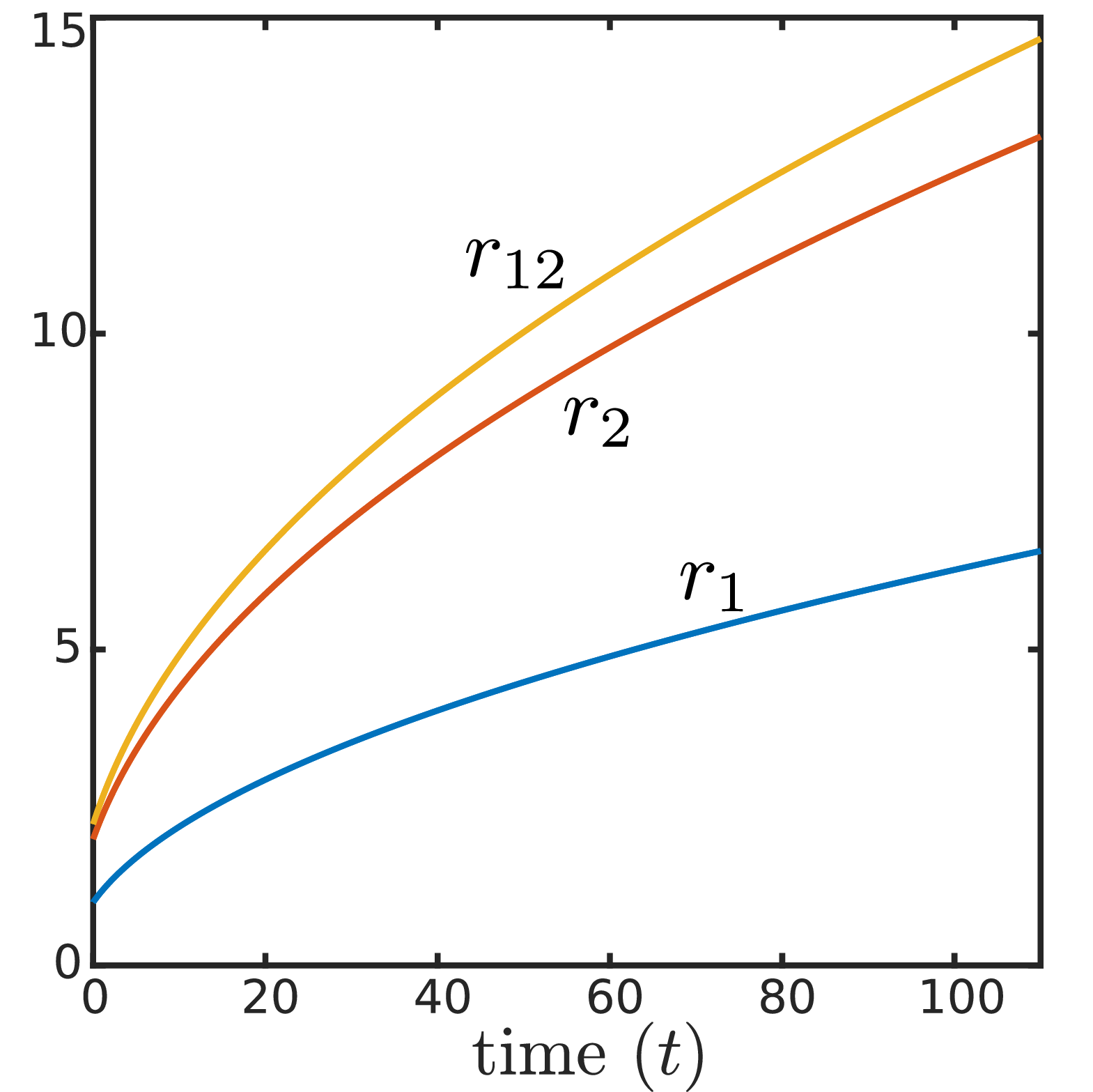}}
\caption{\small{
Same as figure~\ref{fig_Meq0_vortex_collapse} but for the case of self-similar expansion.
}}
\label{fig_Meq0_vortex_expansion}
\end{figure}

{Self-similar expansion ($C>0$):} For the initial conditions $z_1|_{t=0}=(1,0)$ and $ z_2|_{t=0}=(0,-2)$, we get $C=6/5\pi>0$, which correspond to a self-similar expansion. Thus, one would expect the free vortices $\mathcal{V}_1$ and $\mathcal{V}_2$ to move away from the fixed vortex, maintaining the angle between them for $t>0$. 
Plotting the numerical solution for $t\in [0,110]$  yields figure~\ref{fig_Meq0_vortex_expansion}. The $(u,v)$ phase plane trajectory is an equilibrium solution at $(0,-2)$ [see figure~\ref{fig_Meq0_vortex_expansion}(a)] whereas the actual vortex trajectory consists of free vortices $\mathcal{V}_1$ and $\mathcal{V}_2$ moving away from each other in a spiral fashion around the fixed vortex $\mathcal{V}_0$ [see figure~\ref{fig_Meq0_vortex_expansion}(b)]. Four dashed triangles in figure~\ref{fig_Meq0_vortex_expansion}(b), formed by joining the vortex positions at four different 
time, show that the vortex triangles remain similar but with increasing area. In figure~\ref{fig_Meq0_vortex_expansion}(c), it is seen that the inter-vortex distance functions $r_1,r_2$, and $r_{12}$ are monotonically increasing with time.

\item
{Fixed collinear configurations ($C=0$):}
For the initial conditions $z_1|_{t=0}=(1,0)$ and $z_2|_{t=0}=(2,0)$, we get $C=0$. The inter-vortex distances $r_1,r_2$, and $r_{12}$ remain constant [see figure~\ref{fig_Meq0_fixed_conf}(c)] throughout the motion indicating a fixed configuration. Note that all three vortices lie on the $x$-axis, and hence, are collinear initially. The numerical solution for the given initial conditions shows that the $(u,v)$ phase plane trajectory is an equilibrium point on the $u$-axis at $(2,0)$ [see figure~\ref{fig_Meq0_fixed_conf}(a)]. In the physical plane, the free vortices $\mathcal{V}_1$ and $\mathcal{V}_2$ move in circular orbits around the fixed vortex $\mathcal{V}_0$, as shown in figure~\ref{fig_Meq0_fixed_conf}(b). The vortex positions at several instances are joined by the dashed lines [see figure~\ref{fig_Meq0_fixed_conf}(b)], and it is evident that the vortices retain the collinearity throughout the motion. 
\begin{figure}[!b]
\subfigure[]{
\includegraphics[width=0.32\textwidth,height=0.32\textwidth]{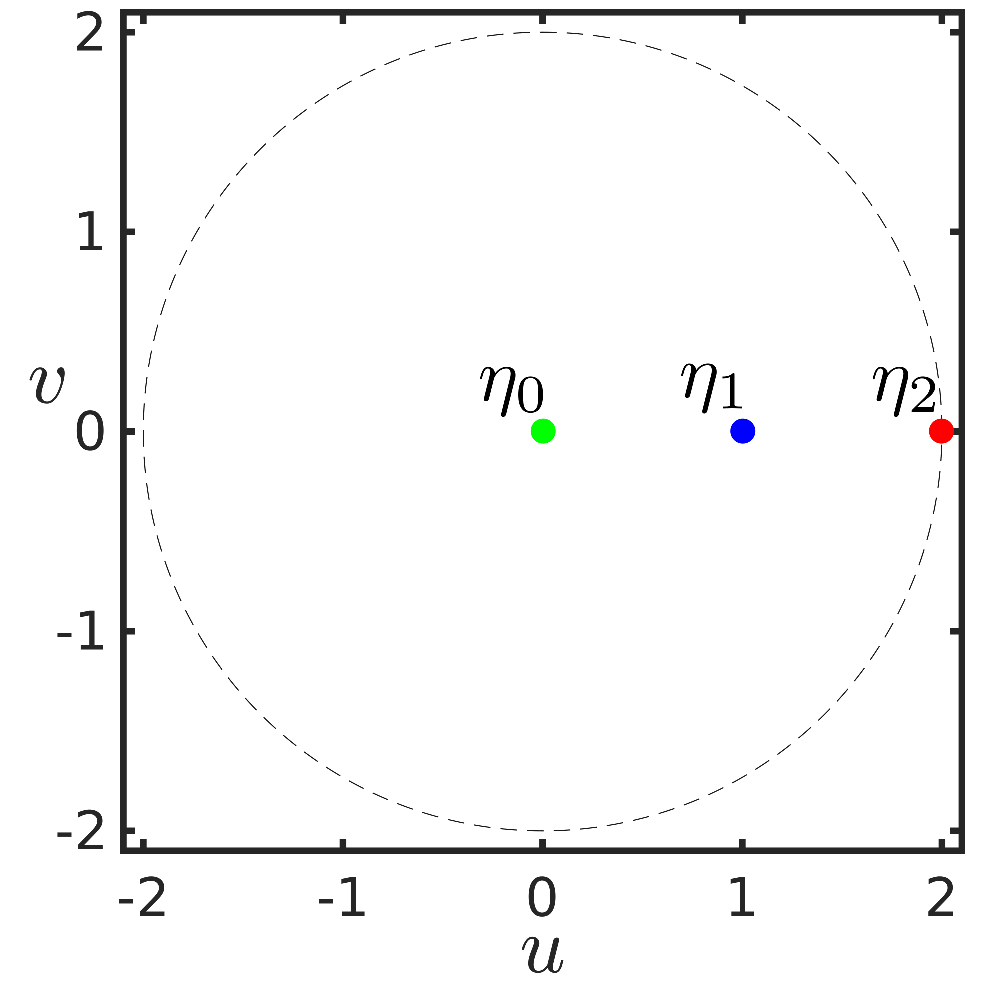}}
\subfigure[]{
\includegraphics[width=0.32\textwidth,height=0.32\textwidth]{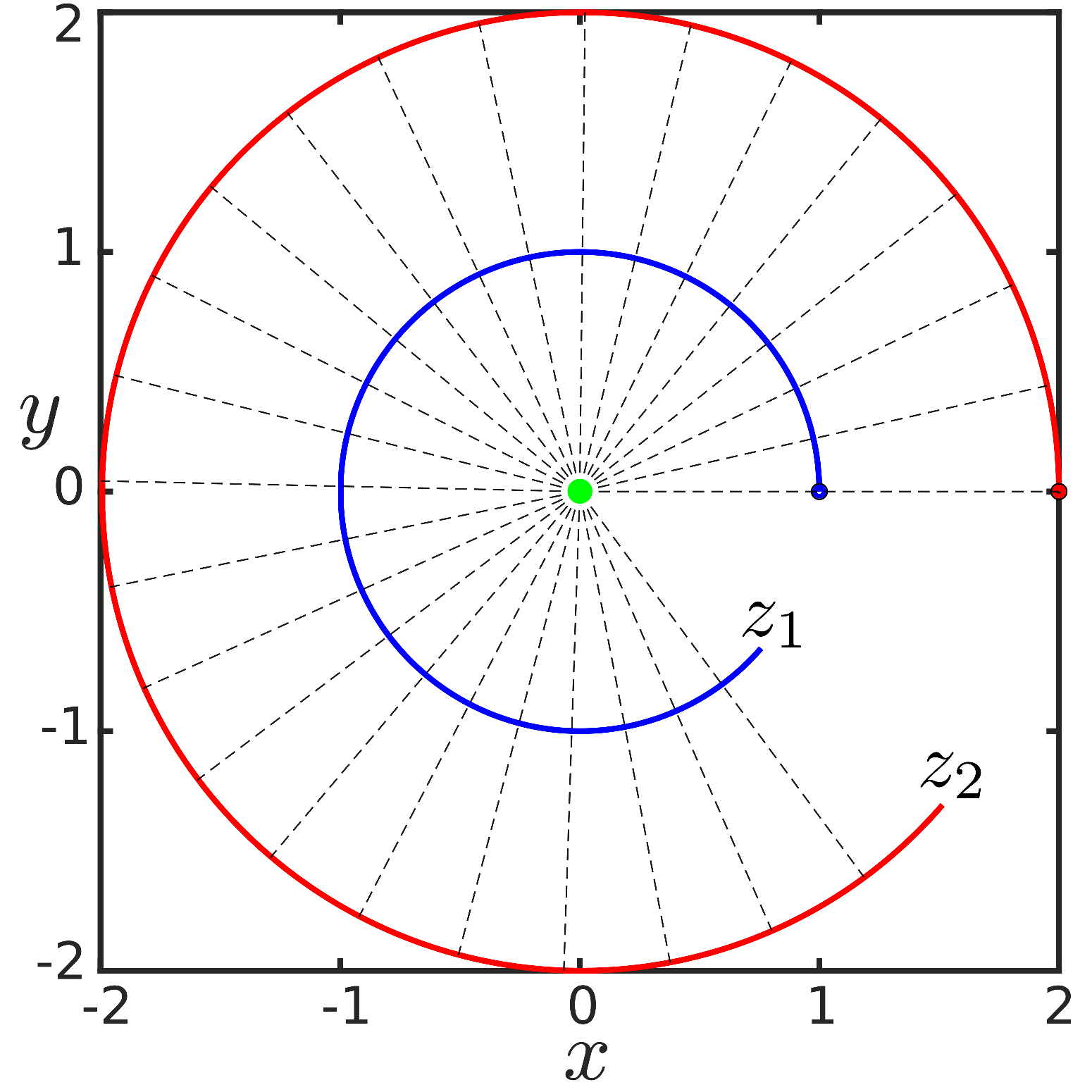}}
\subfigure[]{
\includegraphics[width=0.32\textwidth,height=0.32\textwidth]{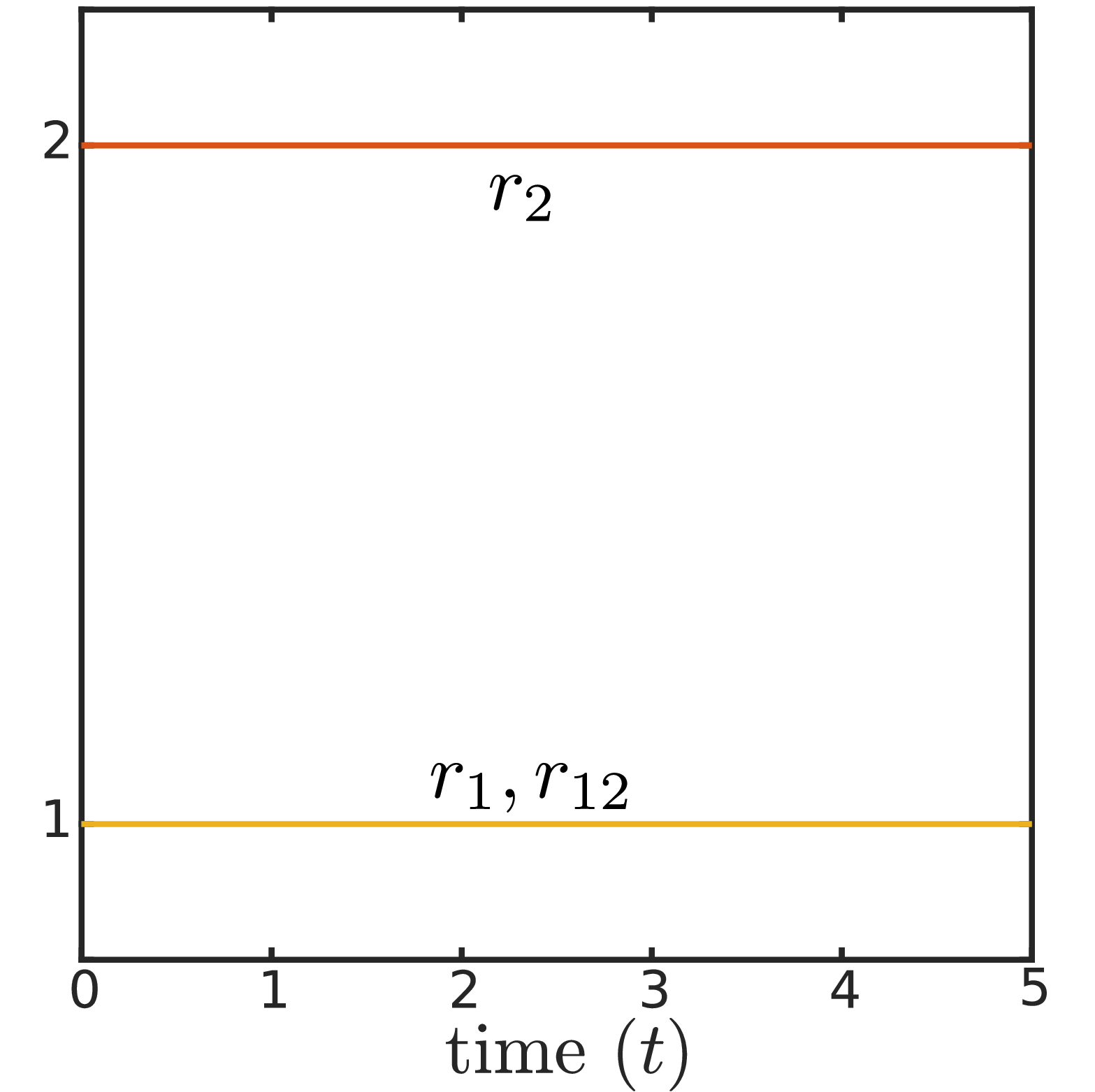}}
\caption{\small{
Same as figure~\ref{fig_Meq0_vortex_collapse} but for the case of fixed configuration.
}}
\label{fig_Meq0_fixed_conf}
\end{figure}
\end{enumerate}

\subsubsection{Unbounded dipole motion (\texorpdfstring{$\kappa=1$}{k=1}): Equal counter-rotating pair}

Let us illustrate the vortex motion when the circulations $(\Gamma_0,\Gamma_1,\Gamma_2)$ are given by $(2,1,-1)$ such that $\kappa=\sqrt{-\Gamma_1/\Gamma_2}=1$. Since $z_1|_{t=0}=(1,0)$, the initial condition for $z_2$ must be on a circle of radius $1$ centered at the origin, so as to make $M=0$. Figure~\ref{fig_Meq0_counter_rotating} shows the numerical solution of~\eqref{z0_z1_z2_dot} for the given set of parameters and $z_2|_{t=0}=(0,1)$. 
The $(u,v)$ phase plane trajectory [see figure~\ref{fig_Meq0_counter_rotating}(a)] remains on a circle of radius $1$ (dashed line), and moves towards the singularity point $\eta_1$ asymptotically. 
From~\eqref{r1_sqr_dot_uv_eqn}--\eqref{r2_sqr_dot_uv_eqn}, we observe that in finite time the trajectory intersects 
$u$-axis at $(-1,0)$, which corresponds to a minimum for the inter-vortex distance functions.
Thus, after achieving the minimum, $r_1$ and $r_2$ must monotonically increase with time. 
This is justified in figure~\ref{fig_Meq0_counter_rotating}(c), which illustrates the variation of inter-vortex distances with time. As mentioned before in Sec.~\ref{subsebsec:Meq0_keq1}, $r_{12}$ is a constant function, and $r_1=r_2$. In the $(x,y)$ plane [see figure~\ref{fig_Meq0_counter_rotating}(b)], the free vortices are seen moving away from the fixed vortex, indicating the unbounded nature of the vortex motion. 
\begin{figure}[htb!]
\centering
\subfigure[]{
\includegraphics[width=0.32\textwidth,height=0.32\textwidth]{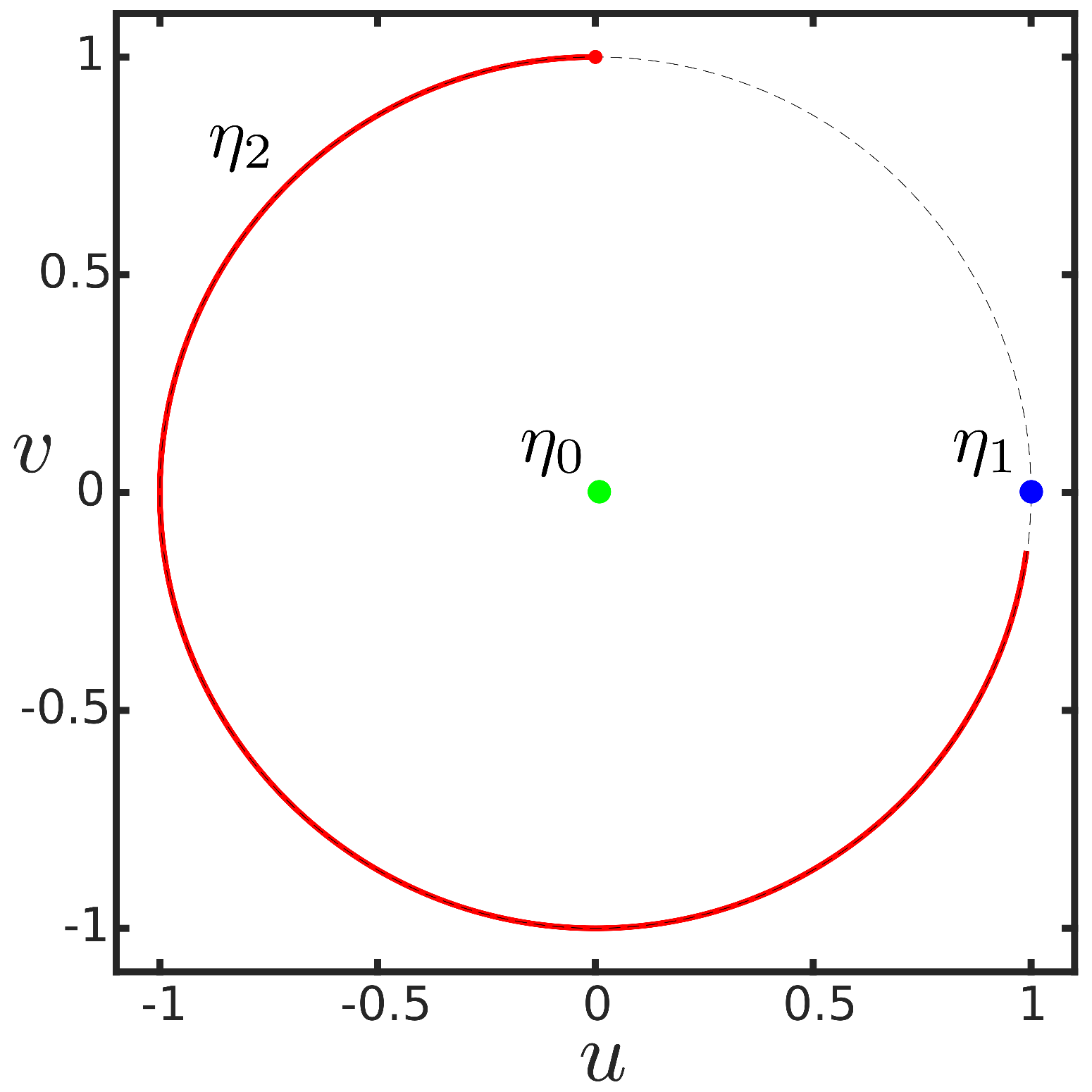}}
\subfigure[]{
\includegraphics[width=0.32\textwidth,height=0.32\textwidth]{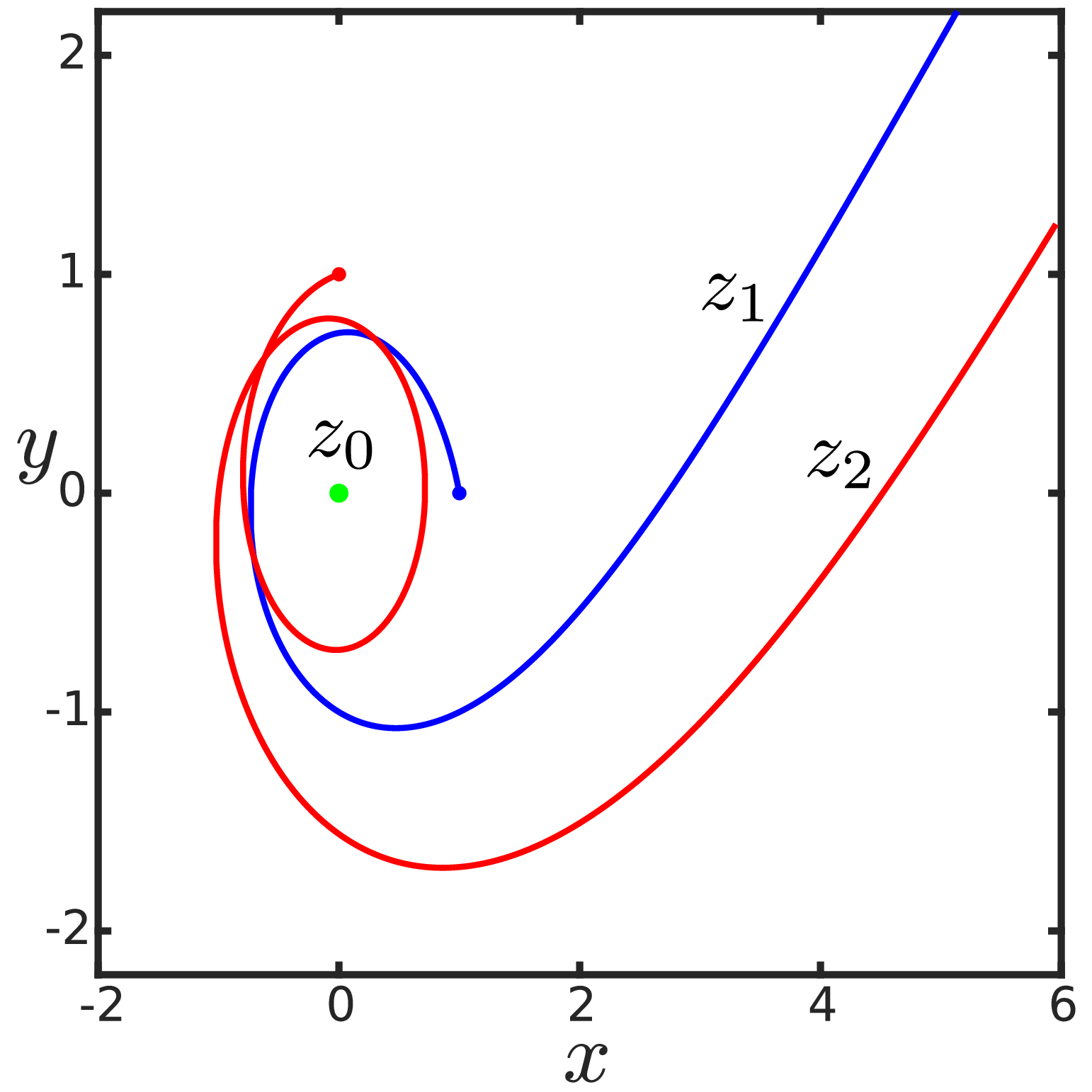}}
\subfigure[]{
\includegraphics[width=0.32\textwidth,height=0.32\textwidth]{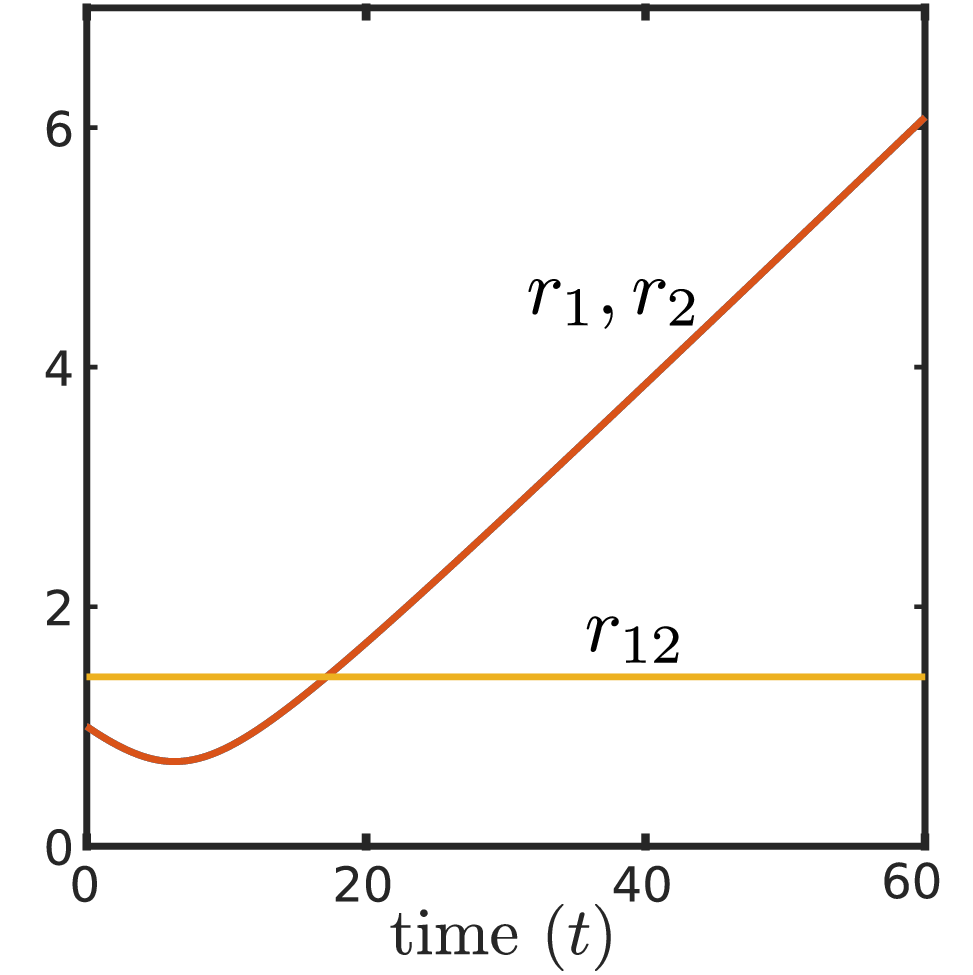}}
\caption{\small{
Same as figure~\ref{fig_Meq0_vortex_collapse} but for the case of equal counter-rotating pair of free vortices.
}}
\label{fig_Meq0_counter_rotating}
\end{figure}

\subsubsection{Bounded periodic motion \texorpdfstring{($\kappa\neq 1$, $\Gamma\neq0$)}{(kappa neq 1, Gamma neq 0}}

Let us now analyze the vortex motion when the circulations are $(\Gamma_0,\Gamma_1,\Gamma_2)=(-3,12,-2.9)$ for the initial conditions $z_1|_{t=0}=(1,0)$ and $z_2|_{t=0}=(\sqrt{12/2.9},0)$.
Clearly 
$M=0$, $\Gamma\neq 0$ and $\kappa\neq 1$ in this case.
The $(u,v)$ phase plane trajectory is a  full circular orbit of radius $\sqrt{12/2.9}$ centered at the origin [see figure~\ref{fig_Meq0_periodic}(a)], whereas the actual vortex motion consists of free vortices moving around the fixed vortex $\mathcal{V}_0$ with repeating patterns, indicating the periodicity in the inter-vortex distance functions [see figure~\ref{fig_Meq0_fixed_conf}(b)]. Periodicity in the variables $r_1,r_2$, and $r_{12}$ is also confirmed in figure~\ref{fig_Meq0_periodic}(c).
\begin{figure}[htbp!]
\centering
\subfigure[]{
\includegraphics[width=0.32\textwidth,height=0.32\textwidth]{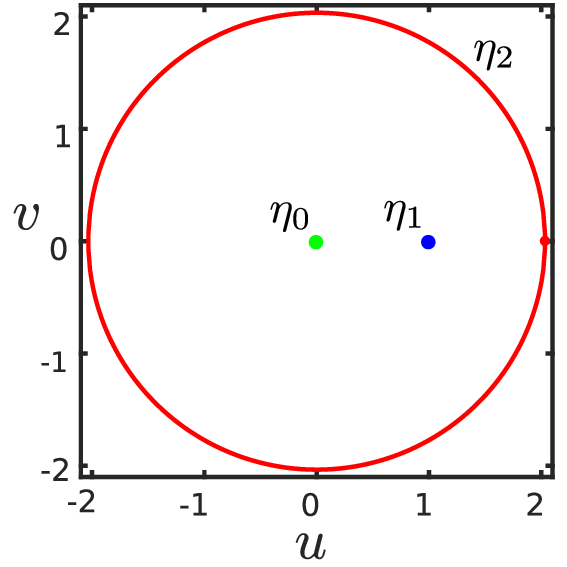}}
\subfigure[]{
\includegraphics[width=0.32\textwidth,height=0.32\textwidth]{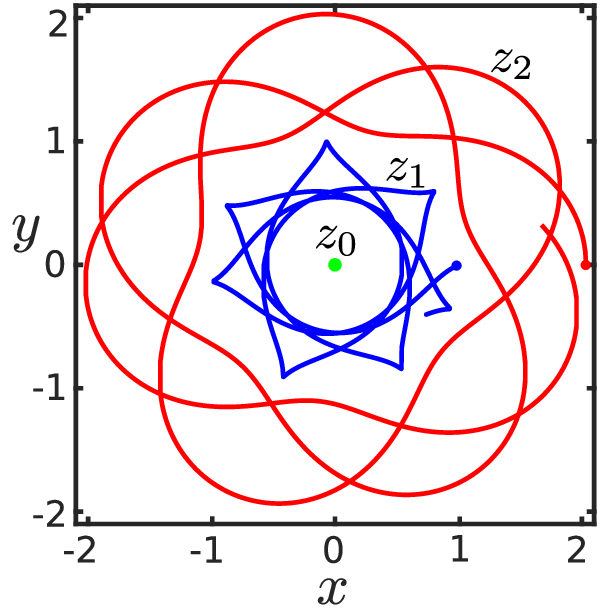}}
\subfigure[]{
\includegraphics[width=0.32\textwidth,height=0.32\textwidth]{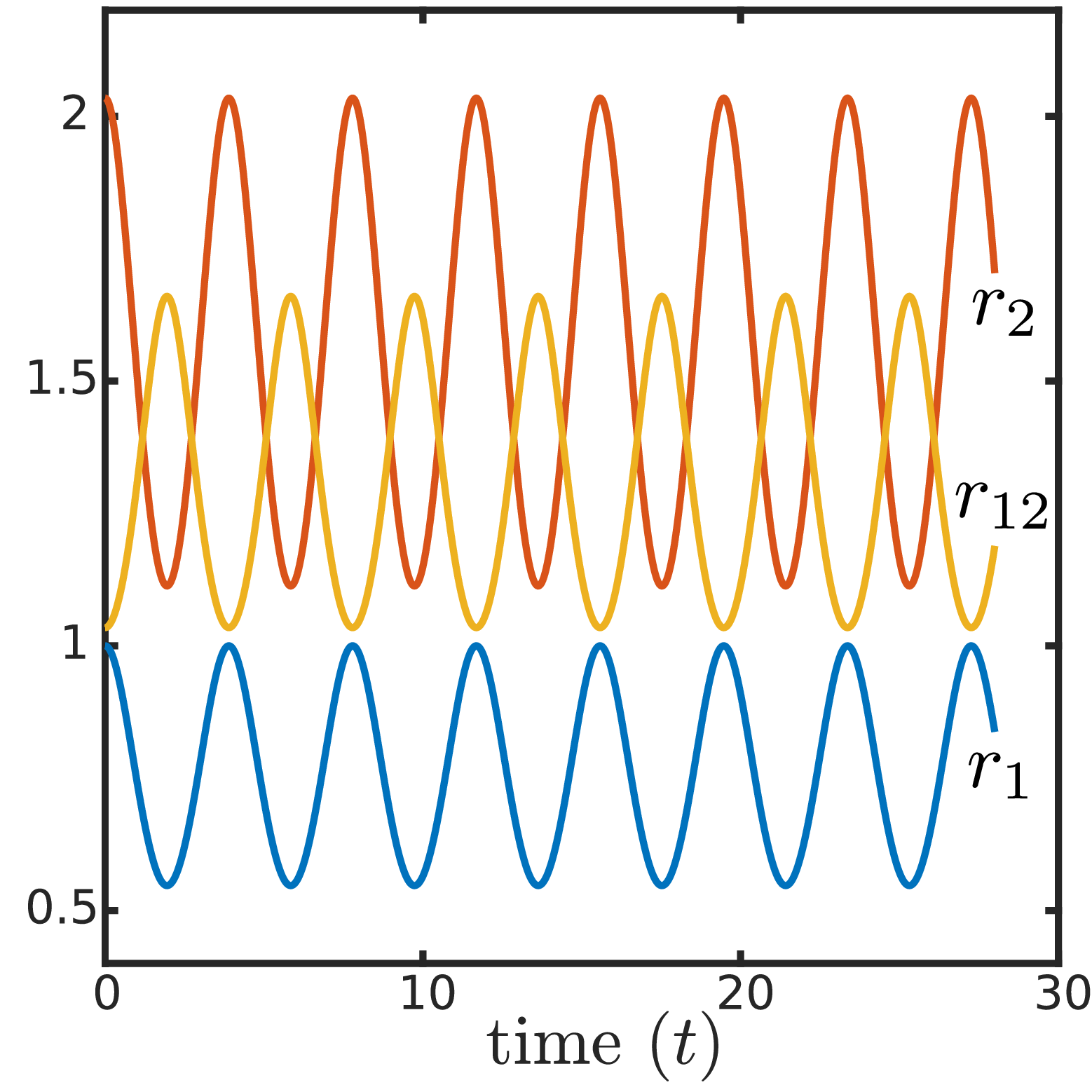}}
\caption{\small{
Same as figure~\ref{fig_Meq0_vortex_collapse} but for the case of periodic inter-vortex distances.}}
\label{fig_Meq0_periodic}
\end{figure}

\subsection{Asymmetric case (\texorpdfstring{$M\neq0$}{M neq 0})}
\label{subsec:Mneq0}

 Let us begin this section by observing that vortex collisions never happen in the case of non-zero $M$. A finite value of the Hamiltonian~\eqref{cons_energy} along with the inequality $|r_1-r_2|\leq r_{12}\leq r_1+r_2$ implies that
$r_1$ tends to zero if and only if $r_2$ tends to zero. This indeed cannot happen, as $M=\Gamma_1r_1^2+\Gamma_2r_2^2$ is assumed to be non-zero. Therefore, the inter-vortex distance functions $r_1$ and $r_2$ are bounded away from zero.

\begin{lemma}\label{lemma_Mneq0_lowerbound_r1r2}
 For the asymmetric $M\neq 0$ case, the  inter-vortex distances $r_1$ and $r_2$ are bounded away from zero for all time.
  \begin{proof}We shall give a proof by contradiction. Suppose there exists a real sequence $\{t_n\}_{n\in\mathbb{N}}$ such that $\lim_{n\to \infty}r_1(t_n)=0 $. From~\eqref{cons_ang} it follows that $\lim_{n\to \infty}r_2(t_n)=\sqrt{M/\Gamma_2}$. Since the inequality $|r_1-r_2|\leq r_{12}\leq r_1+r_2$ must hold for all time, the sequence $\{r_{12}(t_n)\}_{n\in\mathbb{N}}$ must also tend to $\sqrt{M/\Gamma_2}$. The contradiction is that the left-hand side of~\eqref{cons_energy} is finite but the right-hand side is not. Similarly, one may argue for the case $r_2\to 0$ to arrive at a contradiction. In fact, $r_{12}$, the distance between the vortices $\mathcal{V}_1$ and $\mathcal{V}_2$ is also bounded away from zero (see lemma~\ref{lemma_Mneq0_r12_bounded}).
 \end{proof}
\end{lemma}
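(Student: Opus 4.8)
The plan is to argue by contradiction, playing the two conserved quantities against one another: the angular impulse $M$ in~\eqref{cons_ang} pins down the limiting size of the distances, while the finiteness of the Hamiltonian $H$ in~\eqref{cons_energy} forbids any single distance from collapsing to zero. Suppose, for contradiction, that $r_1$ fails to be bounded away from zero, so that there is a sequence of times $\{t_n\}$ with $r_1(t_n)\to 0$. First I would extract the behaviour of $r_2$ from $M=\Gamma_1 r_1^2+\Gamma_2 r_2^2$: since $M\neq 0$ is constant and $\Gamma_1,\Gamma_2\neq 0$, sending $r_1(t_n)\to 0$ forces $\Gamma_2 r_2(t_n)^2\to M$, i.e.\ $r_2(t_n)^2\to M/\Gamma_2$. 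Because $r_2^2\geq 0$, this is only consistent when $M/\Gamma_2>0$, in which case $r_2(t_n)\to\sqrt{M/\Gamma_2}$, a finite and strictly positive limit; if instead $M/\Gamma_2<0$ the contradiction is already reached here, as $r_2^2$ cannot approach a negative number.

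Next I would pin down $r_{12}$ via the triangle inequality $|r_1-r_2|\leq r_{12}\leq r_1+r_2$, valid at every instant. Evaluating along $\{t_n\}$, where $r_1\to 0$ and $r_2\to\sqrt{M/\Gamma_2}$, squeezes $r_{12}(t_n)$ to the same positive limit $\sqrt{M/\Gamma_2}$. Thus along the sequence both $r_2$ and $r_{12}$ converge to finite strictly positive values, while $r_1\to 0$.

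The contradiction is then read off from~\eqref{cons_energy}. Rewriting it as $-2\pi H=\Gamma_1\Gamma_0\log r_1+\Gamma_2\Gamma_0\log r_2+\Gamma_1\Gamma_2\log r_{12}$, I note that along $\{t_n\}$ the last two terms remain bounded (their arguments tend to a strictly positive constant), while $\log r_1(t_n)\to-\infty$. Since the coefficient $\Gamma_1\Gamma_0$ is nonzero by hypothesis, the right-hand side diverges, contradicting the finiteness of $H$. Hence $r_1(t_n)\to 0$ is impossible, and $r_1$ is bounded away from zero. Interchanging the roles of $(r_1,\Gamma_1)$ and $(r_2,\Gamma_2)$ yields the same conclusion for $r_2$.

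I expect the only genuinely delicate point to be verifying that the logarithmic singularity produced by $r_1\to 0$ is truly isolated and cannot be cancelled by another diverging term. This is exactly where $M\neq 0$ does the work: it keeps $r_2$ (and hence, through the triangle inequality, $r_{12}$) bounded below, so that $\Gamma_2\Gamma_0\log r_2$ and $\Gamma_1\Gamma_2\log r_{12}$ stay finite and the blow-up of $\Gamma_1\Gamma_0\log r_1$ survives. The sign bookkeeping for $M/\Gamma_2$ noted above is a minor but necessary caveat to ensure the intermediate limit $\sqrt{M/\Gamma_2}$ is meaningful.
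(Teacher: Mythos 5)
Your proof is correct and follows essentially the same route as the paper's: contradiction via a sequence $r_1(t_n)\to 0$, using conservation of $M$ to pin $r_2(t_n)\to\sqrt{M/\Gamma_2}$, the triangle inequality to squeeze $r_{12}(t_n)$ to the same positive limit, and the resulting divergence of $\Gamma_0\Gamma_1\log r_1$ against the finiteness of $H$ in~\eqref{cons_energy}. Your added sign check on $M/\Gamma_2$ (reaching the contradiction immediately when $M/\Gamma_2<0$) is a small refinement the paper leaves implicit, but it does not change the structure of the argument.
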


Next, we will derive the underlying differential equations in the $(u,v)$ phase plane. Differentiating $\eta_2(t)=z_2(t)/z_1(t)$ with respect to time, we get
\begin{align}
\dot{\eta_2}&=\frac{\dot{z_2}}{z_1}+\frac{z_2\dot{\bar{z}}_1}{r_1^2}+z_2  \bar{z}_1 \dot{\left(\frac{1}{r_1^2}\right)},
\end{align}
where bar over a variable denotes its complex conjugate. 
Substituting $\dot{z_1}$ and $\dot{z_2}$ from~\eqref{z0_z1_z2_dot} into above equation and simplifying, we get the following expression
\begin{align}
\dot{\eta_2}&=\frac{\mathbbm{i}\Gamma_0}{2\pi}\eta_2\left(r_2^{-2}-r_1^{-2}\right)+\frac{\mathbbm{i}(\Gamma_1-\Gamma_2)}{2\pi}\frac{\eta_2}{r_{12}^2}+\frac{\mathbbm{i}\Gamma_2}{2\pi}\frac{|\eta_2|^2}{r_{12}^2}+\eta_2r_1^2\dot{\left(\frac{1}{r_1^2}\right)}.
\label{Mneq0_eta2dot}
\end{align}

Since, $M\neq 0$, using~\eqref{cons_ang} and~\eqref{r2byr1_r12byr1} it is possible to express the inter-vortex distances $r_1,r_2$, and $r_{12}$ in terms of coordinates $u$ and $v$ as
\begin{align}
r_1^2&=\frac{M}{\Gamma_1+\Gamma_2(u^2+v^2)},\quad
r_2^2=\frac{M(u^2+v^2)}{\Gamma_1+\Gamma_2(u^2+v^2)},\quad 
r_{12}^2=\frac{M((u-1)^2+v^2)}{\Gamma_1+\Gamma_2(u^2+v^2)}.
\label{Mneq0_r1_r2_r12}
\end{align}
Substituting~\eqref{Mneq0_r1_r2_r12} in~\eqref{Mneq0_eta2dot} and 
equating the real and imaginary parts of both sides, we get
\begin{align}
\begin{aligned}
\dot{u}&=vf(u,v)\left[\frac{\Gamma_1+\Gamma_2(u^2+v^2)}{2M\pi(u^2+v^2)\left((u-1)^2+v^2\right)}\right],\\
\dot{v}&=-g(u,v)\left[\frac{\Gamma_1+\Gamma_2(u^2+v^2)}{2M\pi(u^2+v^2)\left((u-1)^2+v^2\right)}\right],
\end{aligned}
\label{Mneq0_udot_vdot}
\end{align}
where $
f(u,v)=\Gamma_0\left((u-1)^2+v^2\right)(u^2+v^2-1)+\left(-\Gamma_1+\Gamma_2(1-2u)\right)(u^2+v^2)$,
and $g(u,v)=uf(u,v)+(u^2+v^2)\left(\Gamma_1+\Gamma_2(u^2+v^2)\right)$. 
It is worth noticing that~\eqref{Mneq0_udot_vdot} is invariant under the transformation $t\to -t$ and $v\to -v$, and, therefore, is a reversible system~\cite{SS2000,SW2003}. 
Thus, for any trajectory in the positive $v$-plane there is a trajectory in the negative $v$-plane, which are mirror images of each other.

In the following sections, we will discuss the equilibrium solutions and trajectories of the dynamical system~\eqref{Mneq0_udot_vdot}. 

\subsubsection{Equilibrium solutions}
\label{subsubsec:Mneq0_equilibrium}

To find the equilibrium solutions, we look for points $(u,v)\in\mathbb{R}^2\backslash\{(0,0),(1,0)\}$  satisfying $\dot{u}=0=\dot{v}$ in~\eqref{Mneq0_udot_vdot}. 

For $M\neq 0$, the term $\Gamma_1+\Gamma_2(u^2+v^2)$ must be non-zero for all time. Consequently, by equating $\dot{u}$ and $\dot{v}$ to zero, one obtains $vf(u,v)=0$ and $g(u,v)=0$ respectively. The term $f(u,v)$ cannot be zero as it implies $g(u,v)=(u^2+v^2)\Big(\Gamma_1+\Gamma_2(u^2+v^2)\Big)$,  which cannot be zero. Therefore, $v$ must be equal to zero and $u$ must satisfy the polynomial $g(u,0)=0$. Recall that $(0,0)$ and $(1,0)$ are singularity points, and they cannot be equilibrium points. Thus, the factor $u(u-1)$ in the expression of $g(u,0)$ cannot be zero, and we finally end up with a cubic polynomial
\begin{equation}
\label{Mneq0_cubic_polynomial}
p(u)=u^3-(1+\alpha_2)u^2-(1+\alpha_1)u+1,\qquad
\alpha_1=\Gamma_1/\Gamma_0, \quad \alpha_2=\Gamma_2/\Gamma_0,
\end{equation} 
whose real roots correspond to the location of the equilibrium points on the $u$-axis. Note that as all the equilibrium points lie on the $u$-axis ($v=0$), collinear fixed configurations are the only possible type of self-similar vortex evolutions in the case of $M\neq0$ (see lemmas~\ref{lemma_self_similarity} and~\ref{lemma_fixed_configuration}). 

Since the diagonal entries of the Jacobian matrix associated with~\eqref{Mneq0_udot_vdot} are zero at the $u$-axis, the equilibrium points of the linearized system are either centers or saddles. As the system~\eqref{Mneq0_udot_vdot} is reversible, it follows that equilibrium points of the original non-linear system are also either centers or saddles (see, e.g.~\cite{SS2000,PL2001}).
Let us now explore various kinds of trajectories possible in the $(u,v)$ phase plane, and the corresponding physical implications about the vortex motion.

\subsubsection{Trajectories}

Writing Hamiltonian in terms $u$ and $v$ by substituting $r_1$, $r_2$, and $r_{12}$ from~\eqref{Mneq0_r1_r2_r12} into \eqref{cons_energy}, we get
\begin{multline}
\Psi(u,v):=
\Gamma_0\Gamma_1\log\left|\frac{1}{\Gamma_1+\Gamma_2(u^2+v^2)}\right|+\Gamma_0\Gamma_2\log\left|\frac{(u^2+v^2)}{\Gamma_1+\Gamma_2(u^2+v^2)}\right|\\+\Gamma_1\Gamma_2\log\left|\frac{(u-1)^2+v^2}{\Gamma_1+\Gamma_2(u^2+v^2)}\right|=\text{constant.}
\label{Mneq0_levelcurve}
\end{multline}
Thus, any trajectory in the $(u,v)$ phase plane can be described as a level curve given by ~\eqref{Mneq0_levelcurve}, with the constant term determinable from the initial conditions. Note that, in~\eqref{Mneq0_levelcurve}, the constant term is finite, and $v^2$ dependency is a direct consequence of the reversibility of system~\eqref{Mneq0_udot_vdot}. 
%
%

From~\eqref{r1_sqr_dot_uv_eqn}--\eqref{r2_sqr_dot_uv_eqn},  we observe that the extrema of $r_1$ and $r_2$ must lie on the $u$-axis. Consequently, $r_1$ (and $r_2$) attains its maximum and minimum in finite time if and only if  the trajectory in the $(u,v)$ phase plane is periodic and closed due to the reversibility of~\eqref{Mneq0_udot_vdot}. This indicates that the boundedness  of the vortices and periodicity of the variable $\eta_2=(u,v)$ might be  interdependent, which we shall investigate later.

In the following lemma, we look at the boundedness of the $(u,v)$ phase plane distance functions $|\eta_2|^2=u^2+v^2$ and $|\eta_2-\eta_1|^2=(u-1)^2+v^2$ for $t\in \mathbb{R}$.
\begin{lemma}\label{lemma_Mneq0_uv_boundedness}
For any $(u,v)$ phase plane trajectory $\eta_2(t)=(u(t),v(t)), t\in\mathbb{R}$ (i) $u^2+v^2$ is bounded away from  zero as well as bounded above; (ii) if $\Gamma_1+\Gamma_2\neq 0$ then $(u-1)^2+v^2$ is also bounded away from zero.
\end{lemma}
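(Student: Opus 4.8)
For part (i) the plan is to translate boundedness of the phase-plane quantity into boundedness statements about the physical distances, where the lower bounds are already available. By \eqref{r2byr1_r12byr1} one has $u^2+v^2=r_2^2/r_1^2$ and $(u-1)^2+v^2=r_{12}^2/r_1^2$, so I would work throughout with $r_1,r_2,r_{12}$ and the two conserved quantities \eqref{cons_ang} and \eqref{cons_energy}, invoking lemma~\ref{lemma_Mneq0_lowerbound_r1r2} to supply constants $c_1,c_2>0$ with $r_1^2\ge c_1$ and $r_2^2\ge c_2$ for all $t$.

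The key observation for (i) is that, although $r_1$ and $r_2$ need not be bounded above individually when the motion is unbounded, the constant $M$ controls their ratio once one divides \eqref{cons_ang} by each of them. Dividing $M=\Gamma_1 r_1^2+\Gamma_2 r_2^2$ by $r_1^2$ gives $\Gamma_1+\Gamma_2(u^2+v^2)=M/r_1^2$, whose modulus is at most $|M|/c_1$; since $\Gamma_2\ne0$ and $u^2+v^2\ge0$, a single triangle inequality bounds $u^2+v^2$ from above. Dividing instead by $r_2^2$ gives $\Gamma_1/(u^2+v^2)+\Gamma_2=M/r_2^2$, whose modulus is at most $|M|/c_2$; since $\Gamma_1\ne0$, the same manoeuvre bounds $1/(u^2+v^2)$ from above, i.e. bounds $u^2+v^2$ away from zero. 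This settles (i), and notably requires no hypothesis on $\Gamma_1+\Gamma_2$.

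For part (ii) I would argue by contradiction, exploiting the finiteness of the interaction energy \eqref{cons_energy}. The quantity $(u-1)^2+v^2$ vanishes only at the singular point $(1,0)$, so failure of the lower bound would produce a sequence $t_n$ with $(u(t_n),v(t_n))\to(1,0)$, hence $u^2+v^2\to1$ and $(u-1)^2+v^2\to0$. Using \eqref{Mneq0_r1_r2_r12} together with $\Gamma_1+\Gamma_2\ne0$, I would show that along this sequence $r_1^2$ and $r_2^2$ tend to the common finite nonzero limit $M/(\Gamma_1+\Gamma_2)$, while $r_{12}^2=r_1^2\big((u-1)^2+v^2\big)\to0$. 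Then $\log r_1$ and $\log r_2$ stay finite whereas $\Gamma_1\Gamma_2\log r_{12}\to\mp\infty$, forcing $H$ in \eqref{cons_energy} to diverge and contradicting its constancy. This is exactly the mechanism already used in lemma~\ref{lemma_Mneq0_lowerbound_r1r2}, now applied at the point $(1,0)$ rather than at the origin.

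I expect the only real subtlety to lie in part (i): because the motion may be unbounded one cannot bound $r_1,r_2$ above and then take a ratio, and the crux is recognizing that dividing the conserved $M$ by each distance converts the available one-sided lower bounds into two-sided control of $u^2+v^2$. Part (ii) is then essentially routine once one notes that the hypothesis $\Gamma_1+\Gamma_2\ne0$ is precisely what keeps $r_1$ and $r_2$ bounded away from both zero and infinity as $(u,v)\to(1,0)$, so that the blow-up of $\log r_{12}$ has nothing to cancel it.
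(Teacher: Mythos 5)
Your proof is correct, but it reaches the two bounds by a different route than the paper. The paper's own proof works entirely inside the $(u,v)$ plane with the level-curve function $\Psi$ of~\eqref{Mneq0_levelcurve}: for (i) it substitutes $u=r\cos\theta$, $v=r\sin\theta$ and observes that $\Psi$ diverges as $r\to 0$ or $r\to\infty$, and for (ii) that $\Psi$ diverges as $(u,v)\to(1,0)$ when $\Gamma_1+\Gamma_2\neq 0$, contradicting constancy of $\Psi$ in every case. Your part (i) instead never touches the energy at all: you divide the conserved angular impulse~\eqref{cons_ang} by $r_1^2$ and by $r_2^2$ and combine with the lower bounds of lemma~\ref{lemma_Mneq0_lowerbound_r1r2} to trap $u^2+v^2$ algebraically on both sides. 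This buys two things: it makes it evident that (i) needs no hypothesis on $\Gamma_1+\Gamma_2$, and it sidesteps the (unstated but necessary) verification in the paper's argument that no two divergent logarithmic terms in $\Psi$ cancel in the limits $r\to 0$, $r\to\infty$ --- only a single term diverges in each limit, but the paper does not check this. The cost is a dependence on lemma~\ref{lemma_Mneq0_lowerbound_r1r2}, whereas the paper's proof is self-contained; since that lemma precedes this one, the dependence is legitimate. Your part (ii) is essentially the paper's argument unwound into physical variables: passing through~\eqref{Mneq0_r1_r2_r12} to show $r_1^2,r_2^2\to M/(\Gamma_1+\Gamma_2)\neq 0$ while $r_{12}^2\to 0$, so that $\Gamma_1\Gamma_2\log r_{12}$ blows up with nothing to cancel it in~\eqref{cons_energy}, is exactly the divergence of $\Psi$ at $(1,0)$, and it correctly isolates $\Gamma_1+\Gamma_2\neq 0$ as the reason the first two logarithms stay finite.
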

\begin{proof}
$(i)$ Substitute $u=r\cos\theta$ and $v=r\sin\theta$ in~\eqref{Mneq0_levelcurve} and consider the limit $r\to 0$ and $r\to\infty$. In both cases, the left-hand side of~\eqref{Mneq0_levelcurve} is not finite, which is a contradiction. 
$(ii)$ As $\eta_2\to\eta_1$, the left-hand side of the~\eqref{Mneq0_levelcurve} is not finite. Since expression~\eqref{Mneq0_levelcurve} must always yield a finite constant, we conclude that $(u-1)^2+v^2$ is bounded away from zero.
\end{proof}
Hence whatever be the initial conditions, a trajectory in $(u,v)$ phase plane is always bounded away from the singularity point $\eta_0=(0,0)$. The same can be  said about the singularity point $\eta_1=(1,0)$, if the free vortices are not of equal counter-rotating type, i.e., when $\Gamma_1+\Gamma_2\neq 0$.

We shall now characterize the closed periodic orbits in the $(u,v)$ phase plane. 
\begin{theorem}\label{theorem_Mneq0_periodic_orbits}
A $(u,v)$ phase plane  trajectory is bounded away from the equilibrium points and the singularity point $\eta_1=(1,0)$ for $t\geq0$ (or $t\leq 0$) if and only if it is a closed orbit. 
\end{theorem}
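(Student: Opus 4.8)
The plan is to prove the two implications separately and, thanks to the reversibility of \eqref{Mneq0_udot_vdot} under $(t,v)\mapsto(-t,-v)$, to treat only $t\ge0$ (the case $t\le0$ follows by reflecting across the $u$-axis).

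The forward implication is routine. If the trajectory is a closed orbit $\gamma$, then it is a non-constant periodic solution, so by uniqueness of solutions of \eqref{Mneq0_udot_vdot} it contains no equilibrium, and since it lives in the phase space $\mathbb{R}^2\setminus\{(0,0),(1,0)\}$ it omits $\eta_1=(1,0)$. As $\gamma$ is compact and disjoint from the (finitely many) equilibria and from $(1,0)$, its distance to each of these sets is strictly positive; hence $\gamma$ is bounded away from the equilibria and from $\eta_1$ for all $t$, in particular for $t\ge0$.

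For the converse, assume the forward orbit $O^{+}=\{\eta_2(t):t\ge0\}$ stays bounded away from the equilibria and from $(1,0)$. By Lemma~\ref{lemma_Mneq0_uv_boundedness}$(i)$, $u^2+v^2$ is in addition bounded above and bounded away from zero along $O^{+}$, so $\overline{O^{+}}$ is a compact set $K$ contained in the phase space, disjoint from $(1,0)$ and from a neighbourhood of every equilibrium. On $K$ the factor $\Gamma_1+\Gamma_2(u^2+v^2)$ is continuous and non-vanishing, so by \eqref{Mneq0_r1_r2_r12} the distance $r_1^2=M/\bigl(\Gamma_1+\Gamma_2(u^2+v^2)\bigr)$ is bounded. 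The crucial step is to show that $O^{+}$ must cross the symmetry axis $\{v=0\}$. Suppose instead $v>0$ for all $t\ge0$ (the case $v<0$ is identical). Then by \eqref{r1_sqr_dot_uv_eqn} the quantity $\dot{r_1^2}=\Gamma_2 v/\bigl(\pi((u-1)^2+v^2)\bigr)$ has the fixed sign of $\Gamma_2$, so $r_1^2$ is strictly monotone; being bounded it converges, whence $\int_0^{\infty}\dot{r_1^2}\,dt$ converges. Since $(u-1)^2+v^2$ is bounded on $K$, this forces $\int_0^{\infty} v\,dt<\infty$ and hence $\liminf_{t\to\infty}v(t)=0$. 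Thus there is a sequence $t_n\to\infty$ with $\eta_2(t_n)\to(u_\ast,0)\in K$; as $(u_\ast,0)$ is neither singular nor an equilibrium, the flow is transversal to the axis there [$\dot v=-g(u_\ast,0)\,\rho\neq0$, with $g$ from \eqref{Mneq0_udot_vdot} and $\rho$ the common non-zero prefactor], pointing from $v>0$ into $v<0$. By continuity the orbit would then cross into $v<0$ shortly after some $t_n$, contradicting $v>0$. Hence $O^{+}$ meets $\{v=0\}$, and—applying the same monotonicity argument to the forward orbit issuing from that crossing—it meets $\{v=0\}$ at two distinct times $t_1<t_2$, at points $q_1,q_2$.

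Finally, reversibility converts two axis crossings into periodicity: with $\phi_t$ the flow and $R(u,v)=(u,-v)$, the relation $R\phi_t=\phi_{-t}R$ applied at $q_1,q_2\in\{v=0\}=\mathrm{Fix}(R)$, together with $q_2=\phi_{t_2-t_1}(q_1)$, yields $q_2=\phi_{-(t_2-t_1)}(q_1)$ and hence $\phi_{2(t_2-t_1)}(q_1)=q_1$, so the orbit is closed. The main obstacle is exactly the middle step—excluding an orbit that remains in one half-plane and limits onto the axis only asymptotically—which is where the strict monotonicity of $r_1$ from \eqref{r1_sqr_dot_uv_eqn} and the compactness supplied by Lemma~\ref{lemma_Mneq0_uv_boundedness} do the work. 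As an alternative I would keep in reserve the Poincar\'e--Bendixson route: the $\omega$-limit set of the bounded, equilibrium-free orbit is a periodic orbit $\gamma$ lying on the level curve \eqref{Mneq0_levelcurve}, and since a short computation identifies the critical points of $\Psi$ with the roots of the cubic \eqref{Mneq0_cubic_polynomial} (precisely the equilibria), every point of $\gamma$ is a regular point of $\Psi$; a transversal-monotonicity argument on $\Psi$ then forces $O^{+}=\gamma$.
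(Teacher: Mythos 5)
Your proof is correct, and its main line is genuinely different from the paper's. The paper disposes of the substantive direction in one stroke: it takes a closed $\epsilon$-neighbourhood of the forward orbit, notes that for small $\epsilon$ this compact set contains no equilibria or singularities, and invokes the Poincar\'e--Bendixson theorem together with the fact that a conservative planar system admits no limit cycles, so the orbit must itself be closed; the converse is the same compactness remark you make. Your primary argument avoids Poincar\'e--Bendixson altogether: you use the sign-definiteness of $\dot{r_1^2}$ off the $u$-axis from \eqref{r1_sqr_dot_uv_eqn}, monotone convergence, and the bound $\int_0^\infty v\,dt<\infty$ to show that an orbit confined to one open half-plane would have to accumulate on the axis, where transversality of the flow at non-equilibrium axis points forces an actual crossing; two crossings of $\mathrm{Fix}(R)=\{v=0\}$ plus the reversibility relation $R\phi_t=\phi_{-t}R$ then give periodicity by the standard symmetric-orbit lemma. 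What each approach buys: yours is elementary and self-contained, and it exposes the mechanism---monotonicity of $r_1$ between axis crossings and the reversible symmetry---that actually generates the closed orbits (the same mechanism the paper later exploits in Theorem~\ref{theorem_Mneq0_ultimatum} to build the homoclinic orbit); the paper's route is shorter and requires no axis-crossing analysis. Your reserve alternative is essentially the paper's proof.

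Two minor points, neither a genuine gap. First, at the transversality step you assert the flow points ``from $v>0$ into $v<0$''; the sign of $\dot v$ at the limiting axis point $(u_\ast,0)$ is not determined a priori, but the contradiction survives either way: if $\dot v<0$ near $(u_\ast,0)$ the orbit crosses forward in time, and if $\dot v>0$ then, since the speed is bounded on the compact set $K$, the orbit must have had $v<0$ shortly before a late time $t_n$. Second, your claim that $\Gamma_1+\Gamma_2(u^2+v^2)$ is non-vanishing on $K$ is justified only if the circle $u^2+v^2=-\Gamma_1/\Gamma_2$ (present when $\Gamma_1\Gamma_2<0$), on which the right-hand side of \eqref{Mneq0_udot_vdot} vanishes identically, is counted among ``the equilibrium points'' that the hypothesis excludes; the paper's Poincar\'e--Bendixson argument tacitly requires exactly the same reading, so the two proofs stand on equal footing here.
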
 
\begin{proof}
Let  $\eta_2(t)=(u(t),v(t))$ be a trajectory bounded away from the equilibrium points and the singularity point $\eta_1$.  Consider the set $S_1=\{\eta_2(t)|t\geq0\}$. For $\epsilon>0$, define the set $S_2=\{(u,v)|\inf_{(u',v')\in S_1} (u-u')^2+(v-v'^2)\leq \epsilon\}$. The set $S_2$ is a compact set in $\mathbb{R}^2$ that contains $S_1$. For sufficiently small $\epsilon$, $S_2$ does not contain any of the equilibrium or singularity points. Hence from the Poincar\'{e}-Bendixson theorem either $S_1$ is a closed orbit, or it spirals towards a limit cycle. Since a conservative system in $\mathbb{R}^2$ cannot have a limit cycle, we conclude that $\eta_2$ is periodic in time.

  Conversely, if $\eta_2$ is closed and periodic, $S_1$ must be bounded away from the equilibrium points and the singularity point $\eta_1=(1,0)$. Note that if we replace $t\geq0$ by $t\leq0$ in the theorem, the result still holds.
\end{proof}
\begin{corollary}
\label{corollary_Mneq0_classsification}
Any non-equilibrium $(u,v)$ phase plane trajectory bounded away from the singularity point $\eta_1=(1,0)$ for $t\geq 0$ ($t\leq0$) is either a stable (unstable) separatrix of a saddle equilibrium point, or a periodic trajectory. 
\end{corollary}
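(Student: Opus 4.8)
The plan is to dichotomize on whether the forward orbit also stays away from the equilibrium set, and to use the conserved level-curve function $\Psi$ from~\eqref{Mneq0_levelcurve} to pin down the remaining case. First I would record a boundedness fact that makes the hypothesis easy to work with: by Lemma~\ref{lemma_Mneq0_uv_boundedness}(i) every trajectory has $u^2+v^2$ bounded above and bounded away from zero, so the forward orbit $\{\eta_2(t):t\geq 0\}$ automatically lies in a compact annular region avoiding the singularity $\eta_0=(0,0)$; combined with the hypothesis it also avoids $\eta_1=(1,0)$. If, in addition, the orbit is bounded away from all equilibrium points, then Theorem~\ref{theorem_Mneq0_periodic_orbits} applies verbatim and the trajectory is a closed, hence periodic, orbit; this yields the second alternative.

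The substance is the complementary case, in which the forward orbit is \emph{not} bounded away from the equilibria. Since the equilibria are the finitely many real roots of the cubic~\eqref{Mneq0_cubic_polynomial} on the $u$-axis, and a non-equilibrium solution cannot reach one of them in finite time by uniqueness, there must be a sequence $t_n\to\infty$ with $\eta_2(t_n)\to p$ for some equilibrium $p$. Because $\Psi$ is constant along the flow, $\Psi(u(t),v(t))\equiv c$, continuity forces $c=\Psi(p)$, so the orbit lies on the critical level set through $p$. I would then exclude $p$ being a center: at a center $\Psi$ has a strict local extremum, so the only point of $\{\Psi=c\}$ in a neighborhood of $p$ is $p$ itself, contradicting that the distinct points $\eta_2(t_n)\neq p$ approach $p$ along this level set. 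By the center--saddle dichotomy established in Sec.~\ref{subsubsec:Mneq0_equilibrium}, $p$ is therefore a saddle.

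Finally I would identify the orbit as a stable separatrix. Near the saddle $p$ the critical level set $\{\Psi=c\}$ is exactly the union of $p$ with its separatrix branches, and any orbit lying in this set is a single such branch. Since $\eta_2$ is an orbit on $\{\Psi=c\}$ having $p$ in its forward limit set, it must be a branch that tends to $p$ as $t\to+\infty$, i.e.\ a stable separatrix of the saddle $p$, giving the first alternative. The statement with $t\geq0$ replaced by $t\leq0$ then follows by applying the same argument to the time-reversed flow, using the reversibility of~\eqref{Mneq0_udot_vdot} noted immediately after it; under time reversal stable separatrices become unstable ones.

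The delicate point to get right is the claim that the orbit is forced onto the critical level set and is therefore a genuine separatrix branch, rather than merely spiralling toward a separatrix cycle from a nearby level. This is precisely where the constancy of $\Psi$ does the essential work: it confines the whole orbit to $\{\Psi=\Psi(p)\}$, so the orbit cannot approach the saddle's separatrix structure from a noncritical level and must itself be part of that structure. I would be careful to state the local description of $\{\Psi=c\}$ near a saddle (that it consists of $p$ together with its separatrices) and to invoke uniqueness to conclude that a single orbit occupies a single branch.
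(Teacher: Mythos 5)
Your proposal is correct and takes essentially the same route as the paper's (very terse) proof: the same dichotomy between orbits bounded away from the equilibria, which Theorem~\ref{theorem_Mneq0_periodic_orbits} immediately makes closed and periodic, and orbits accumulating at an equilibrium, which the center/saddle classification of Sec.~\ref{subsubsec:Mneq0_equilibrium} then forces to be separatrices of a saddle. Your use of the conserved function $\Psi$ from~\eqref{Mneq0_levelcurve} to rule out accumulation at a center and to identify the orbit with a separatrix branch is a sound way of filling in exactly the step the paper leaves implicit (the two local level-set facts you flag are indeed the standard ones underlying the paper's one-line argument, and the center exclusion could alternatively be done directly from the fact that a center is surrounded by closed orbits, each at positive distance from it).
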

\begin{proof}
Directly follows from theorem~\ref{theorem_Mneq0_periodic_orbits},
and the fact that equilibrium points of system~\eqref{Mneq0_udot_vdot} are either centers or saddles (see Sec.~\ref{subsubsec:Mneq0_equilibrium}).
\end{proof}
\begin{remark}
\label{remark_sign_convention}
For $\Gamma_1\Gamma_2<0$, we observe the following. Since $M\neq 0$, the trajectories in the $(u,v)$ phase planes are contained in either the interior or the exterior of the circle given by  $u^2+v^2=-\Gamma_1/\Gamma_2$. We can always reduce the $(u,v)$ phase plane trajectory to that of former type by appropriately indexing the free vortices. In other words, it is enough to study the $(u,v)$ phase plane dynamics for $u^2+v^2\leq -\Gamma_1/\Gamma_2$. 
\end{remark}
Next, we shall show that for $M\neq0$ case, the vortex motion is unbounded only if free vortices are of equal counter-rotating type, i.e., $\Gamma_1+\Gamma_2=0$.
\begin{lemma}
\label{lemma_Mneq0_g1pg2neq0case}
If $\Gamma_1+\Gamma_2\neq 0$, then the vortex motion is bounded, i.e., $r_1$ and $r_2$ are bounded above for all time.
\begin{proof}
We will consider two cases (i) $\Gamma_1\Gamma_2>0$ and (ii) $\Gamma_1\Gamma_2<0$.

(i) When $\Gamma_1\Gamma_2>0$, \eqref{cons_ang} represents an ellipse, 
and therefore $r_1$ and $r_2$ are bounded above. 

(ii) When $\Gamma_1\Gamma_2<0$,
it is seen from~\eqref{cons_ang} that $r_1$ and $r_2$ can only tend to infinity simultaneously.
 Hence it suffices to show that $r_1$ is bounded above. Consider a trajectory $\eta_2(t)$ in $(u,v)$ phase plane for $t>0$ ($t<0$ case follows similarly) with $\eta_2|_{t=0}=(u_0,v_0)$. WLOG, we assume that $u_0^2+v_0^2<-\Gamma_1/\Gamma_2$. The trajectory $\eta_2(t)=\left(u(t),v(t)\right)$ lies in the interior of  the circle $C_\kappa=\{(u,v)|u^2+v^2=\kappa^2\}$, where $\kappa=\sqrt{-\Gamma_1/\Gamma_2}>0$;  the elements of $C_\kappa$ correspond to the $M=0$ case. From corollary~\ref{corollary_Mneq0_classsification} 
and lemma~\ref{lemma_Mneq0_uv_boundedness}, it follows that $\eta_2$ is either a periodic orbit or a stable separatrix of a saddle point. Both these cases correspond to bounded $r_1$,  except when $\eta_2$ is a stable separatrix of a saddle equilibrium $(\tilde{u},0)$ with $|\tilde{u}|=\kappa$. 
However, such a trajectory is not possible from the continuity of solutions. Since from expression~\eqref{Mneq0_r1_r2_r12} it would imply, $r_1$ must tend to infinity as $t$ tends to infinity, and by continuity, we shall have $r_1(\tilde{u},0)$ to be infinite. This is a contradiction to the fact that points on the $u$-axis correspond to a finite $r_1(u,v)$ value for $M=0$ case when $\Gamma_1+\Gamma_2\neq 0$. So that the trajectory $\eta_2$ is bounded away from $C_\kappa$ when $\Gamma_1+\Gamma_2\neq 0$, and the vortex motion is bounded in all cases.
\end{proof}
\end{lemma}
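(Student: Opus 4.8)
The plan is to split the argument on the sign of $\Gamma_1\Gamma_2$, because the conserved quantity $M=\Gamma_1 r_1^2+\Gamma_2 r_2^2$ in~\eqref{cons_ang} behaves completely differently in the two regimes. When $\Gamma_1\Gamma_2>0$ the level set $\{\Gamma_1 r_1^2+\Gamma_2 r_2^2=M\}$ is an ellipse in the $(r_1,r_2)$–quadrant, so $r_1$ and $r_2$ are trivially bounded and nothing further is needed. The whole difficulty is thus concentrated in the case $\Gamma_1\Gamma_2<0$, where~\eqref{cons_ang} is a hyperbola that a priori permits $r_1,r_2\to\infty$. A first reduction here is to notice that, since $\Gamma_1$ and $\Gamma_2$ carry opposite signs, solving~\eqref{cons_ang} for $r_2^2$ shows that $r_1$ and $r_2$ can only escape to infinity simultaneously; hence it suffices to prove that $r_1$ alone remains bounded.

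Next I would transfer the question to the $(u,v)$ phase plane. By remark~\ref{remark_sign_convention} the free vortices may be re-indexed so that $\eta_2(t)$ lies in the interior of the circle $C_\kappa=\{u^2+v^2=\kappa^2\}$ with $\kappa=\sqrt{-\Gamma_1/\Gamma_2}$, which is exactly the locus on which the denominator $\Gamma_1+\Gamma_2(u^2+v^2)$ appearing in~\eqref{Mneq0_r1_r2_r12} vanishes. Reading~\eqref{Mneq0_r1_r2_r12} then yields the clean dictionary $r_1\to\infty\iff\eta_2(t)\to C_\kappa$, so the lemma reduces to the purely geometric statement that an interior trajectory stays bounded away from $C_\kappa$. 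To settle this I would invoke the classification already in hand: since $\Gamma_1+\Gamma_2\neq 0$, lemma~\ref{lemma_Mneq0_uv_boundedness}(ii) keeps the trajectory bounded away from $\eta_1=(1,0)$, so corollary~\ref{corollary_Mneq0_classsification} applies and $\eta_2$ is either a periodic orbit or a stable separatrix of a saddle. A periodic orbit is a compact curve, every point of which is reached at a finite time where $r_1$ is finite by~\eqref{Mneq0_r1_r2_r12}; it therefore cannot touch $C_\kappa$ and stays uniformly away from it, giving bounded $r_1$. For a stable separatrix, $\eta_2(t)$ converges as $t\to\infty$ to a saddle equilibrium, and the genuine equilibria of~\eqref{Mneq0_udot_vdot} are the roots of the cubic~\eqref{Mneq0_cubic_polynomial}, all of which satisfy $\Gamma_1+\Gamma_2 u^2\neq 0$ (Sec.~\ref{subsubsec:Mneq0_equilibrium}) and hence lie strictly off $C_\kappa$, so the limit value of $r_1$ is finite.

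The crux, and the step I expect to fight with, is ruling out the single borderline scenario in which the separatrix sidles up to $C_\kappa$ itself, i.e.\ limits onto a degenerate saddle-like point $(\tilde u,0)$ with $|\tilde u|=\kappa$, where the reduced field~\eqref{Mneq0_udot_vdot} degenerates and $r_1$ would blow up. I would exclude this by a continuity argument: such a limit would force $\lim_{t\to\infty}r_1(t)=\infty$ through~\eqref{Mneq0_r1_r2_r12}, yet a point on the $u$-axis carries the finite $r_1$ value of the $M=0$ configuration there, a contradiction. Once this case is eliminated, the forward orbit together with its limit point is a compact set lying strictly inside $C_\kappa$, so $\eta_2$ is bounded away from $C_\kappa$, $r_1$ is bounded, and with it $r_2$; the $t\le 0$ branch follows identically from the reversibility of~\eqref{Mneq0_udot_vdot}.
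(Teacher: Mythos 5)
Your proposal is correct and follows essentially the same route as the paper's own proof: the same split on the sign of $\Gamma_1\Gamma_2$, the ellipse argument, the reduction to bounding $r_1$ inside $C_\kappa$ via remark~\ref{remark_sign_convention} and~\eqref{Mneq0_r1_r2_r12}, the periodic-orbit/separatrix dichotomy from lemma~\ref{lemma_Mneq0_uv_boundedness} and corollary~\ref{corollary_Mneq0_classsification}, and the same continuity contradiction (finite $r_1$ on the $u$-axis in the $M=0$ sense) to exclude a separatrix limiting onto a saddle on $C_\kappa$. The only cosmetic difference is that you spell out the compactness of periodic orbits and invoke reversibility explicitly for $t\le 0$, which the paper leaves implicit.
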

We now show that free vortices always stay close to each other.

\begin{lemma}
\label{lemma_Mneq0_r12_bounded}
$r_{12}$ is bounded away from zero and bounded above for all time.
\begin{proof}

If the vortex motion is bounded, then combined with lemma~\ref{lemma_Mneq0_lowerbound_r1r2}, the vortex distances  $r_1$ and $r_2$ are bounded away from zero and bounded above. It follows from~\eqref{cons_energy} that $\Gamma_1\Gamma_2\log r_{12}$ is bounded. Consequently,  $r_{12}$ must be bounded on both sides.

If the vortex motion is unbounded then from lemma~\ref{lemma_Mneq0_g1pg2neq0case}, $\Gamma_1+\Gamma_2$ must be equal to zero and~\eqref{cons_energy} simplifies to
\begin{equation}
-4\pi H=\Gamma_0\Gamma_2\log(u^2+v^2)-\Gamma_2^2\log(r_{12}^2).
\end{equation}
In the above equation, $u^2+v^2$ term is bounded on both sides (see lemma~\ref{lemma_Mneq0_uv_boundedness}), and the right-hand side is a finite constant. Therefore, $r_{12}$ is bounded away from zero and bounded above.
\end{proof}
\end{lemma}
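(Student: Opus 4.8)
The plan is to case-split on whether the vortex motion is bounded or unbounded, and in each case to extract finiteness of $r_{12}$ directly from the conservation of energy~\eqref{cons_energy}, using the two-sided bounds on $r_1$, $r_2$ and $u^2+v^2$ supplied by the earlier lemmas.

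First I would dispose of the bounded case. If the motion is bounded, then $r_1$ and $r_2$ are bounded above, and by Lemma~\ref{lemma_Mneq0_lowerbound_r1r2} they are also bounded away from zero, so both are two-sidedly controlled. The upper bound on $r_{12}$ is then immediate from the triangle inequality $r_{12}\le r_1+r_2$. For the lower bound I would read off from~\eqref{cons_energy} that, since $\Gamma_0\Gamma_1\log r_1$ and $\Gamma_0\Gamma_2\log r_2$ are bounded while $H$ is a finite constant, the remaining term $\Gamma_1\Gamma_2\log r_{12}$ is forced to be bounded; because $\Gamma_1\Gamma_2\neq 0$ this keeps $r_{12}$ bounded away from zero (and the same relation reconfirms boundedness above).

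Next I would treat the unbounded case, where Lemma~\ref{lemma_Mneq0_g1pg2neq0case} applies contrapositively: unbounded motion can occur only when $\Gamma_1+\Gamma_2=0$, i.e.\ $\Gamma_1=-\Gamma_2$. Substituting this into~\eqref{cons_energy} (equivalently into~\eqref{Mneq0_r1_r2_r12}) and using $r_2^2/r_1^2=u^2+v^2$ collapses the energy relation to
\begin{equation*}
-4\pi H=\Gamma_0\Gamma_2\log(u^2+v^2)-\Gamma_2^2\log(r_{12}^2).
\end{equation*}
By Lemma~\ref{lemma_Mneq0_uv_boundedness}(i) the quantity $u^2+v^2$ is bounded on both sides, $H$ is a finite constant, and $\Gamma_2\neq 0$; hence $\log(r_{12}^2)$ is bounded, and $r_{12}$ is again bounded away from zero and above.

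Since the two cases are exhaustive, the claim follows. I do not expect a genuine obstacle here: the whole argument is a matter of reading finiteness off the logarithmic energy integral once the auxiliary lemmas provide two-sided control on $r_1$, $r_2$ and $u^2+v^2$. The one point needing care is that the triangle-inequality route for the upper bound is unavailable in the unbounded branch (there $r_1,r_2$ need not be bounded above), so the degenerate counter-rotating case $\Gamma_1+\Gamma_2=0$ must instead be handled through the simplified energy identity, in which the troublesome $r_1,r_2$ factors cancel and only the controlled quantity $u^2+v^2$ remains.
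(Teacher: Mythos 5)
Your proposal is correct and follows essentially the same route as the paper: the same case split into bounded and unbounded motion, the same use of Lemmas~\ref{lemma_Mneq0_lowerbound_r1r2}, \ref{lemma_Mneq0_g1pg2neq0case}, and~\ref{lemma_Mneq0_uv_boundedness}, and the same simplified energy identity $-4\pi H=\Gamma_0\Gamma_2\log(u^2+v^2)-\Gamma_2^2\log(r_{12}^2)$ in the counter-rotating branch. Your extra remark invoking the triangle inequality for the upper bound in the bounded case is a harmless redundancy, since the energy relation already gives both bounds exactly as in the paper.
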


In the following lemma, we physically characterize the $(u,v)$ phase plane trajectories that converge to the singularity point $\eta_1=(1,0)$.
\begin{lemma}\label{lemma_Mneq0_eta1_convergence}
The vortex motion is unbounded if and only if $(u-1)^2+v^2 $ tends to zero, i.e., the $(u,v)$ trajectory tends to the singularity point $\eta_1=(1,0)$.
\end{lemma}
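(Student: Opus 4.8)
The plan is to prove both implications by reducing everything to the counter-rotating condition $\Gamma_1+\Gamma_2=0$ and then exploiting the explicit formulas~\eqref{Mneq0_r1_r2_r12} together with the identity $r_{12}^2=r_1^2\big((u-1)^2+v^2\big)$ from~\eqref{r2byr1_r12byr1}. The two ``$\Gamma_1+\Gamma_2=0$'' reductions come essentially for free from the lemmas already proved, and the substance lies in pinning down \emph{which} point the trajectory approaches.

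For the reverse implication, suppose $(u-1)^2+v^2\to 0$, so that $\eta_2\to(1,0)$ and hence $u^2+v^2\to 1$. By Lemma~\ref{lemma_Mneq0_uv_boundedness}(ii), the quantity $(u-1)^2+v^2$ is bounded away from zero whenever $\Gamma_1+\Gamma_2\neq 0$; therefore convergence to $(1,0)$ forces $\Gamma_1+\Gamma_2=0$, i.e.\ $\Gamma_2=-\Gamma_1$. Substituting this into the first line of~\eqref{Mneq0_r1_r2_r12} gives $r_1^2=M/\big(\Gamma_1(1-(u^2+v^2))\big)$, whose denominator tends to zero while $r_1^2$ stays positive, so $r_1\to\infty$ and the motion is unbounded.

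For the forward implication, assume the motion is unbounded. Lemma~\ref{lemma_Mneq0_g1pg2neq0case} shows that boundedness holds whenever $\Gamma_1+\Gamma_2\neq 0$, so unboundedness again forces $\Gamma_2=-\Gamma_1$, and the same formula $r_1^2=M/\big(\Gamma_1(1-(u^2+v^2))\big)$ shows that $r_1\to\infty$ is equivalent to $u^2+v^2\to 1$; thus along any time sequence on which $r_1$ blows up, the trajectory approaches the circle $u^2+v^2=1$. The key step is to identify which point of this circle is approached, and here the identity $r_{12}^2=r_1^2\big((u-1)^2+v^2\big)$ together with the boundedness of $r_{12}$ (Lemma~\ref{lemma_Mneq0_r12_bounded}) does the work: if a subsequence of $\eta_2$ accumulated at a point $(u^\ast,v^\ast)$ on the unit circle with $(u^\ast,v^\ast)\neq(1,0)$, then $(u^\ast-1)^2+(v^\ast)^2=2(1-u^\ast)>0$, so $r_{12}^2=r_1^2\big((u-1)^2+v^2\big)\to\infty$ along that subsequence, contradicting the boundedness of $r_{12}$. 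Hence every accumulation point on the circle equals $(1,0)$, giving $(u-1)^2+v^2\to 0$.

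The remaining obstacle is to upgrade this subsequential statement to genuine convergence of the whole trajectory to $(1,0)$, and I expect this to be the most delicate part. I would close the gap using Corollary~\ref{corollary_Mneq0_classsification}: an unbounded trajectory cannot be periodic (periodicity forces $r_1$ bounded) nor a separatrix of a saddle (a separatrix converges to an equilibrium $(u_e,0)$ at which $r_1$ is finite), so by the corollary it cannot be bounded away from $\eta_1=(1,0)$ for $t\ge 0$ (or $t\le 0$). Combined with the accumulation argument above and the fact that, by Remark~\ref{remark_sign_convention}, the orbit may be taken confined to the disc $u^2+v^2\le 1$ on whose boundary $(1,0)$ is the only admissible limit point, the trajectory must in fact tend to $(1,0)$; the two equivalences with $\Gamma_1+\Gamma_2=0$ are immediate from the cited lemmas, so ruling out an oscillatory approach to the singular boundary point is the only genuinely nontrivial ingredient.
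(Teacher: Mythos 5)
Your reverse implication is sound, and so is the first half of your forward implication: unboundedness forces $\Gamma_1+\Gamma_2=0$ (Lemma~\ref{lemma_Mneq0_g1pg2neq0case}), and every accumulation point of the trajectory \emph{on the unit circle} must be $(1,0)$, by the identity $r_{12}^2=r_1^2\left((u-1)^2+v^2\right)$ and Lemma~\ref{lemma_Mneq0_r12_bounded}. The genuine gap is exactly the step you flag at the end. The three facts you combine --- confinement to the disc $u^2+v^2\le 1$, the existence of a sequence $t_n$ with $\eta_2(t_n)\to(1,0)$ (via Corollary~\ref{corollary_Mneq0_classsification}), and the absence of other boundary accumulation points --- do \emph{not} imply that the whole trajectory tends to $(1,0)$: nothing you cite excludes accumulation points in the \emph{interior} of the disc. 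A priori the orbit could swing close to $(1,0)$, retreat into the interior, and repeat; such an orbit has $\limsup r_1=\infty$ (so the motion is unbounded in your sense) while $(u-1)^2+v^2$ does not tend to zero. Ruling this out requires a dynamical input absent from your write-up: by~\eqref{r1_sqr_dot_uv_eqn}, $\dot{r_1^2}$ has a fixed sign on each open half-plane, and since $\dot u=0$ whenever $v=0$ by~\eqref{Mneq0_udot_vdot}, reversibility plus uniqueness of solutions shows that a trajectory meeting the $u$-axis at two distinct times is a closed orbit. Hence a non-periodic trajectory crosses the axis at most once, $r_1$ is eventually monotone, unboundedness then forces the genuine limit $r_1\to\infty$, and finally $(u-1)^2+v^2=r_{12}^2/r_1^2\to 0$ by the upper bound on $r_{12}$.

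For comparison, the paper's own proof is a one-liner and takes a different (much shorter) route: since $(u-1)^2+v^2=r_{12}^2/r_1^2$ and Lemma~\ref{lemma_Mneq0_r12_bounded} traps $r_{12}$ between two positive constants, one has $(u-1)^2+v^2\to 0$ if and only if $r_1\to\infty$; the paper implicitly reads ``unbounded motion'' as escape to infinity, $r_1\to\infty$. Under that reading, your entire reduction to $\Gamma_1+\Gamma_2=0$ and the accumulation-point analysis are unnecessary, and the subsequence issue you wrestle with never arises. Under the weaker reading $\sup_t r_1=\infty$, the monotonicity-plus-reversibility argument sketched above is the missing ingredient needed to close your proof (and, strictly speaking, to justify the paper's as well).
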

\begin{proof} 
Follows from~\ref{lemma_Mneq0_r12_bounded} and~\eqref{r2byr1_r12byr1}.

%
\end{proof}
From lemma~\ref{lemma_Mneq0_uv_boundedness}, we know that all $(u,v)$ phase plane trajectories are bounded away from the singularity point $\eta_0=(0,0)$. However, this is not the case for the second singularity point $\eta_1=(1,0)$. We may have a $(u,v)$ trajectory converging to $\eta_1$ in the equal counter-rotating case, and lemma~\ref{lemma_Mneq0_eta1_convergence} states that this physically corresponds to an unbounded vortex motion, and vice-versa. The existence of initial conditions leading to such trajectories is explained in theorem~\ref{theorem_Mneq0_ultimatum}.  

\begin{lemma} 
For both $t>0$ and $t<0$ a non-equilibrium $(u,v)$ phase plane trajectory must either (i) intersect the $u$-axis in finite time, or (ii) converge to a saddle equilibrium point or the singularity point $\eta_1=(1,0)$.
\label{lemma_Mneq0_periodic_or_asymptiotic}
\begin{proof} 
Let $\eta_2(t)=(u(t),v(t))$ be a non-equilibrium trajectory.
WLOG we may assume $\eta_2|_{t=0}=(u_0,v_0)$ and $v_0\neq 0$.
Let us only look at the case $t>0$, as similar lines of arguments can be given for $t<0$. If the vortex motion is unbounded for $t>0$, then the corresponding $(u,v)$ trajectory must tend to the singularity point $\eta_1=(1,0)$ as $t$ tends to infinity (see lemma~\ref{lemma_Mneq0_eta1_convergence}). Now suppose that $r_1$ is bounded and that for $t>0$, the trajectory $\eta_2$ does not intersect the $u$-axis. Thus, $\eta_2$ cannot be a periodic trajectory and from corollary~\ref{corollary_Mneq0_classsification}, it must tend to a saddle equilibrium point as $t$ tends to infinity.
\end{proof}
\end{lemma}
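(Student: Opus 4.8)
The plan is to reduce to a single forward orbit starting off the $u$-axis and then run a boundedness dichotomy through the classification results already assembled. First I would record two reductions. Because on the $u$-axis the field~\eqref{Mneq0_udot_vdot} has $\dot u = 0$ identically (the factor $v$ multiplies $\dot u$), a non-equilibrium point with $v=0$ must have $\dot v\neq 0$; such an orbit therefore crosses the axis transversally and conclusion (i) already holds at $t=0$. Hence I may assume $\eta_2|_{t=0}=(u_0,v_0)$ with $v_0\neq 0$. Because~\eqref{Mneq0_udot_vdot} is reversible under $t\mapsto -t,\ v\mapsto -v$, it is enough to settle the $t>0$ assertion; the $t<0$ version is its mirror image.

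Next I would split on whether the vortex motion is bounded for $t>0$. If it is unbounded, lemma~\ref{lemma_Mneq0_eta1_convergence} applies verbatim: unbounded motion forces $(u-1)^2+v^2\to 0$, i.e.\ the orbit converges to the singularity point $\eta_1=(1,0)$, which is alternative (ii). All remaining work is therefore in the bounded case, and within it I may assume the orbit never meets the $u$-axis for $t>0$ (otherwise (i) holds). Under this assumption the orbit is confined to one open half-plane, say $v>0$, so by~\eqref{r1_sqr_dot_uv_eqn} the sign of $\dot{r_1^2}$ is constant and $r_1$ is strictly monotone, hence non-periodic. Since the extrema of $r_1$ and $r_2$ can be attained only on the $u$-axis [an immediate reading of~\eqref{r1_sqr_dot_uv_eqn}--\eqref{r2_sqr_dot_uv_eqn}], this confirms that an orbit avoiding the axis cannot be a closed trajectory.

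To close the bounded case I would invoke corollary~\ref{corollary_Mneq0_classsification}: a bounded non-equilibrium orbit that stays away from $\eta_1$ is either periodic or a stable separatrix of a saddle. Having excluded periodicity, the orbit must be a stable separatrix and therefore converge to a saddle equilibrium as $t\to\infty$, which is again alternative (ii). Reflecting through $v\mapsto -v$ then yields the $t<0$ statement, completing both halves.

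The hard part will be discharging the hypothesis of corollary~\ref{corollary_Mneq0_classsification} --- that the bounded orbit is bounded away from $\eta_1$ --- since this is precisely what legitimizes the underlying Poincar\'{e}--Bendixson step. When $\Gamma_1+\Gamma_2\neq 0$ this is free: part (ii) of lemma~\ref{lemma_Mneq0_uv_boundedness} bounds $(u-1)^2+v^2$ away from zero outright. The delicate regime is the counter-rotating one $\Gamma_1+\Gamma_2=0$, where $\eta_1$ lies on the invariant circle $u^2+v^2=1$ and the vector field degenerates there, so lemma~\ref{lemma_Mneq0_uv_boundedness} is silent. Here I would show directly that a bounded forward orbit stays bounded away from $\eta_1$: outright convergence to $\eta_1$ is excluded by the contrapositive of lemma~\ref{lemma_Mneq0_eta1_convergence}, and the one remaining possibility --- the orbit repeatedly approaching $\eta_1$ arbitrarily closely without converging --- must be ruled out by a local analysis of the degenerate singularity together with the monotonicity of $r_1$ off the axis and lemma~\ref{lemma_Mneq0_g1pg2neq0case}. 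Separating genuine convergence to $\eta_1$ from mere recurrence near it is the only step that is not routine bookkeeping; everything else is assembling the established lemmas and the corollary.
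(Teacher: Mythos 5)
Your proposal follows essentially the same route as the paper's proof: reduce to $v_0\neq 0$, use reversibility to treat only $t>0$, split on boundedness of the vortex motion, dispatch the unbounded case with lemma~\ref{lemma_Mneq0_eta1_convergence}, and in the bounded case exclude periodicity (no axis crossing implies $r_1$ strictly monotone by~\eqref{r1_sqr_dot_uv_eqn}) and invoke corollary~\ref{corollary_Mneq0_classsification} to conclude convergence to a saddle. Your explicit handling of the $v_0=0$ edge case (transversal crossing at $t=0$, so alternative (i) holds outright) is a small refinement the paper passes over with a bare ``WLOG.''

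The one place you diverge is the step you yourself flag as ``the hard part'': verifying, in the counter-rotating case $\Gamma_1+\Gamma_2=0$, that a bounded orbit avoiding the axis is actually \emph{bounded away} from $\eta_1$ (as the corollary's hypothesis demands), not merely non-convergent to it. You correctly identify that lemma~\ref{lemma_Mneq0_uv_boundedness} is silent here and that the contrapositive of lemma~\ref{lemma_Mneq0_eta1_convergence} excludes only convergence, not recurrence arbitrarily close to $\eta_1$; but the ``local analysis of the degenerate singularity'' you propose to close this is never carried out, so as written your proof is incomplete at exactly that point. The gap closes in one line, with no local analysis: lemma~\ref{lemma_Mneq0_r12_bounded} holds in the counter-rotating case too (its proof treats $\Gamma_1+\Gamma_2=0$ explicitly), and by~\eqref{r2byr1_r12byr1} one has $(u-1)^2+v^2=r_{12}^2/r_1^2$. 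If the motion is bounded, say $r_1\leq R$, and $r_{12}^2\geq c>0$, then $(u-1)^2+v^2\geq c/R^2>0$ uniformly along the orbit, which rules out both convergence and mere recurrence near $\eta_1$. This is the same fact the paper uses silently when it applies corollary~\ref{corollary_Mneq0_classsification}; you spotted the logical subtlety the paper glosses over, but missed that lemma~\ref{lemma_Mneq0_r12_bounded} --- already in hand --- discharges it immediately.
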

Next, we show that the singularity point $\eta_0=(0,0)$ has an index $+1$, and hence there is always a region of closed trajectories surrounding the origin in the $(u,v)$ phase plane.

\begin{lemma}
The origin has an index $+1$.
\begin{proof}
We shall show that there exists a closed trajectory which contains the origin but none of the equilibrium points or the singularity point $\eta_1$. Consider the open ball $B_d(0)=\{(u,v)\big|u^2+v^2<d^2\}$, where $d=1/2\,\min\{|u|\,\big|p(u)=0\}$. By construction, $B_d(0)$ does not contain any of the equilibrium points or the singularity point $(1,0)$. We shall try to find a point $(u_0,v_0)\in B_d(0)$  such that the unique trajectory $\eta_2$ that passes through $(u_0,v_0)$ is contained in $B_d(0)$ for all time. Let $h(u,v)=u^2+v^2$, we have $\dot{h}=-v\Big(\Gamma_1+\Gamma_2(u^2+v^2)\Big)^2/4M\pi\Big((u-1)^2+v^2\Big)$. Hence points on the $u$-axis are either a minimum or maximum for the function $h$. It can be verified that the sign of the second derivative $\ddot{h}$ depends only on the sign of $\dot{v}$. The expression for $\dot{v}$ evaluated on the $u$-axis is
\begin{equation*}
\dot{v}|_{(u,0)}=-\frac{\Gamma_0(\Gamma_1+\Gamma_2u^2)}{2M\pi u^2(u-1)^2}\times p(u)\times u\times (u-1).
\end{equation*}
In the above expression, term  $\Gamma_0(\Gamma_1+\Gamma_2u^2)/2M\pi u^2(u-1)^2$  has a constant sign irrespective of the sign of $u$. In addition, $p(u)>0$ for any $u$ in a sufficiently small neighbourhood of the origin from the continuity of $p$ and the fact that $p(0)=1$. Hence by appropriately choosing $(u,0)$  negative or positive from a sufficiently close neighbourhood of the origin, we can make sure that $\ddot{h}<0$, a maximum for the function $h$. Therefore, a trajectory $\eta_2$ that originates at this maximum point of $h$ would be contained in $B_d(0)$ for all time. From theorem~\ref{theorem_Mneq0_periodic_orbits}, it follows that $\eta_2$ is a closed trajectory. Since any closed trajectory should contain at least one equilibrium or singularity point, the origin must be in the interior of this trajectory and therefore has an index $+1$.
\end{proof} 
\end{lemma}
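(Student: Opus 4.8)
The plan is to compute the Poincar\'e index of the origin directly, as the winding number of the direction field of~\eqref{Mneq0_udot_vdot} along a small circle $u^2+v^2=r^2$, by extracting the dominant behaviour of $(\dot u,\dot v)$ as $r\to 0$. First I would evaluate the polynomials $f$ and $g$ near $(0,0)$. Since all circulations are nonzero, $f(0,0)=-\Gamma_0\neq 0$, while $g=uf+(u^2+v^2)\big(\Gamma_1+\Gamma_2(u^2+v^2)\big)$ vanishes at the origin with leading part $-\Gamma_0 u$. The common scalar prefactor in~\eqref{Mneq0_udot_vdot} behaves like $\Gamma_1/(2M\pi r^2)$ as $r\to 0$, because $(u-1)^2+v^2\to 1$. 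Substituting, on the circle of radius $r$ the field is
\[
(\dot u,\dot v)=\frac{\Gamma_0\Gamma_1}{2M\pi r^2}\,(-v,\,u)+R(u,v),
\]
where the displayed rotational term has magnitude of order $r^{-1}$ and the remainder $R$ is of strictly lower order; hence the direction of $(\dot u,\dot v)$ converges uniformly to that of $(-v,u)$ as $r\to 0$.

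I would then read off the index. On each fixed circle the leading term is a constant (positive or negative) multiple of the rotational field $(-v,u)$, whose winding number about the origin is $+1$; multiplying a field by a nonzero scalar of either sign shifts every direction angle by a constant and so leaves the winding number unchanged. It remains to promote the leading term to the full field: for $r$ small enough the remainder $R$ cannot cancel the leading term, so the straight-line homotopy between $(\dot u,\dot v)$ and $\frac{\Gamma_0\Gamma_1}{2M\pi r^2}(-v,u)$ stays nowhere zero, and the two winding numbers coincide, giving index $+1$. The main obstacle is precisely this last estimate: the prefactor and the subleading pieces of $g$ contribute at comparable orders in $1/r$, so one must bound $R$ below the order-$r^{-1}$ magnitude of the rotational term uniformly on the circle to guarantee the homotopy never vanishes.

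An alternative that leans on the machinery already built is to invoke the index theorem: exhibit a closed orbit surrounding the origin but enclosing neither an equilibrium (a real root of $p$ in~\eqref{Mneq0_cubic_polynomial}) nor the singularity $\eta_1=(1,0)$. Taking a ball $B_d(0)$ small enough to exclude these points, I would trap a trajectory inside it by studying $h=u^2+v^2$: on the $u$-axis $\dot h\propto v=0$, so axis points are critical for $h$ along the flow, and the sign of $\ddot h$ follows that of $\dot v|_{(u,0)}$, which factors as a constant-sign prefactor times $p(u)\,u\,(u-1)$. Since $p(0)=1>0$, choosing the appropriate side of the origin forces a local maximum of $h$ and traps the trajectory; by Theorem~\ref{theorem_Mneq0_periodic_orbits} it is then a closed orbit. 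As any closed orbit must enclose a singular point and only the origin is available inside $B_d(0)$, the origin is enclosed and therefore carries index $+1$.
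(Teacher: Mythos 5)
Your primary argument is correct, and it takes a genuinely different route from the paper. You compute the Poincar\'e index locally: expanding the field \eqref{Mneq0_udot_vdot} near the origin, using $f(0,0)=-\Gamma_0$, $g(u,v)=-\Gamma_0 u+O(r^2)$ and the prefactor $\sim \Gamma_1/(2M\pi r^2)$, the field on the circle of radius $r$ is $\frac{\Gamma_0\Gamma_1}{2M\pi r^2}(-v,u)+R$ where a Taylor expansion shows $|R|=O(1)$ uniformly on the circle, while the rotational term has norm $\frac{|\Gamma_0\Gamma_1|}{2|M|\pi r}$; hence for $r$ small the straight-line homotopy never vanishes, and the winding number equals that of a nonzero scalar multiple of $(-v,u)$, namely $+1$ (a scalar factor, even a negative one, rotates every direction by a fixed angle and so preserves the degree). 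The estimate you flag as the main obstacle does go through, precisely because every subleading contribution to $(\dot u,\dot v)$ is $O(1)$ against the $O(1/r)$ leading term. The paper argues globally instead: it constructs a closed trajectory inside a small ball $B_d(0)$ (trapping a trajectory through a $u$-axis point where $h=u^2+v^2$ has a maximum, then invoking Theorem~\ref{theorem_Mneq0_periodic_orbits}) and applies the index theorem, so that the origin, being the only enclosed singular point, must carry index $+1$; your \emph{alternative} sketch is essentially this proof. As for what each approach buys: your local computation is elementary and self-contained, avoiding Poincar\'e--Bendixson, the conservativity argument behind Theorem~\ref{theorem_Mneq0_periodic_orbits}, and the somewhat delicate claim that a trajectory started at a local maximum of $h$ stays in $B_d(0)$ for all time; the paper's construction, however, yields as a byproduct the existence of a region of closed periodic orbits surrounding the origin, a fact the text asserts alongside the lemma and which the winding-number computation alone does not provide.
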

In the following lemma, we show that if the free vortices are of equal counter-rotating type, then the corresponding  $(u,v)$ phase plane reduces to a disc of radius one centered at the origin, and it contains precisely one (saddle) equilibrium point.

\begin{lemma}
If $\Gamma_1+\Gamma_2=0$, then there exists only one equilibrium point in $(-1,1)$, and it is a saddle.
\end{lemma}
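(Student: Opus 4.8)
The plan is to exploit the counter-rotating hypothesis $\Gamma_1+\Gamma_2=0$ to factor the cubic $p$ of~\eqref{Mneq0_cubic_polynomial} explicitly, and then read off both the location and the type of the equilibrium from that factorization. Setting $\alpha_2=-\alpha_1$ in~\eqref{Mneq0_cubic_polynomial}, I would first verify that $u=1$ is a root of $p$; this is expected, since $(1,0)$ is the singularity $\eta_1=(1,0)$ rather than a genuine equilibrium. Dividing it out gives $p(u)=(u-1)\,q(u)$ with $q(u)=u^2+\alpha_1 u-1$, so the admissible equilibria on the $u$-axis are precisely the real roots of $q$.

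The counting is then elementary. The discriminant of $q$ is $\alpha_1^2+4>0$, so $q$ has two distinct real roots, and by Vieta their product equals $-1$; hence they have opposite signs and reciprocal magnitudes. Because $\Gamma_1\neq 0$ forces $\alpha_1\neq 0$, the two magnitudes cannot both equal one (indeed $q(\pm1)=\pm\alpha_1\neq0$), so exactly one root $u^\ast$ lies in the open interval $(-1,1)$ while the other lies outside $[-1,1]$. This settles uniqueness, and also records for later that $u^\ast$ has the same sign as $\alpha_1$ (the interior root is the positive root iff $\alpha_1>0$).

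For the saddle character I would linearize~\eqref{Mneq0_udot_vdot} at $(u^\ast,0)$. By the reversibility noted after~\eqref{Mneq0_udot_vdot} (and the explicit factor $v$ in $\dot u$), both diagonal entries of the Jacobian vanish on the $u$-axis, so the Jacobian is anti-diagonal with off-diagonal entries $a=f(u^\ast,0)\,K$ and $b=-g_u(u^\ast,0)\,K$, where $K=K(u^\ast,0)$ is the common bracket appearing in~\eqref{Mneq0_udot_vdot} and $g_u=\partial_u g$. The eigenvalues are $\pm\sqrt{ab}$, so—recalling from Sec.~\ref{subsubsec:Mneq0_equilibrium} that every equilibrium is a center or a saddle—it remains only to show $ab>0$. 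Using $g(u,0)=\Gamma_0\,u(u-1)\,p(u)$ (equivalent to the $\dot v|_{(u,0)}$ formula obtained in the index-$+1$ argument) together with $p(u^\ast)=0$ yields $f(u^\ast,0)=-u^\ast\bigl(\Gamma_1+\Gamma_2(u^\ast)^2\bigr)$ and $g_u(u^\ast,0)=\Gamma_0\,u^\ast(u^\ast-1)\,p'(u^\ast)$; substituting $\Gamma_2=-\Gamma_1$ and $p'(u^\ast)=(u^\ast-1)(2u^\ast+\alpha_1)$ then collapses the sign of $ab$ to the sign of $\alpha_1(2u^\ast+\alpha_1)$. Since $2u^\ast+\alpha_1=\pm\sqrt{\alpha_1^2+4}$ carries the sign of $u^\ast$, and $u^\ast$ shares the sign of $\alpha_1$, this product is positive in both the $\alpha_1>0$ and $\alpha_1<0$ cases; hence $ab>0$ and the equilibrium is a saddle.

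The main obstacle is this last step: reducing the anti-diagonal Jacobian's determinant to the single-line sign condition $\alpha_1(2u^\ast+\alpha_1)>0$. Everything hinges on eliminating $f$ in favour of $p$ through $g(u,0)=\Gamma_0\,u(u-1)\,p(u)$ and on using both $p(u^\ast)=0$ and the explicit quadratic $q$. The algebra is routine once these substitutions are organized, but one must track the signs of $(u^\ast-1)$, $1-(u^\ast)^2$, and $\alpha_1$ simultaneously and confirm that $K(u^\ast,0)=\Gamma_1\bigl(1-(u^\ast)^2\bigr)/\bigl(2M\pi(u^\ast)^2(u^\ast-1)^2\bigr)\neq0$ (which holds since $|u^\ast|<1$ and $\Gamma_1\neq0$), so that the equilibrium is genuinely hyperbolic.
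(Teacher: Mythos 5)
Your proof is correct, and its skeleton matches the paper's: factor the cubic~\eqref{Mneq0_cubic_polynomial} as $p(u)=(u-1)q(u)$ using $\alpha_1+\alpha_2=0$, use the fact that the two roots of $q$ have product $-1$ to get exactly one equilibrium $u^\ast\in(-1,1)$, then linearize and show the eigenvalue product is negative. The difference is in how the saddle property is certified. The paper simply writes down the explicit eigenvalues $\lambda^{\pm}$ of the linearization at $(u_s,0)$ and observes that $\lambda^{+}\lambda^{-}<0$ irrespective of $u_s$; this is quick, but it leans on a symbolic computation whose printed form has defects (the prefactor should involve $\Gamma_2^4$ and $M^2$ rather than $\Gamma_2^2$ and $M$, and the expressions $u_s^{3/2}$, $\sqrt{u_s+u_s^3}$ are only formal when $u_s<0$), and crucially it never needs to know which root of $q$ is the interior one. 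Your route instead exploits the structure: reversibility makes the Jacobian anti-diagonal on the $u$-axis, the identity $g(u,0)=\Gamma_0\,u(u-1)\,p(u)$ (already implicit in the paper's index argument) reduces everything to $p$, $p'$, and $q'$, and the sign of $\det J$ collapses to $\alpha_1(2u^\ast+\alpha_1)$, which is positive because the interior root shares the sign of $\alpha_1$ while $2u^\ast+\alpha_1=\pm\sqrt{\alpha_1^2+4}$ carries the sign of $u^\ast$. What your version buys is a verifiable, typo-proof sign argument that also identifies the location of the saddle (positive $u$-semi-axis iff $\Gamma_1/\Gamma_0>0$); the cost is the extra lemma relating $\mathrm{sign}(u^\ast)$ to $\mathrm{sign}(\alpha_1)$, which the paper's direct computation avoids. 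Both establish genuine hyperbolicity, so the center/saddle dichotomy of Sec.~\ref{subsubsec:Mneq0_equilibrium} closes the argument in either case.
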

\begin{proof}
From remark~\ref{remark_sign_convention}, 
we have $u^2+v^2\leq 1$.
Recall from Sec.~\ref{subsubsec:Mneq0_equilibrium}, all equilibrium points reside on $u$-axis and the $u$ must be the root of $p(u)=u^3-(1+\alpha_2)u^2-(1-\alpha_1)u+1$, where $\alpha_1=\Gamma_1/\Gamma_0$ and $\alpha_2=\Gamma_2/\Gamma_0$. Since $\alpha_1+\alpha_2=0$, one can factorize the polynomial $p$ as 
$p(u)=(u-1)q(u)$,
where $q(u)=u^2-\alpha_2u-1$. Since $u$ cannot be one, this would mean that the $u$ coordinate of the  equilibrium point must be a root of 
$q$. Let $u_1$ and $u_2$ denote the two roots of $q$. 
As $q(-1)=\alpha_2$, and $q(1)=-\alpha_2$, by continuity at least one of these two roots lies in $(-1,1)$. Since $u_1u_2=-1$, the second root cannot be in $(-1,1)$. Hence, there is exactly one equilibrium point in the region $u^2+v^2\leq 1$, and  we shall denote this unique equilibrium point by $(u_s,0)$. The linearized system has eigenvalues given by
\begin{equation*}
\lambda^\pm=\pm \frac{\Gamma_2 (1+u_s)\sqrt{\Gamma_2^2(-1+u_s)^2(u_s+u_s^3)}}{2M\pi(-1+u_s)u_s^{3/2}}.
\end{equation*}
The product $\lambda^+\lambda^-=-\Gamma_2^2(1+u_s)^2(u_s+u_s^3)/4M\pi^2u_s^3<0$ irrespective of the value of $u_s$. Hence it is a saddle equilibrium point.
\end{proof}

The following theorem characterizes the initial conditions with respect to the vortex boundedness in the equal counter-rotating free vortex pair case.

\begin{theorem}
Let $\Gamma_1+\Gamma_2=0$ and the vortices be indexed such that $|z_2|<|z_1|$ at $t=0$. The necessary and sufficient condition for vortex entrapment is that the initial point $(u_0,v_0)=(u,v)|_{t=0}$ lies in the interior of the curve given by $\Psi(u,v)=\Psi(u_s,0)$ that encloses the origin, where $u_s$  is the unique root of the quadratic polynomial  $u^2- (\Gamma_2/\Gamma_0) u-1$
in the interval $(-1,1)$. 
\label{theorem_Mneq0_ultimatum}
\begin{proof}

There are exactly two trajectories that approach and originate from a saddle 
point. Let us look at the two unstable separatrices that originate from the unique saddle. For one of these trajectories, the saddle is a point of maximum for the inter-vortex distance $r_1$ and a minimum for the other. 

The first trajectory corresponds to a bounded motion, and therefore must be  bounded away from the singularity point $(1,0)$ (see lemma~\ref{lemma_Mneq0_eta1_convergence}). Hence from lemma~\ref{lemma_Mneq0_periodic_or_asymptiotic}, this non-equilibrium trajectory either intersect a point on the $u$-axis in finite time or tend to a saddle equilibrium point asymptotically for $t>0$. Since there is only one saddle equilibrium point, the latter case is not possible. Hence the unstable separatrix 
must intersect the $u$-axis in finite time. Coupled with the reversibility of the system, this gives us a homoclinic orbit. Since trajectories in the interior of this homoclinic orbit are bounded away from saddle and $(1,0)$ point, they are closed trajectories. Since closed trajectories must contain equilibrium points or singularities of total index $+1$, this can only happen if the origin is contained in the interior of the homoclinic orbit under consideration. 

\begin{figure}
\centering
\subfigure[]{
\includegraphics[height=1.7in]{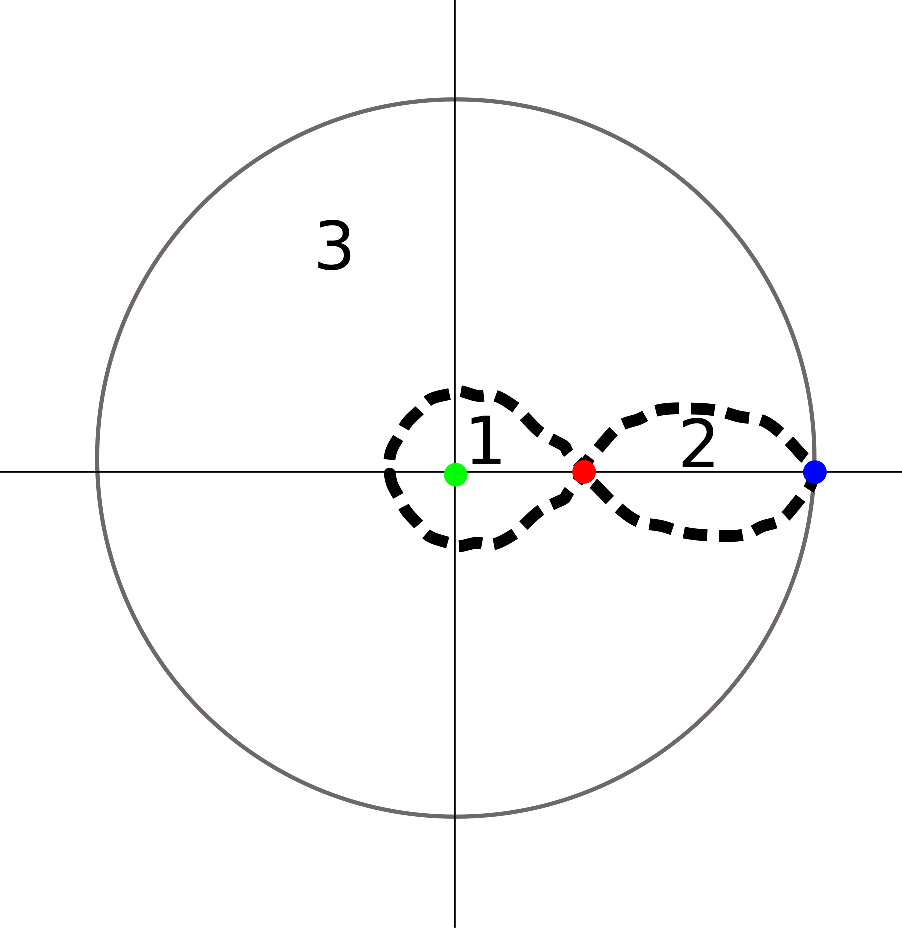}}\qquad\qquad
\subfigure[]{\includegraphics[height=1.7in]{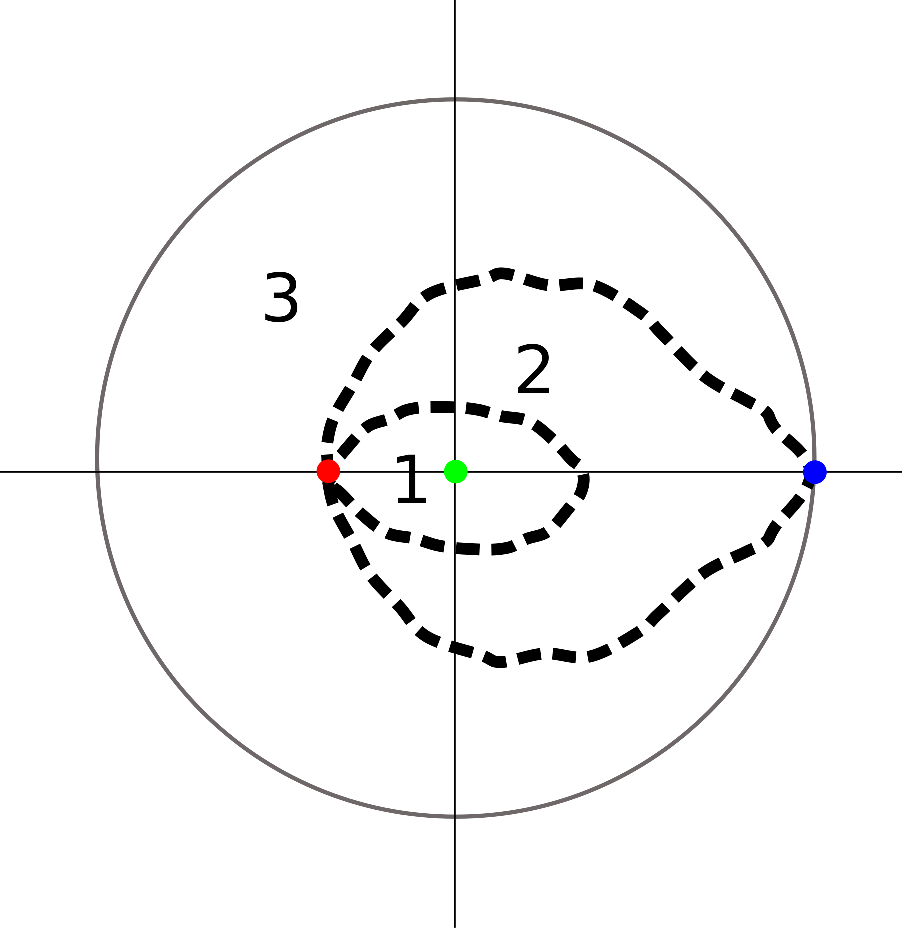}}
\caption{\small{Schematic showing the two types of $(u,v)$ phase plane ($|z|<1$) separatrices (dashed lines) when the unique saddle on the $u$-axis (red dot) lies in (a) $(0,1)$, (b) $(-1,0)$. The green and blue dots in the panels correspond to the singularity points $\eta_0=(0,0)$ and $\eta_1=(1,0)$, respectively.
}}
\label{fig_Mneq0_three_regions}
\end{figure}

Let us now look at the second unstable separatrix that has the saddle as a minimum for $r_1$. From lemma~\ref{lemma_Mneq0_periodic_or_asymptiotic}, it must either intersect the $u$-axis in finite time or tend to $(1,0)$ point. The first case cannot happen as that would mean a region in the phase plane having closed trajectories but does not contain points having an index sum to $+1$. The second unstable separatrix trajectory thus tends to $(1,0)$ point. 

Overall the two unstable separatrices 
subdivide the phase plane $|z|<1$ into three regions (see figure~\ref{fig_Mneq0_three_regions}).  Trajectories in  region one are bounded away from $(1,0)$ and the saddle point. Therefore, all trajectories in region one are closed and periodic. Since trajectories in regions two and three cannot have the origin in their interior, none of them are closed. As these trajectories are also bounded away from the saddle point, from corollary~\ref{corollary_Mneq0_classsification}, they must tend to the singularity point $(1,0)$ from one side and have a $u$-axis intersection in finite time on the other side. Coupled with reversibility, we see that eventually all trajectories in regions~2 and~3 must tend to the singularity point $(1,0)$, which corresponds to an unbounded vortex motion. 
\end{proof}
\end{theorem}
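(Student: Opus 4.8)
The plan is to reconstruct the global phase portrait inside the disc $u^2+v^2<1$ and to identify the level curve $\Psi(u,v)=\Psi(u_s,0)$ as the homoclinic loop through the unique saddle, whose interior is exactly the set of initial data producing bounded motion. First I would pin down the geometry: since $\Gamma_1+\Gamma_2=0$ forces $\kappa=\sqrt{-\Gamma_1/\Gamma_2}=1$, the indexing $|z_2|<|z_1|$ at $t=0$ translates, via \eqref{r2byr1_r12byr1}, into $u_0^2+v_0^2<1$, so all relevant trajectories live in the open unit disc, on whose boundary $r_1\to\infty$ by \eqref{Mneq0_r1_r2_r12}. By the preceding lemma there is a single equilibrium in $(-1,1)$, namely the saddle $(u_s,0)$ with $u_s$ the root of $u^2-(\Gamma_2/\Gamma_0)u-1$ in $(-1,1)$; being a saddle it carries two unstable and two stable separatrix branches, and by the reversibility $t\mapsto-t$, $v\mapsto-v$ of \eqref{Mneq0_udot_vdot} these branches are mirror images across the $u$-axis.

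The crux is to track the two unstable branches. Using \eqref{r1_sqr_dot_uv_eqn}--\eqref{r2_sqr_dot_uv_eqn}, $r_1$ is monotone in $t$ away from the $u$-axis, so at the saddle one unstable branch realizes a local maximum of $r_1$ and the other a local minimum. Along the maximum branch $r_1$ decreases away from the saddle, so the motion is bounded and hence bounded away from $\eta_1=(1,0)$ by lemma \ref{lemma_Mneq0_eta1_convergence}; lemma \ref{lemma_Mneq0_periodic_or_asymptiotic} then leaves only two options, and since the saddle is unique this branch cannot asymptote to another saddle, so it must meet the $u$-axis in finite time. Reflecting through the $u$-axis by reversibility glues the unstable and stable branches into a homoclinic loop, which is precisely the connected component of $\{\Psi=\Psi(u_s,0)\}$ through the saddle.

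I would then close the argument with an index count. Every trajectory strictly inside the loop is bounded away from both the saddle and $\eta_1$, so theorem \ref{theorem_Mneq0_periodic_orbits} makes it a closed periodic orbit; since a closed orbit must enclose equilibria and singular points of total index $+1$ and the loop's interior contains no equilibrium, the origin (which has index $+1$, as shown above) must lie inside the loop. This simultaneously verifies that the loop encloses the origin and that its interior is exactly the periodic, and hence bounded, region. For the complementary minimum branch: it cannot return to the $u$-axis, for that would manufacture a second family of closed orbits enclosing no index-$+1$ point; so by lemma \ref{lemma_Mneq0_periodic_or_asymptiotic} it must tend to $\eta_1=(1,0)$, i.e.\ an unbounded motion. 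The two unstable branches thus partition the disc into the inner region (region~1 in figure \ref{fig_Mneq0_three_regions}) and two outer regions, whose trajectories---bounded away from the saddle and, by corollary \ref{corollary_Mneq0_classsification}, not periodic---must funnel into $\eta_1$. This yields both necessity and sufficiency.

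The step I expect to be the main obstacle is the global, rather than local, control of the separatrices: deciding which unstable branch maximizes $r_1$ and, above all, ruling out pathological returns of the minimum branch to the $u$-axis. The index bookkeeping is the linchpin here---it is what forces the homoclinic loop to enclose the origin and forbids the escaping branch from closing up---so the delicate point is to make the ``no spare region of closed orbits'' argument airtight, using that the only equilibrium and the only interior singularity available to supply index are the saddle (on the loop's boundary) and the origin.
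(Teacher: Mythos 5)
Your proposal is correct and follows essentially the same route as the paper's own proof: the two unstable separatrices of the unique saddle distinguished by whether the saddle maximizes or minimizes $r_1$, the finite-time $u$-axis return of the maximizing branch combined with reversibility to form a homoclinic loop, the index-$+1$ argument forcing the origin inside that loop and ruling out a return of the minimizing branch, and the resulting three-region partition of the disc. Your only addition is the explicit identification of the homoclinic loop with the level set $\Psi(u,v)=\Psi(u_s,0)$, which the paper leaves implicit but which cleanly ties the phase portrait to the theorem's stated criterion.
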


In the asymmetric case $M \neq 0$, the motion of the free vortices $\mathcal{V}_1$ and $\mathcal{V}_2$ are always bounded in a neighbourhood of the fixed vortex $\mathcal{V}_0$, if they are not of equal counter-rotating type (see lemma~\ref{lemma_Mneq0_g1pg2neq0case}). Theorem~\ref{theorem_Mneq0_ultimatum} gives a necessary and sufficient condition for a bounded vortex motion in the counter-rotating case. Given the initial conditions, we index the vortices such that $|z_2|_{t=0}<|z_1|_{t=0}$. If the quotient $z_2/z_1|_{t=0}$ lies in the interior of the region given by $\Psi(u,v)=\Psi(u_s,0)$ that contains the origin (region~1 in figure~\ref{fig_Mneq0_three_regions}), then the vortex motion is bounded with periodic inter-vortex distances. Otherwise, the vortex motion is unbounded. 

\subsection{Examples for \texorpdfstring{$M\neq 0$}{M neq 0} case}
\label{subsec:Examples_Mneq0}

In this section we shall illustrate results by considering two physically important special cases of circulations, namely, the equal vortices and equal counter-rotating vortex pair.

\subsubsection{Equal vortices (\texorpdfstring{$\Gamma_0=\Gamma_1=\Gamma_2\neq0$}{Gamma0=Gamma1=Gamma2 neq 0})}

\begin{figure}[htbp!]
\centering
\includegraphics[height=2.5in]{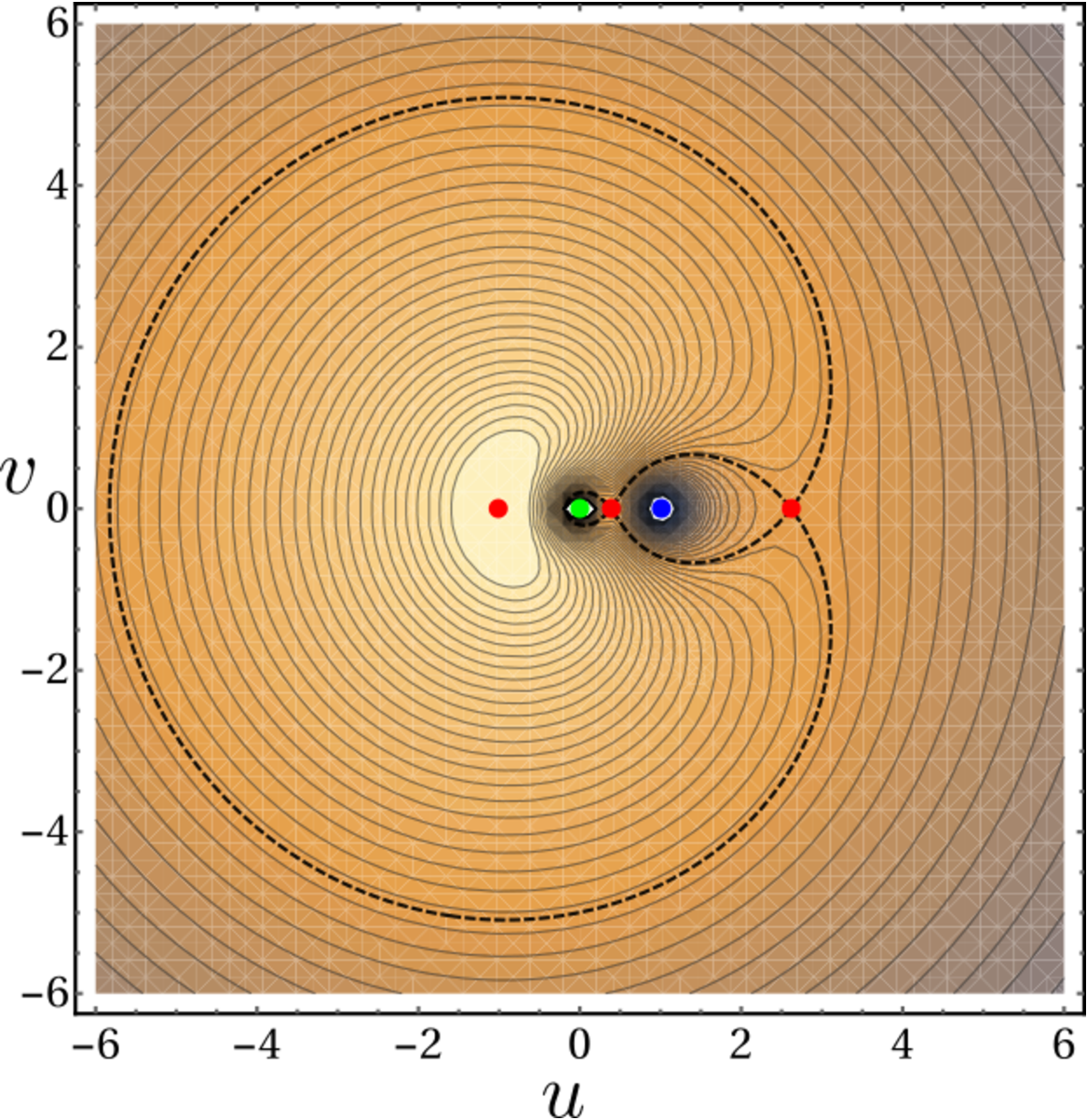}
\caption{Contours of equal-vortex case ($\Gamma_0=\Gamma_1=\Gamma_2\neq0$)}
\label{fig_Mneq0_equal_contour}
\end{figure}
When the vortices are equal,~\eqref{Mneq0_levelcurve} simplifies to
\begin{equation}\label{Mneq0_equal_levelcurve}
\frac{\left(u^2+v^2\right)\left((u-1)^2+v^2\right)}{\left(1+u^2+v^2\right)^3}=\text{ constant.}
\end{equation}
Figure~\ref{fig_Mneq0_equal_contour} shows the contours of~\eqref{Mneq0_equal_levelcurve} representing the trajectories of 
$\eta_2=(u(t),v(t))$ for different initial conditions. 
The exact location of the equilibrium points, see~\eqref{Mneq0_cubic_polynomial}, are found by solving the cubic equation $p(u)=u^3-2u^2-2u+1=0$.
The roots are given by
 $u_1=-1$, $u_2=(3-\sqrt{5})/2\approx 0.381966$, and $u_3=(3+\sqrt{5})/2\approx2.61803$. From figure~\ref{fig_Mneq0_equal_contour}, it is evident that the equilibrium points at $(u_1,0)$ is a center (leftmost red dot), and at $(u_2,0)$ and $(u_3,0)$ are saddles (other two red dots), as discussed at the end of Sec.~\ref{subsubsec:Mneq0_equilibrium}. It is  seen that the  trajectories are either (i) equilibrium points (red dots), (ii) separatrices of the saddle equilibrium points (black dashed lines), and (iii) closed periodic trajectories (black continuous lines), which corresponds to (i) a fixed configuration of vortices (see lemma~\ref{lemma_fixed_configuration} and figure~\ref{fig_Mneq0_equal_fixed_conf}), (ii) vortex motion that asymptotically converges to an unstable fixed configuration (see figure~\ref{fig_Mneq0_equal_aperiodic}), and (iii) vortex motion in which inter-vortex distances are periodic (see figure~\ref{fig_Mneq0_equal_periodic}), respectively. These three cases are illustrated below.

\begin{enumerate}[label=(\roman*)]
\item
{Fixed configuration:}
\begin{figure}[htb!]
\centering
\subfigure[]{
\includegraphics[width=0.31\textwidth,height=0.31\textwidth]{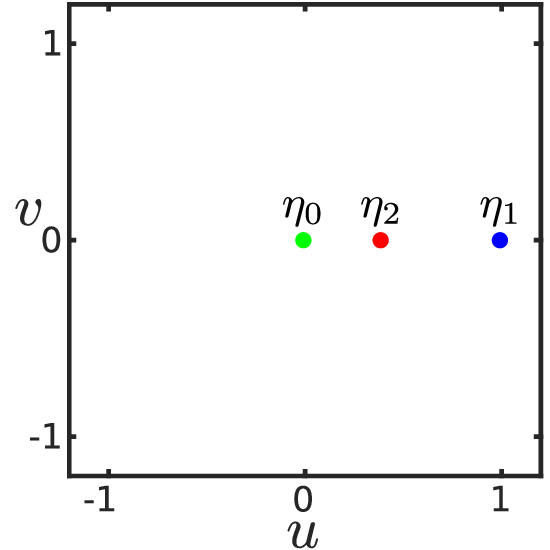}
}
\subfigure[]{
\includegraphics[width=0.31\textwidth,,height=0.31\textwidth]{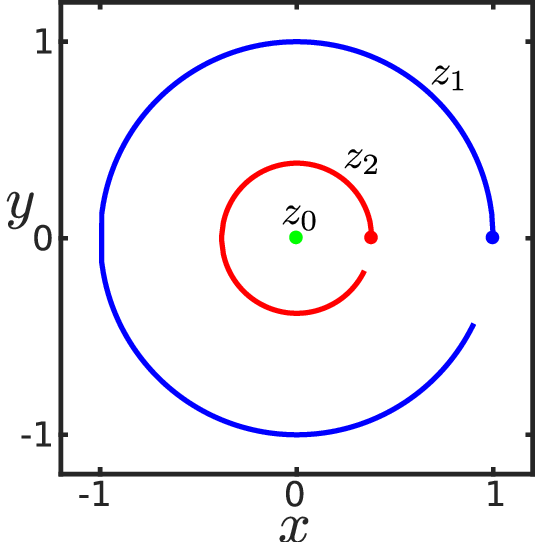}
}
\subfigure[]{
\includegraphics[width=0.31\textwidth,height=0.31\textwidth]{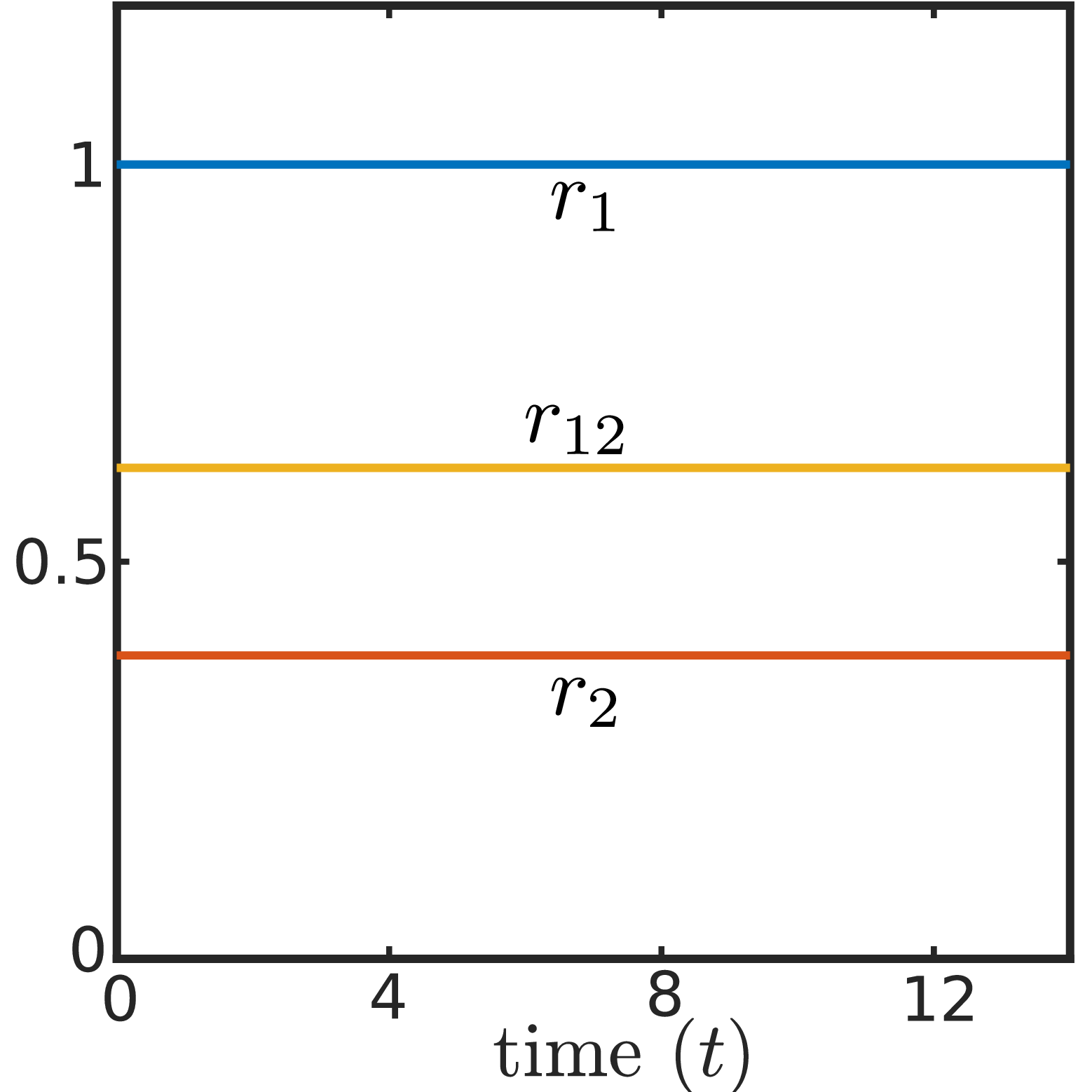}
}
\caption{\small{Fixed configuration ($\Gamma_0=\Gamma_1=\Gamma_2=1$) of vortices. The positions of vortices $\mathcal{V}_0$, $\mathcal{V}_1$, and $\mathcal{V}_2$ are marked by green, blue, and red, respectively, in (a) $(u,v)$ plane and (b) $(x,y)$ plane. (c) Variation of inter-vortex distances with time.}}
\label{fig_Mneq0_equal_fixed_conf}
\end{figure}

Since from lemma~\ref{lemma_fixed_configuration}, any initial condition that leads to a fixed configuration of vortices corresponds to an equilibrium solution on the $u$-axis, we consider $z_1|_{t=0}=1, z_2|_{t=0}=(3-\sqrt{5})/2$ so that $\eta_2|_{t=0}$ corresponds to one of the two saddle equilibrium points described earlier. The system~\eqref{z0_z1_z2_dot} is numerically integrated till $t=14$ using the fourth-order Runge-Kutta method. Plotting  $\eta_2=z_2/z_1$  yields an equilibrium trajectory (marked in red) as in figure~\ref{fig_Mneq0_equal_fixed_conf}(a). From figure~\ref{fig_Mneq0_equal_fixed_conf}(b), we see that the actual vortex motion consists of vortices $\mathcal{V}_1$ and $\mathcal{V}_2$ revolving around the fixed vortex $\mathcal{V}_0$ in circular orbits with the same angular velocity so that they remain collinear at any point of time. Moreover, the inter-vortex distances are constants as evident from figure~\ref{fig_Mneq0_equal_fixed_conf}(c).  

\item
{Aperiodic case}

\begin{figure}[htbp!]
\centering
\subfigure[]{
\includegraphics[width=0.31\textwidth,height=0.31\textwidth]{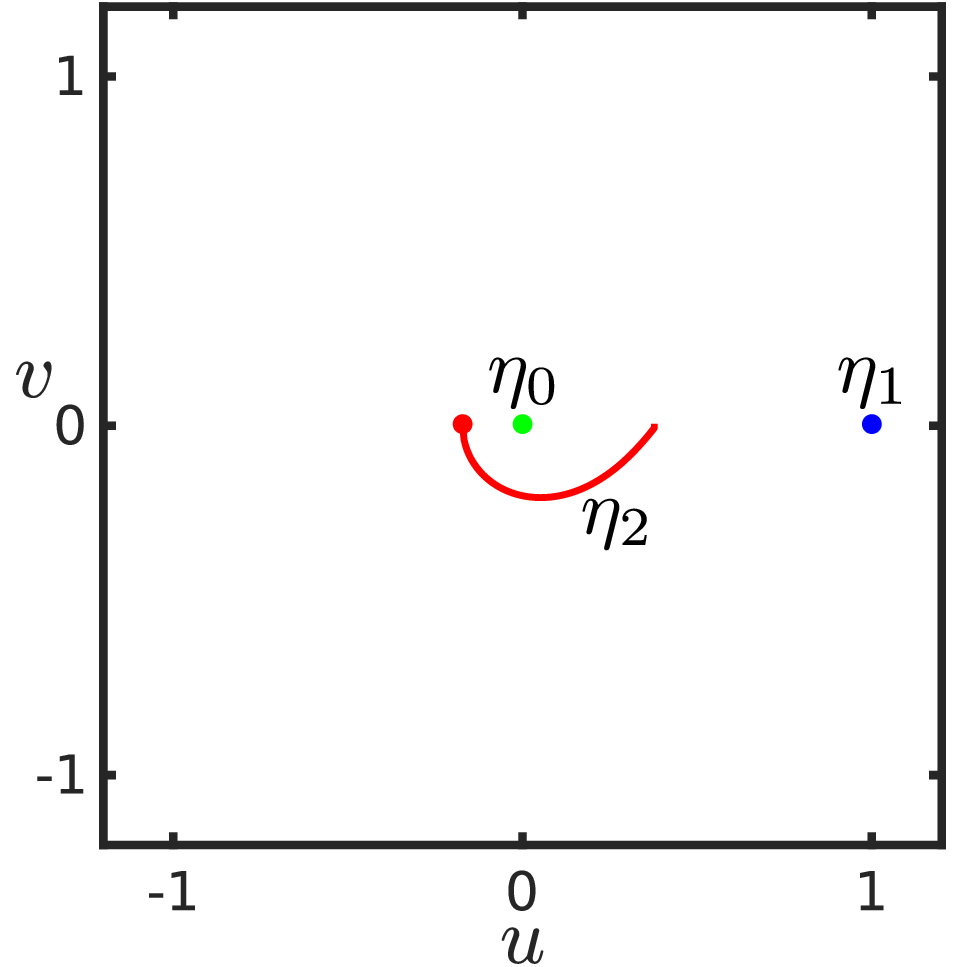}
}
\subfigure[]{
\includegraphics[width=0.31\textwidth,height=0.31\textwidth]{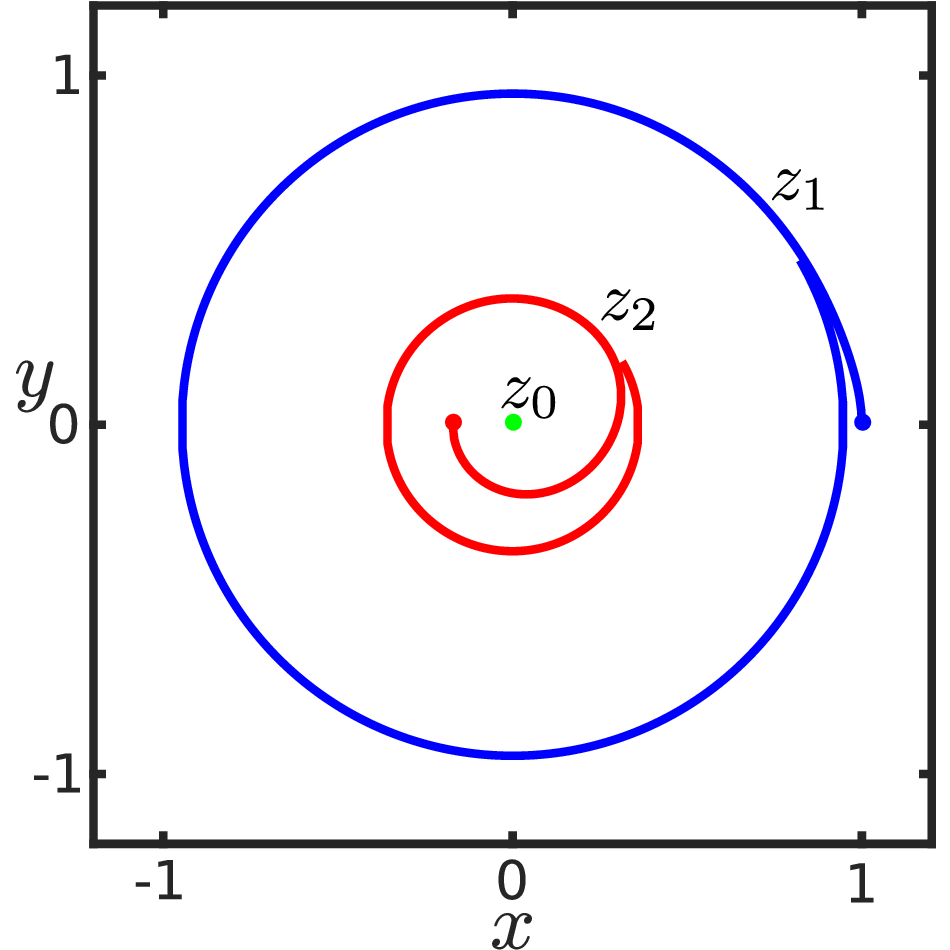}
}
\subfigure[]{
\includegraphics[width=0.31\textwidth,height=0.31\textwidth]{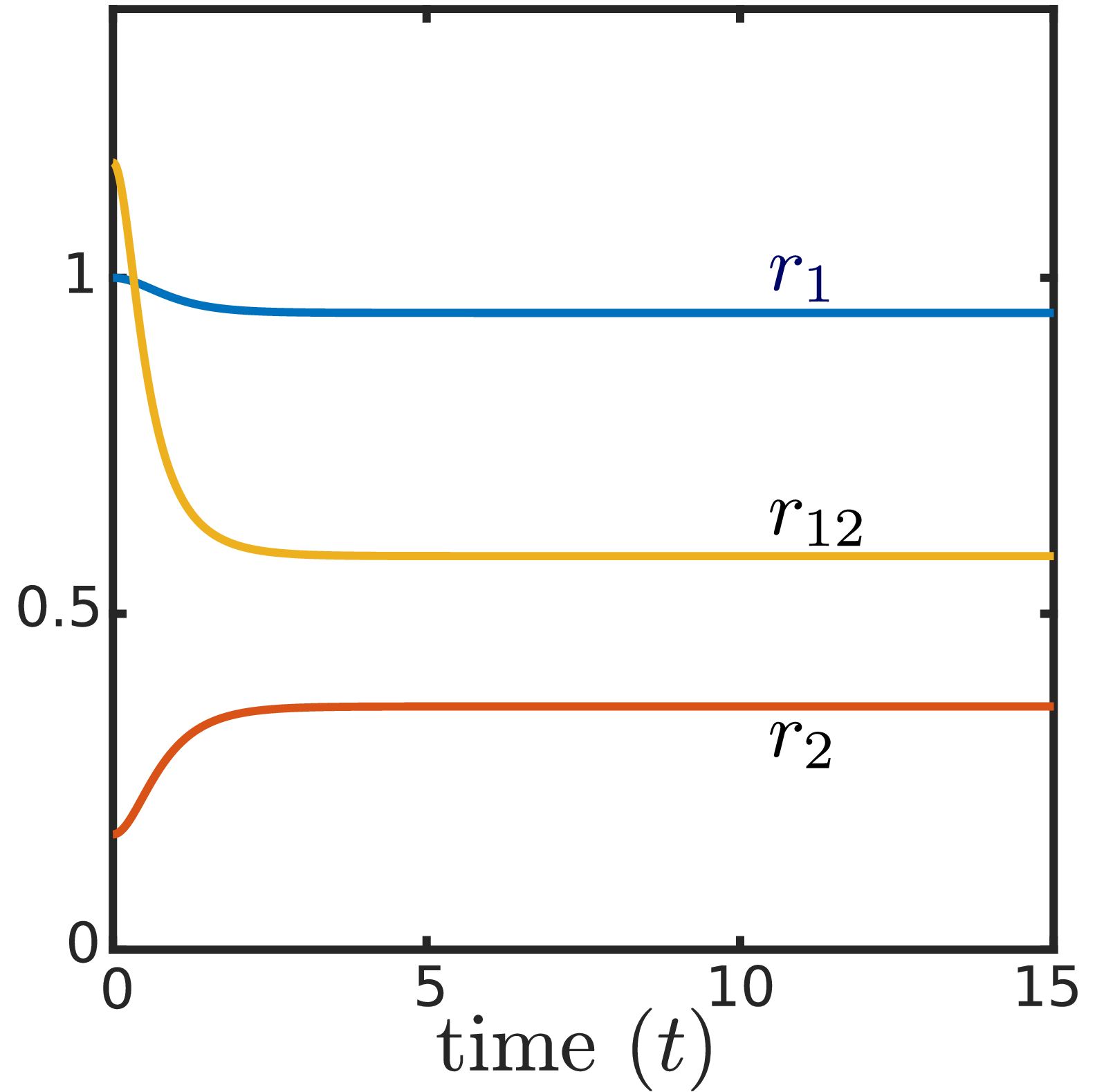}
}
\caption{\small{
Same as figure~\ref{fig_Mneq0_equal_fixed_conf} but for the aperiodic case. 
}}
\label{fig_Mneq0_equal_aperiodic}
\end{figure}

Any initial condition which has the quotient $z_2/z_1|_{t=0}$ lying on the separatrices would asymptotically approach to a saddle equilibrium point in the $(u,v)$ phase plane. Physically this would mean that the vortex trajectories would look more and more like a fixed configuration for larger time scale. To illustrate this, we consider the set of initial conditions, $z_1|_{t=0}=1$ and $z_2|_{t=0}\approx -0.171573$, so that $z_2/z_1|_{t=0}$ is a non-equilibrium point on the separatrices. A numerical plot of $\eta_2=z_2/z_1$ gives us a $(u,v)$ phase plane trajectory, which tends to the saddle equilibrium point situated at $\left((3-\sqrt{5})/2,0\right)$. The vortex trajectories [see figure~\ref{fig_Mneq0_equal_aperiodic}(b)] are found to be the one in which the vortices approach the collinear circular orbits described earlier in figure~\ref{fig_Mneq0_equal_fixed_conf}(b). The inter-vortex distances also tend to a constant limiting value as in figure~\ref{fig_Mneq0_equal_aperiodic}(c).

\item
{Periodic case}


From lemma~\ref{lemma_Mneq0_uv_boundedness} and corollary~\ref{corollary_Mneq0_classsification}, all the initial conditions that do not belong in any of the above two categories must correspond to a closed periodic trajectory in the $(u,v)$ phase plane. Since inter-vortex distances, $r_1$, $r_2$, and $r_{12}$ are functions of $u$ and $v$ [see~\eqref{Mneq0_r1_r2_r12}], they will also be periodic functions of time. This is illustrated by considering an initial conditions $z_1|_{t=0}=1$ and $z_2|_{t=0}=0.5$ and integrating the system~\eqref{z0_z1_z2_dot} numerically till $t=21$. As expected, the $(u,v)$ phase plane trajectory is a closed orbit [see figure~\ref{fig_Mneq0_equal_periodic}(a)], and $r_1,r_2,r_{12}$ are periodic [see figure~\ref{fig_Mneq0_equal_periodic}(c)], resulting in a vortex motion as in figure~\ref{fig_Mneq0_equal_periodic}(b). 

\begin{figure}[htb!]
\centering
\subfigure[]{
\includegraphics[width=0.31\textwidth,height=0.31\textwidth]{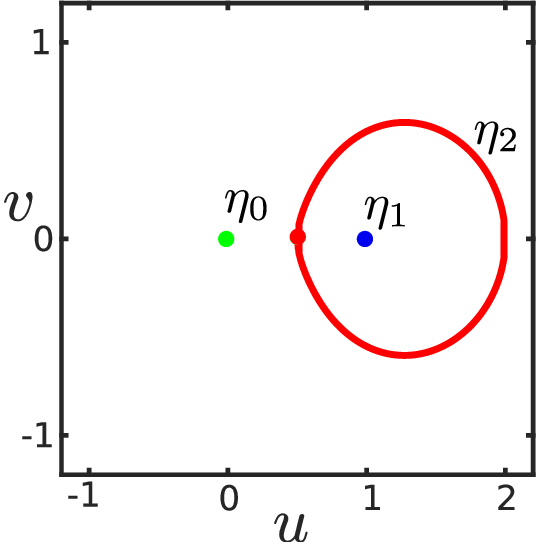}
}
\subfigure[]{
\includegraphics[width=0.31\textwidth,height=0.31\textwidth]{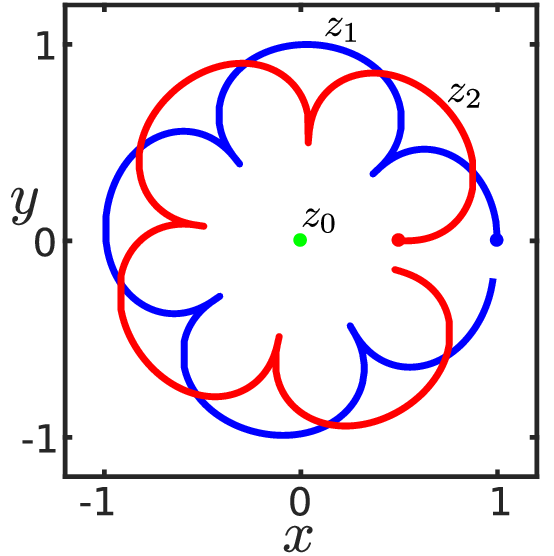}
}
\subfigure[]{
\includegraphics[width=0.31\textwidth,height=0.31\textwidth]{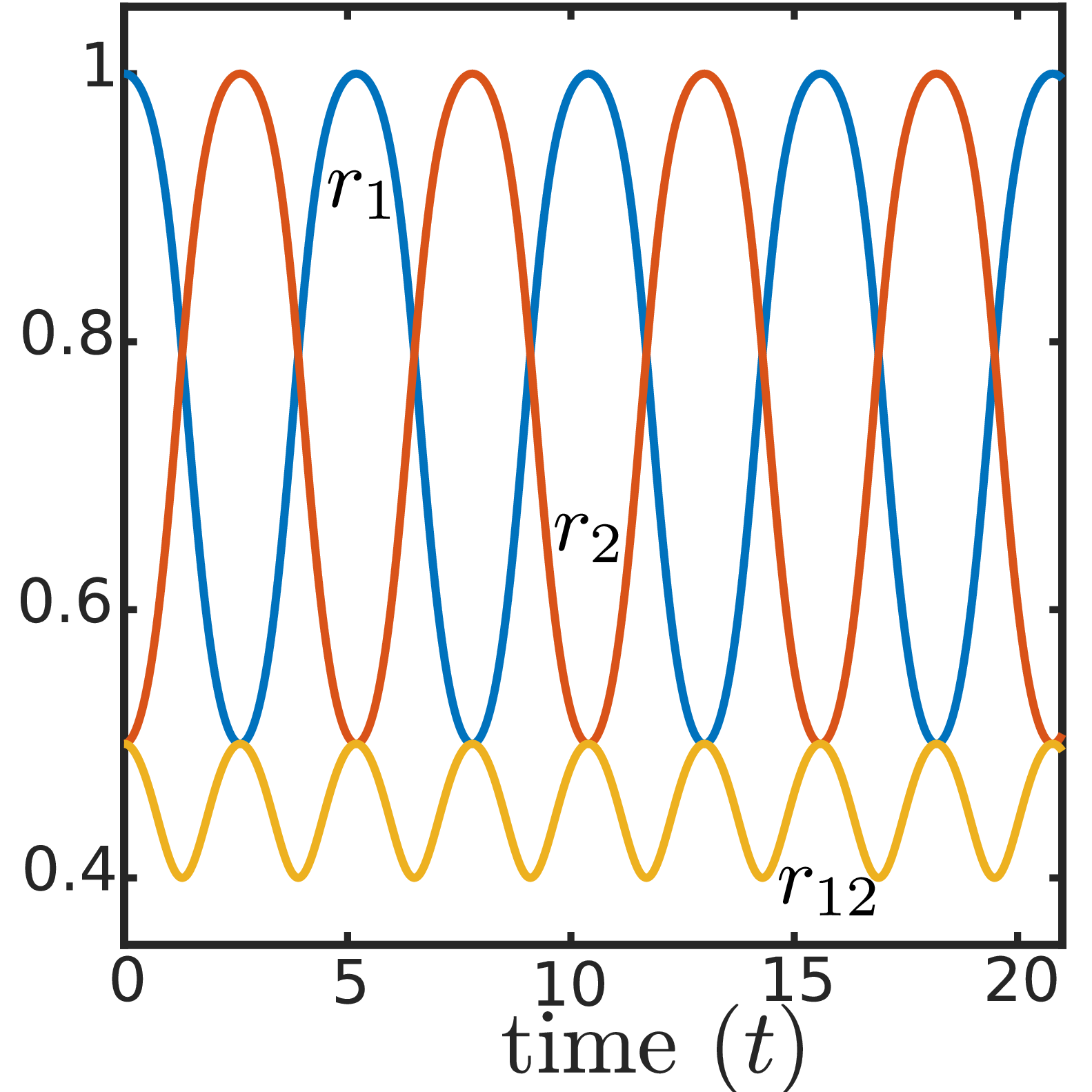}
}
\caption{\small{
Same as figure~\ref{fig_Mneq0_equal_fixed_conf} but for the periodic case.
}}
\label{fig_Mneq0_equal_periodic}
\end{figure}

\end{enumerate}

\subsubsection{Equal counter-rotating pair (\texorpdfstring{$\Gamma_1+\Gamma_2=0$}{Gamma1+Gamma2=0})}

This case is of particular interest because, unlike other cases, the vortex motions are not bounded generally. For some initial conditions, free vortex pair gets entrapped to a neighbourhood of the fixed vortex, and for some they escape to infinity. The existence of a boundary that separates the former from the latter is explained through the examples below. 
WLOG, we may assume that initial conditions for $z_1$ and $z_2$ are such that $|z_2/z_1|_{t=0}<1$ (see remark~\ref{remark_sign_convention}). We shall consider two sets of circulations to illustrate the situations when the unique saddle on the $u$-axis lies in the intervals (i) $(0,1)$ and (ii) $(-1,0)$.
 
 \begin{figure}[htb!]
\centering
\subfigure[]{
\includegraphics[width=0.35\textwidth,height=0.35\textwidth]{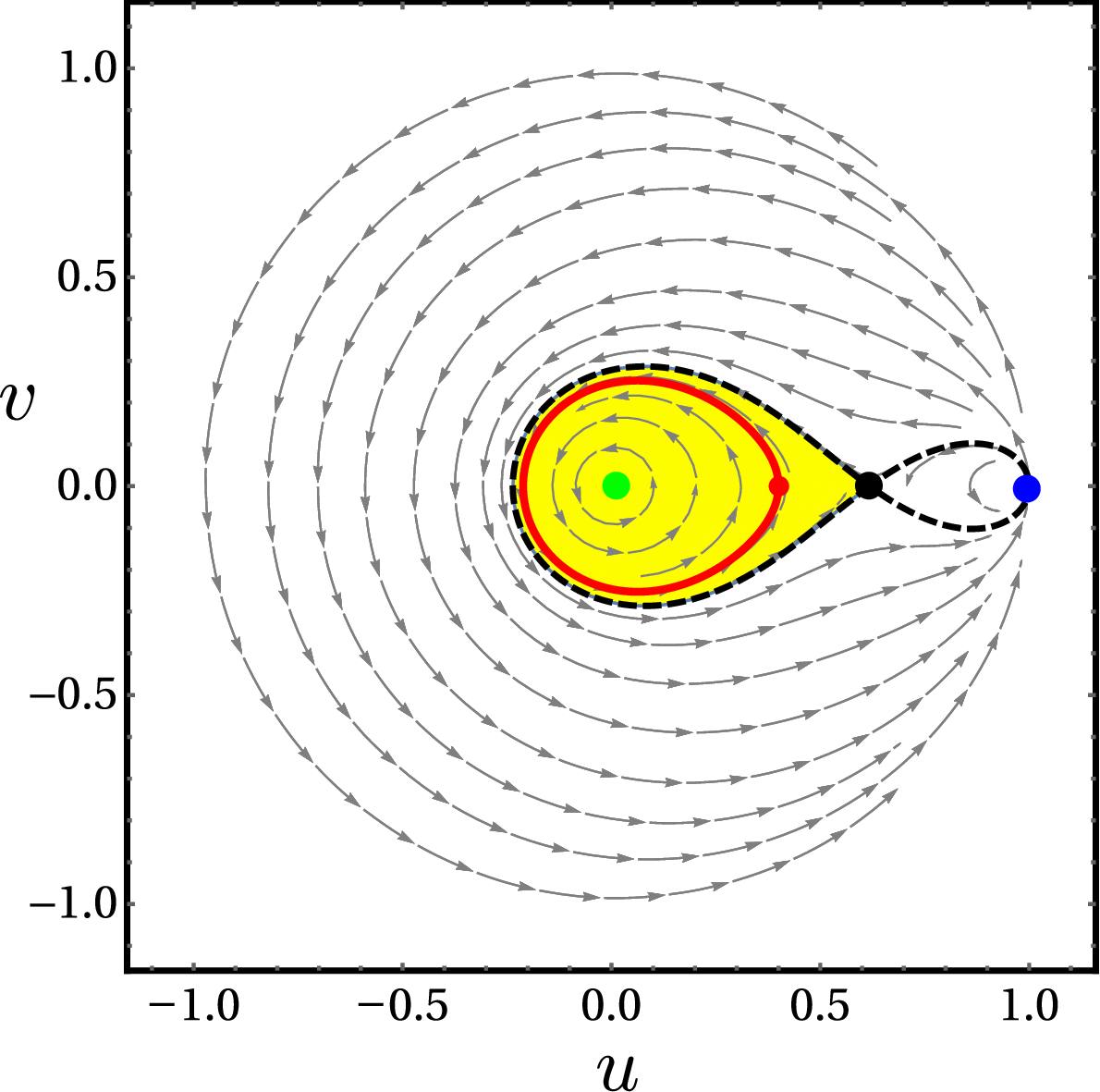}}\qquad\qquad
\subfigure[]{\includegraphics[width=0.35\textwidth,height=0.35\textwidth]{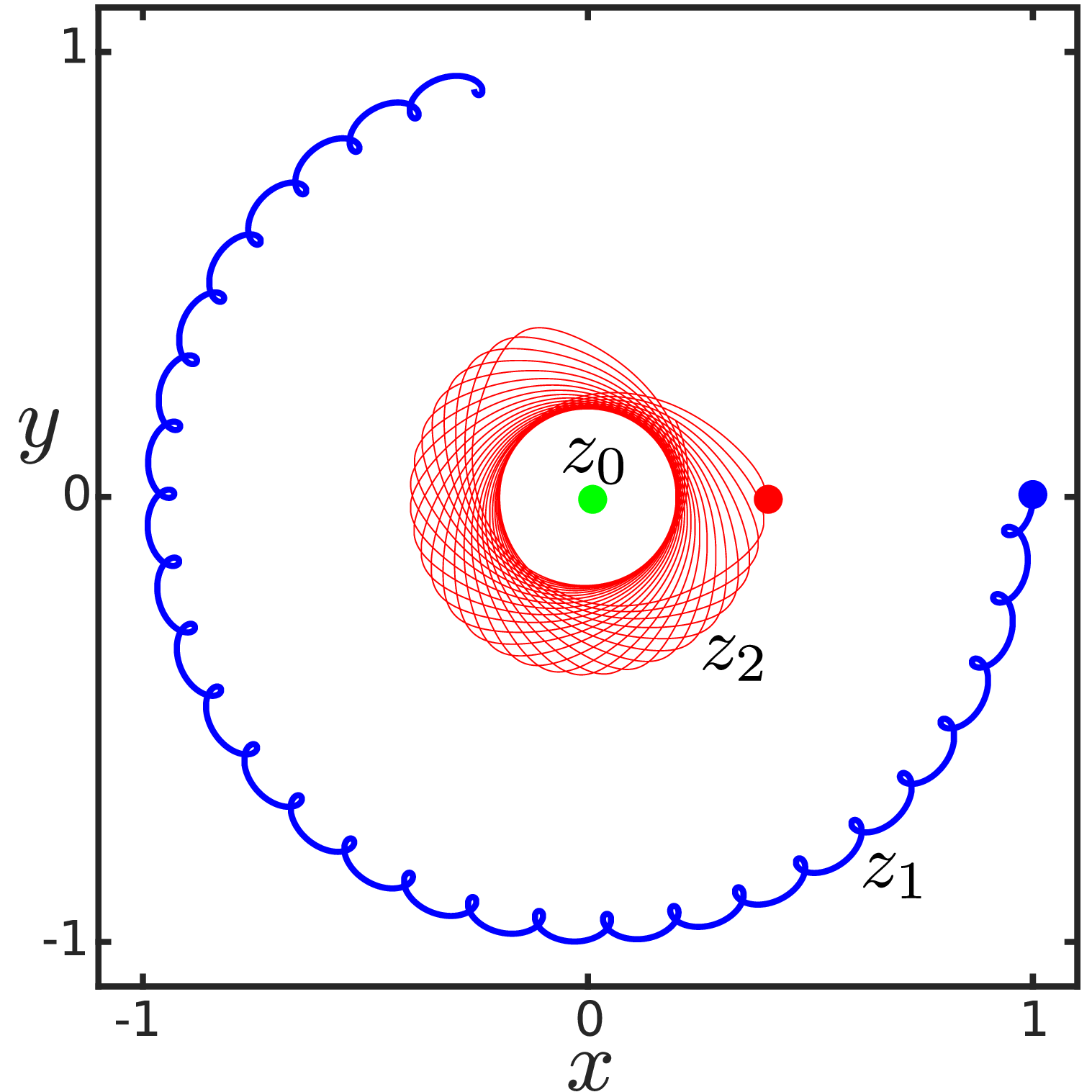}}
\caption{
\small{
An example of bounded vortex motion for the equal counter-rotating pair case ($\Gamma_0=\Gamma_1=1$, $\Gamma_2=-1$). 
The positions of vortices $\mathcal{V}_0$, $\mathcal{V}_1$ and $\mathcal{V}_2$ are marked by green, blue, and red, respectively, in
%
(a) $(u,v)$ phase plane, and 
(b) $(x,y)$ plane. The black dot in panel (a) represents the saddle.}}
\label{fig_Mneq0_Keq1_g0_p_bounded}
\end{figure}
 
Let us look at the case when the saddle point  on the $u$-axis lies in $(0,1)$. We have considered the circulations as $\Gamma_0=1,\Gamma_1=1,\Gamma_2=-1$, so that the unique saddle is at $((\sqrt{5}-1)/2,0)\approx(0.618034,0)$. As explained in theorem~\ref{theorem_Mneq0_ultimatum}, separatrices [see black dashed lines in figures~\ref{fig_Mneq0_Keq1_g0_p_bounded}(a) and~\ref{fig_Mneq0_Keq1_g0_p_unbounded}(a)] divide the $(u,v)$ phase plane into three sub-regions. The region that contains the origin is shaded yellow.

\begin{figure}[htbp!]
\centering
\subfigure[]{
\includegraphics[width=0.35\textwidth,height=0.35\textwidth]{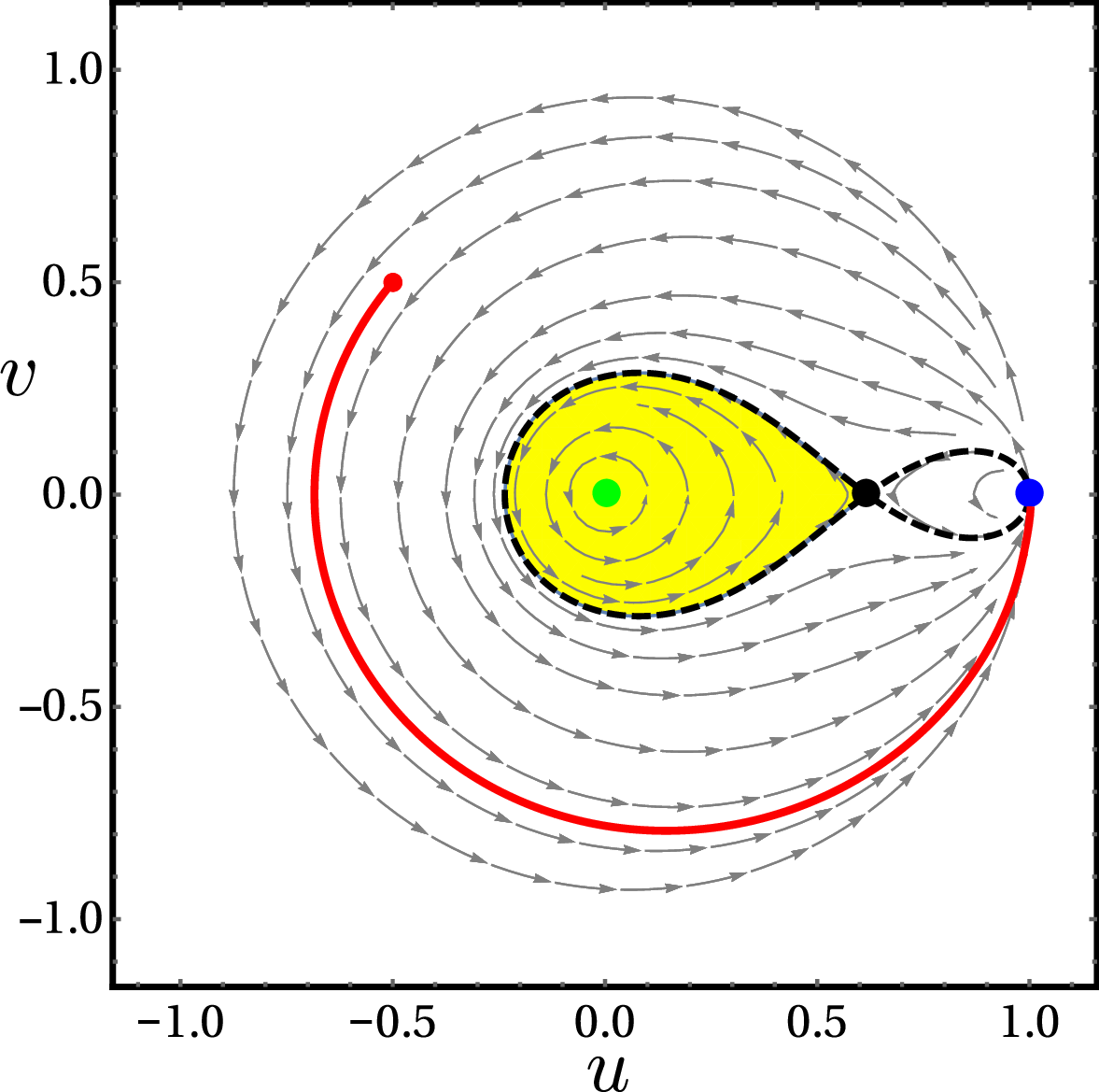}}\qquad\qquad
\subfigure[]{\includegraphics[width=0.35\textwidth,,height=0.35\textwidth]{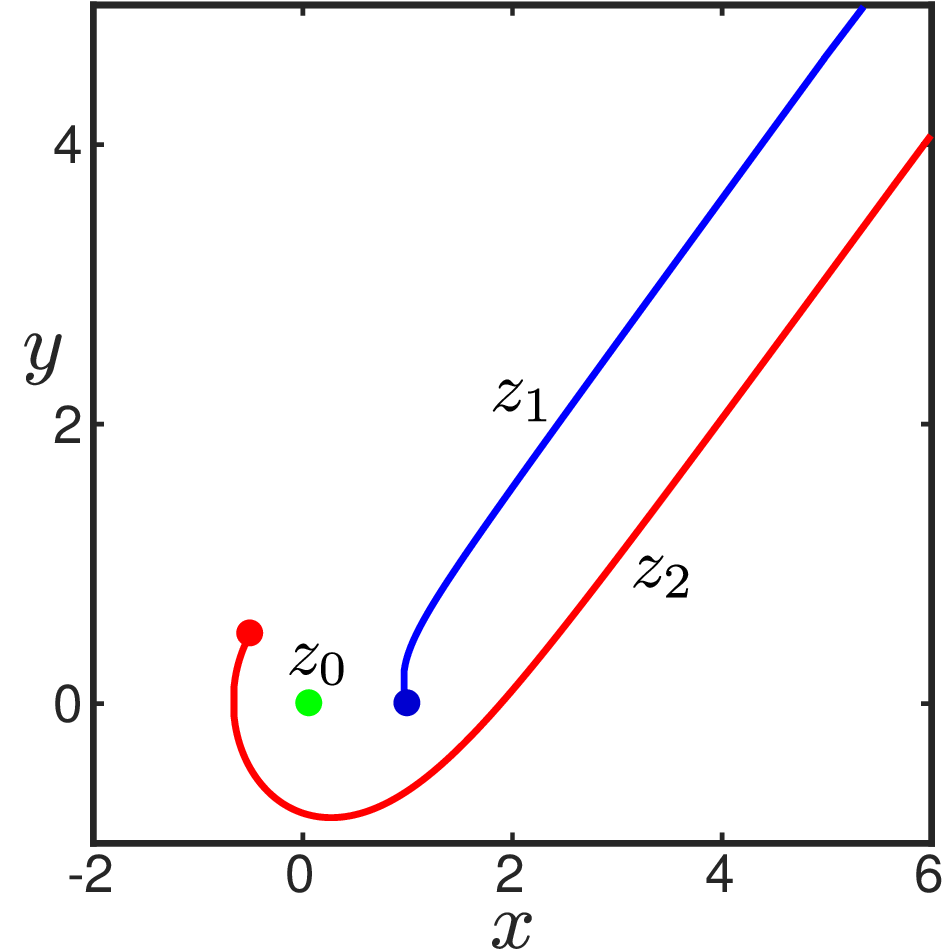}}
\caption{
\small{
Same as figure~\ref{fig_Mneq0_Keq1_g0_p_bounded} but for the unbounded vortex motion.}
}
\label{fig_Mneq0_Keq1_g0_p_unbounded}
\end{figure}

An initial condition for $z_1$ and $z_2$ is arbitrarily chosen such that the ratio $z_2/z_1$ lies in this region. In figure~\ref{fig_Mneq0_Keq1_g0_p_bounded}, we have taken the initial conditions as $z_2|_{t=0}=0.4$, $z_1|_{t=0}=1$
[marked by a red dot in figure~\ref{fig_Mneq0_Keq1_g0_p_bounded}(a)] that lies in the yellow shaded region. For these initial conditions, system~\eqref{z0_z1_z2_dot} is numerically integrated to obtain the $(u,v)$ phase plane trajectory [marked red in figure~\ref{fig_Mneq0_Keq1_g0_p_bounded}(a)] as well as the actual vortex trajectories [see figure~\ref{fig_Mneq0_Keq1_g0_p_bounded}(b)]. As expected from theorems~\ref{theorem_Mneq0_periodic_orbits} and~\ref{theorem_Mneq0_ultimatum}, the $(u,v)$ phase plane trajectory is a closed orbit, and the vortex motion is bounded with periodic inter-vortex distances 
(figure not shown for brevity).

Next, we illustrate the case when the initial condition is such that $z_2/z_1|_{t=0}$ is outside the region of entrapment. We have considered the initial conditions $z_2|_{t=0}=-0.5+ 0.5\, \mathbbm{i}$ and $z_1|_{t=0}=1$, so that $z_2/z_1|_{t=0}=-0.5+ 0.5\, \mathbbm{i}$ [marked as a red dot in figure~\ref{fig_Mneq0_Keq1_g0_p_unbounded}(a)], lies outside the region of entrapment as required. By numerically plotting the respective $(u,v)$ phase plane trajectory [marked red in figure~\ref{fig_Mneq0_Keq1_g0_p_unbounded}(a)] and the physical vortex trajectories [see figure~\ref{fig_Mneq0_Keq1_g0_p_unbounded}(b)], we see that the vortex motion is unbounded and the corresponding $(u,v)$ phase plane trajectory tends to the singularity point $(1,0)$ just as one would expect from lemma~\ref{lemma_Mneq0_eta1_convergence} and theorem~\ref{theorem_Mneq0_ultimatum}. 

\begin{figure}[htb!]
\centering
\subfigure[]{
\includegraphics[width=0.35\textwidth,height=0.35\textwidth]{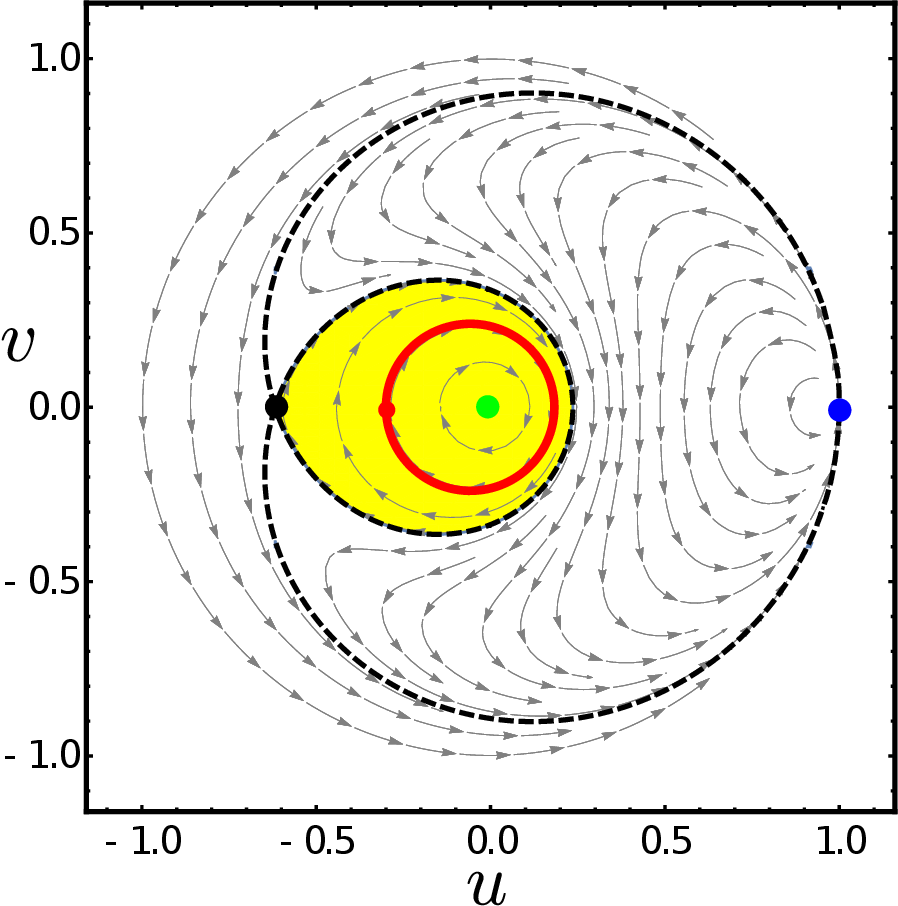}}\qquad\qquad
\subfigure[]{\includegraphics[width=0.35\textwidth,height=0.35\textwidth]{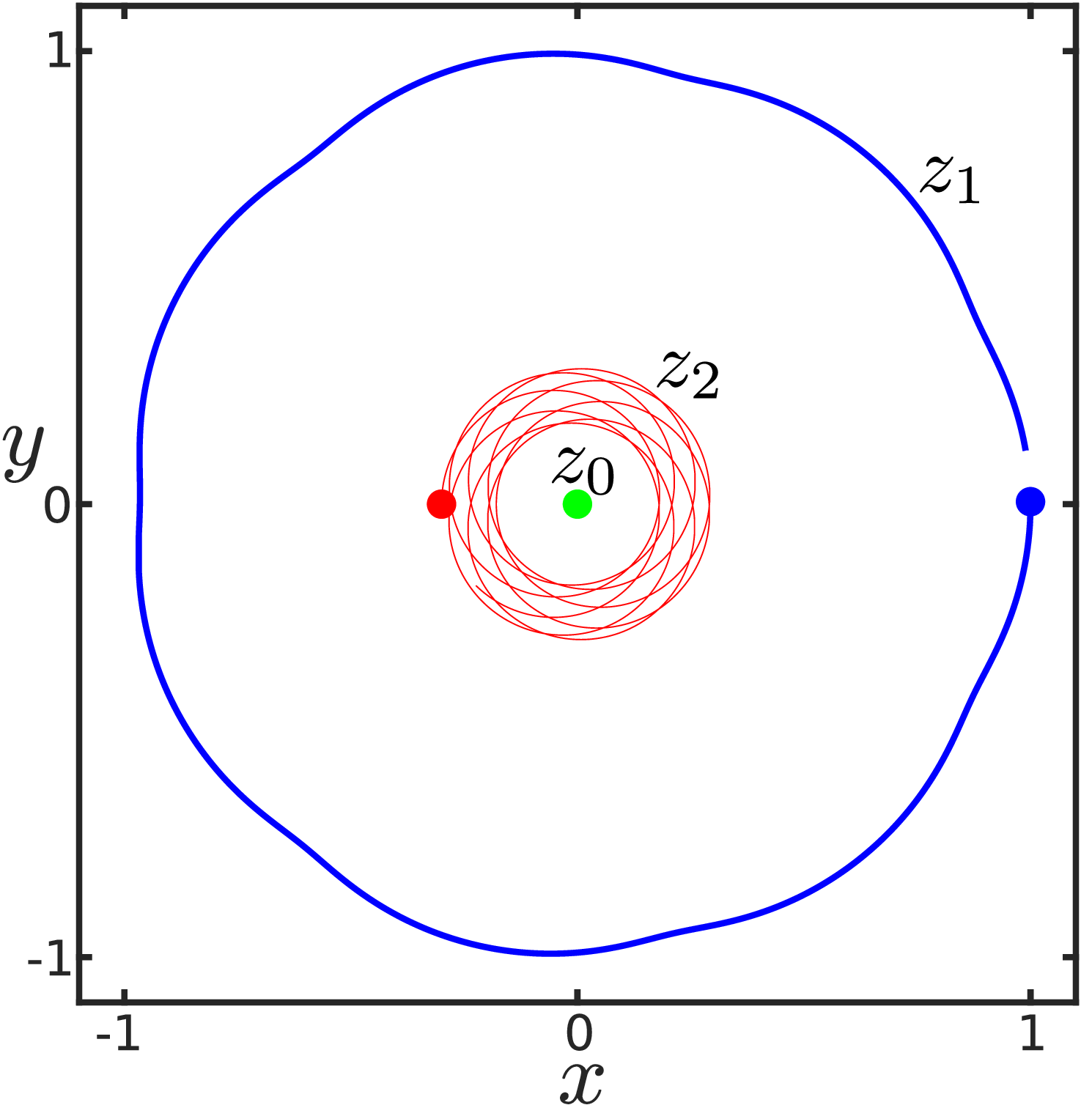}}
\caption{\small{
Same as figure~\ref{fig_Mneq0_Keq1_g0_p_bounded} but for a different set of circulations $\Gamma_0=\Gamma_2=-1$, $\Gamma_1=1$. }}
\label{fig_Mneq0_Keq1_g0_m_bounded}
\end{figure}

To illustrate the case when the unique saddle lies in $(-1,0)$ interval on the $u$-axis, we consider the circulations to be $\Gamma_0=-1,\Gamma_1=1,\Gamma_2=-1$, so that the saddle is at $((1-\sqrt{5})/2,0)\approx (-0.618034,0)$. The separatrices divide the $(u,v)$ phase plane into three, as seen in figures~\ref{fig_Mneq0_Keq1_g0_m_bounded}(a) and~\ref{fig_Mneq0_Keq1_g0_m_unbounded}(a). The region that contains the origin (shaded yellow) is the region of vortex entrapment, as given by theorem~\ref{theorem_Mneq0_ultimatum}. For the initial conditions $z_1|_{t=0}=1$ and $z_2|_{t=0}=-0.3$, the ratio $z_2/z_1|_{t=0}$ [red dot in figure~\ref{fig_Mneq0_Keq1_g0_m_bounded}(a)] lies in the region of entrapment. Vortex trajectories as obtained from numerically integrating the system~\eqref{z0_z1_z2_dot} for $0\leq t\leq20$ clearly shows that the vortex motion is bounded with periodic inter-vortex distance functions [see figure~\ref{fig_Mneq0_Keq1_g0_m_bounded}(b)].

For the initial conditions $z_1|_{t=0}=1$ and $z_2|_{t=0}=0.5\, \mathbbm{i}$, the ratio $z_2/z_1|_{t=0}$ [red dot in figure~\ref{fig_Mneq0_Keq1_g0_m_unbounded}(a)] lies outside the region of entrapment and as expected the corresponding  vortex motion is found to be unbounded [see figure~\ref{fig_Mneq0_Keq1_g0_m_unbounded}(b)].
\begin{figure}[htb!]
\centering
\subfigure[]{
\includegraphics[width=0.35\textwidth,height=0.35\textwidth]{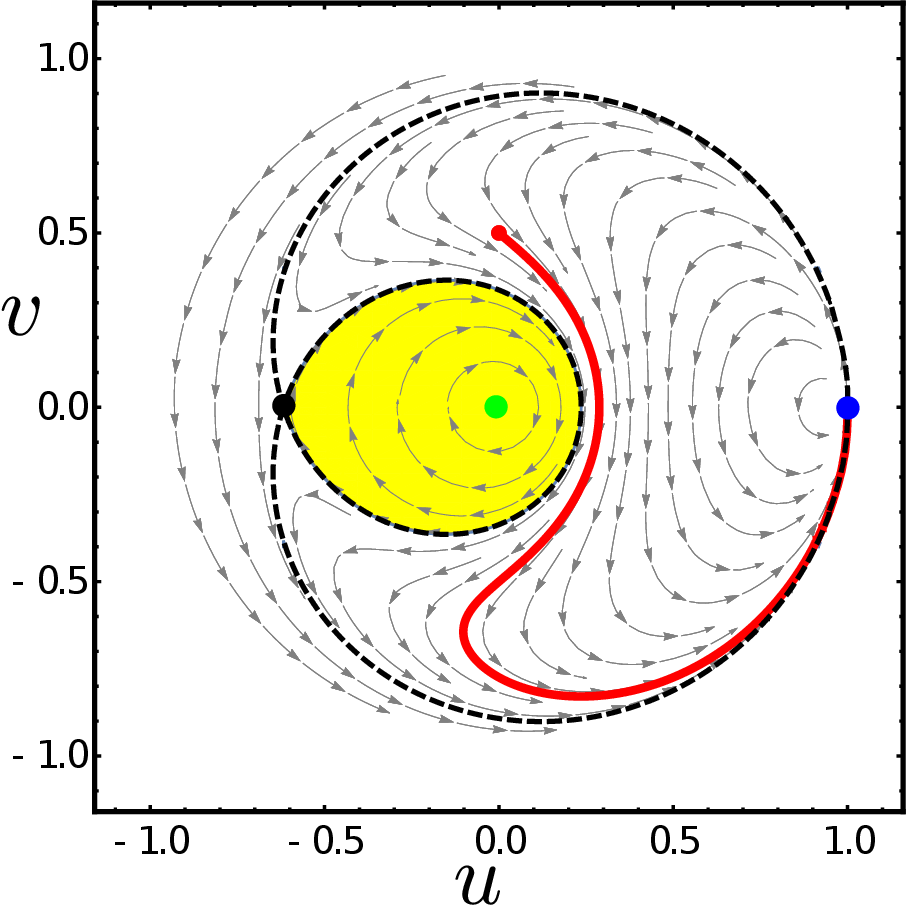}}\qquad\qquad
\subfigure[]{\includegraphics[width=0.35\textwidth,height=0.35\textwidth]{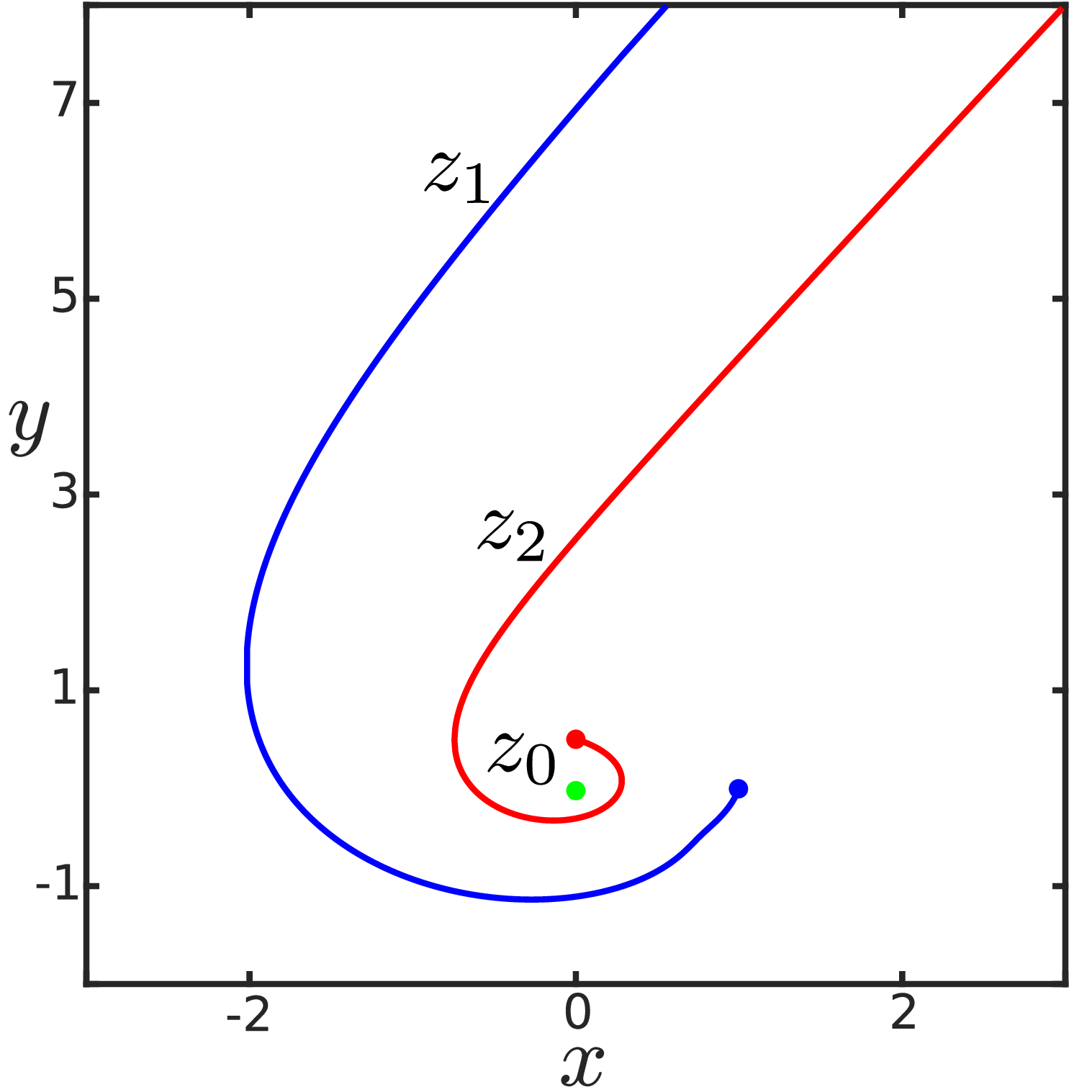}}
\caption{
\small{
Same as figure~\ref{fig_Mneq0_Keq1_g0_p_unbounded} but for a different set of circulations $\Gamma_0=\Gamma_2=-1$, $\Gamma_1=1$. 
}
}
\label{fig_Mneq0_Keq1_g0_m_unbounded}
\end{figure}

\section{Conclusions}
\label{sec:conclusions}

The dynamical aspects of a constrained three-vortex problem, in particular, different types of motion exhibited by a pair of point 
vortices $\mathcal{V}_1$ and $\mathcal{V}_2$ with circulations $\Gamma_1$ and $\Gamma_2$ in the presence of a fixed point vortex $\mathcal{V}_0$ with circulation $\Gamma_0$, where the circulations 
take arbitrary non-zero values, have been studied in detail.
Instead of directly looking at the dynamics 
based on the positions $z_1$ and $z_2$ of the free vortices, we have looked at the quotient $z_2/z_1$ to gain insights about the vortex system. The main advantage of this choice is the reduction in the number of coordinates, which simplifies the analysis. 
Depending on the value of the constant $M=\Gamma_1|z_1|^2+\Gamma_2|z_2|^2$, the problem has been classified into two cases $M=0$ and $M\neq 0$. Both these cases are illustrated in a flow
chart, see figure~\ref{fig:flowchart}, which covers all the trajectories discussed in the paper.

\begin{figure}[!th]
\begin{center}
{\includegraphics[width=\textwidth]{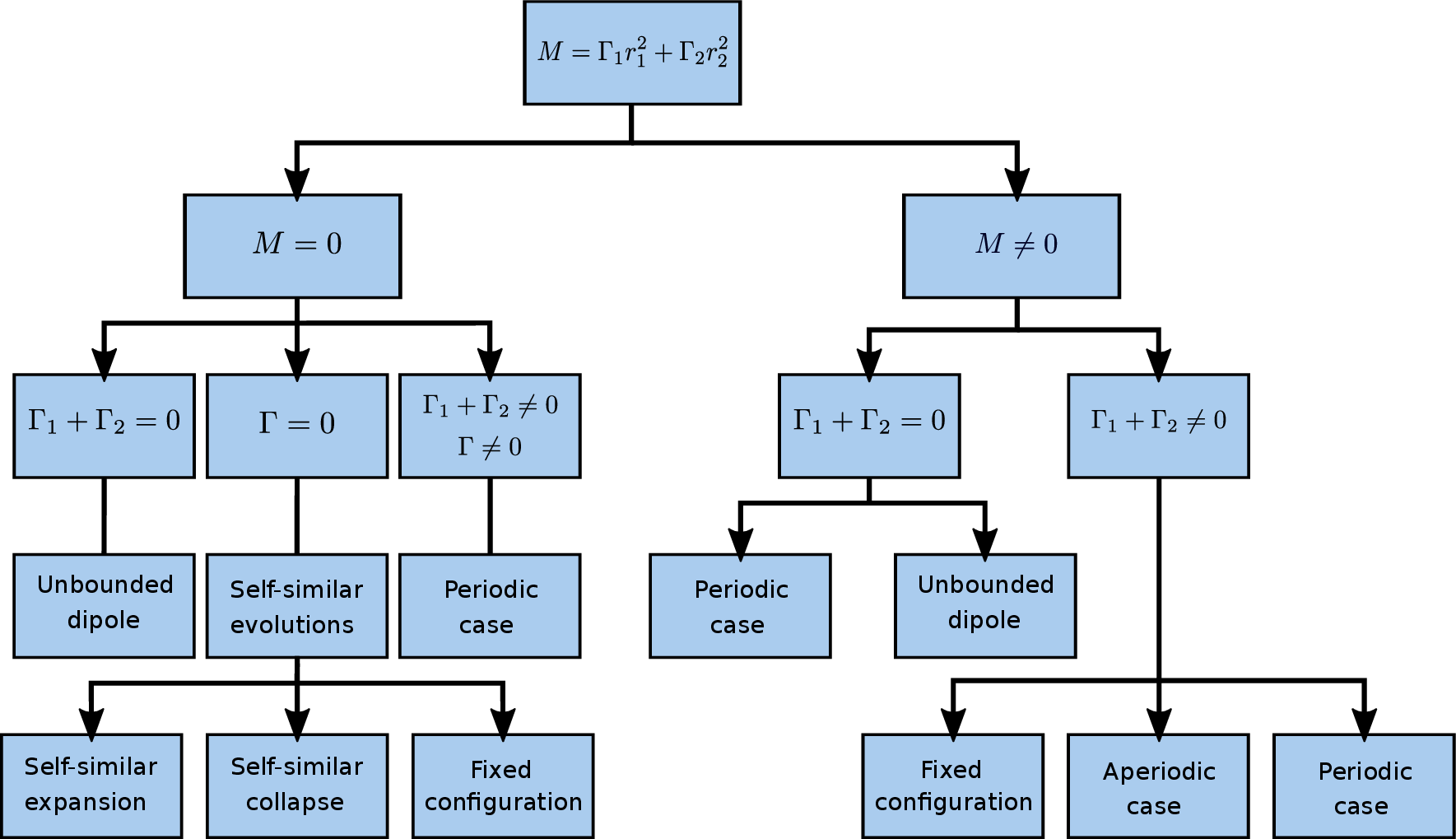}}
\end{center}
\caption{\small Flow chart summarizing all the trjectories of constrained three-vortex problem. The case $\Gamma_1 + \Gamma_2=0$ (equal counter rotating vortex pair) has been discussed in Refs.~\cite{RK2013,RE2014,KR2018}.} 
\label{fig:flowchart}
\end{figure}

For $M=0$ case, the present results show that irrespective of the initial conditions
there are three kinds of possible vortex motions depending on the value of $\Gamma=\Gamma_1\Gamma_2+\Gamma_0\Gamma_1+\Gamma_0\Gamma_2$ and $\Gamma_1+\Gamma_2$. They are (i) self-similar evolution ($\Gamma=0$), (ii) unbounded dipole motion $(\Gamma_1+\Gamma_2=0)$, and (iii) bounded periodic motion ($\Gamma_1+\Gamma_2\neq0,\Gamma\neq 0$). 
The self-similar evolutions have been further classified into 
three types: self-similar expansion, self-similar collapse, and fixed configuration based on
the distance from the free vortices to the fixed vortex that increases, decreases, and remains constant with respect to time, respectively.

For the equal counter-rotating vortex case, i.e.,~$\Gamma_1+\Gamma_2=0$, we find that the two free vortices $\mathcal{V}_1$ and $\mathcal{V}_2$ always escape to infinity. In other situations, i.e.,~$\Gamma_1+\Gamma_2\neq 0$ and $\Gamma\neq 0$, 
we notice that the free vortices are bounded in a neighbourhood of the fixed vortex with periodic inter-vortex distances.

For $M\neq 0$, we confirm that there are no self-similar expansions or collisions, which contrasts markedly with the vortex motion in the case of $M=0$. In general for $M\neq 0$, we establish that a vortex motion can have one of the form:~(i) a fixed configuration, where vortices move in circular orbits 
around the fixed vortex in a collinear fashion, (ii) a bounded motion, where the free vortices asymptotically approach to a fixed configuration, (iii) a bounded vortex motion, where inter-vortex distances are periodic, and vortices oscillate between two distinct collinear configurations, and  (iv) an unbounded vortex motion. 
Our analysis also elaborates
that for an unbounded vortex motion, it is necessary that the free vortices being the equal counter-rotating pair, and irrespective of vortex circulations and initial conditions, the distance between the free vortices, $\mathcal{V}_1$ and $\mathcal{V}_2$, remains bounded from both sides. Furthermore, for the equal counter-rotating case, the necessary and sufficient condition for a vortex entrapment is that the initial quotient, i.e.,~$z_2/z_1|_{t=0}$, remains in the interior of the curve given by $\Psi(u,v)=\Psi(u_s,0)$ that encloses the origin, where $\Psi$ is given by~\eqref{Mneq0_levelcurve}, and $u_s$ is the unique real root of the polynomial $u^2-(\Gamma_2/\Gamma_0) u-1$ within $(-1,1)$. 
The present analysis depends mainly on the tools of the dynamical system.

\section*{Acknowledgment}

The authors sincerely thank the anonymous referees
for their 
constructive comments which helped improve the paper. 
The authors thank Prof.~Shaligram Tiwari, IIT Madras, for motivating discussions at the early stages of this work.  
P.S. acknowledges financial support from IIT Madras through the Grant No. MAT/16-17/671/NFSC/PRIY.










\appendix
\bibliographystyle{rspublicnatwithsort_implicitdoi}
\bibliography{references}

\end{document}